\newtheorem{theorem}{Theorem}[section]
\newtheorem{fact}[theorem]{Fact}
\newtheorem{lemma}[theorem]{Lemma}
\newtheorem{corollary}[theorem]{Corollary}
\newtheorem{definition}[theorem]{Definition}
\newtheorem{observation}[theorem]{Observation}
\newtheorem{question}{Question}
\newcommand{\eps}{\varepsilon}
\renewcommand{\epsilon}{\varepsilon}
\newcommand{\eat}[1]{}
\newcommand{\R}{\mathbb{R}}
\newcommand{\calC}{\mathcal{C}}
\newcommand{\calX}{\mathcal{X}}
\newcommand{\OPT}{\ensuremath{\mathsf{OPT}}}
\renewcommand{\eqref}[1]{(\ref{#1})}
\newcommand{\cost}{\ensuremath{\mathrm{cost}}}
\newcommand{\ProblemName}[1]{\textsc{#1}}
\newcommand{\kzC}{\ProblemName{$(k, z)$-Clustering}\xspace}
\newcommand{\onezC}{\ProblemName{$(1, z)$-Clustering}\xspace}
\newcommand{\kMedian}{\ProblemName{$k$-Median}\xspace}
\newcommand{\oneMedian}{\ProblemName{$1$-Median}\xspace}
\newcommand{\twoMedian}{\ProblemName{$2$-Median}\xspace}
\title{On Coresets for Clustering in Small Dimensional Euclidean Spaces}
\author{
    Lingxiao Huang\footnote{  State Key Laboratory of Novel Software Technology,
    Nanjing University; Email: huanglingxiao1990@126.com}
        \and
    Ruiyuan Huang\footnote{Fudan University; Email: RuiyuanHuang00@gmail.com}
        \and
    Zengfeng Huang\footnote{Fudan University; Email: huangzf@fudan.edu.cn}
        \and
    Xuan Wu\footnote{Huawei TCS Lab; Email: wu3412790@gmail.com}   
}
\date{}
\begin{document}
\maketitle

\begin{abstract}
We consider the problem of constructing small coresets for \kMedian in Euclidean spaces. Given a large set of data points $P\subset\R^d$, a coreset is a much smaller set $S\subset\R^d$, so that the \kMedian costs of any $k$ centers w.r.t. $P$ and $S$ are close. Existing literature mainly focuses on the high-dimension case and there has been great success in obtaining dimension-independent bounds, whereas the case for small $d$ is largely unexplored. Considering many applications of Euclidean clustering algorithms are in small dimensions and the lack of systematic studies in the current literature, this paper investigates coresets for \kMedian in small dimensions. For small $d$, a natural question is whether existing near-optimal dimension-independent bounds can be significantly improved. We provide affirmative answers to this question for a range of parameters. Moreover, new lower bound results are also proved, which are the highest for small $d$. In particular, we completely settle the coreset size bound for $1$-d \kMedian (up to log factors). Interestingly, our results imply a strong separation between $1$-d \oneMedian and $1$-d \twoMedian. As far as we know, this is the first such separation between $k=1$ and $k=2$ in any dimension.
\end{abstract}

\newpage

\tableofcontents

\newpage

\section{Introduction}
\label{sec:intro}
Processing huge datasets is always computationally challenging. In this paper, we consider the coreset paradigm, which is an effective data-reduction tool to alleviate the computation burden on big data. Roughly speaking, given a large dataset, the goal is to construct a much smaller dataset, called \emph{coreset}, so that vital properties of the original dataset are preserved. 
Coresets for various problems have been extensively studied~\citep{harpeled2004on,feldman2011unified,feldman2013turning,cohenaddad2022towards,braverman2022power}.
In this paper, we investigate coreset construction for \kMedian in Euclidean spaces.

\sloppy
Coreset construction for Euclidean \kMedian has been studied for nearly two decades~\citep{harpeled2004on,feldman2011unified,huang2018epsilon,cohenaddad2021new,cohenaddad2022towards}. For this particular problem, an $\eps$-coreset is a (weighted) point set in the same Euclidean space that satisfies: given any set of $k$ centers, the \kMedian costs of the centers w.r.t.\ the original point set and the coreset are within a factor of $1+\eps$. The most important task in theoretical research here is to characterize the minimum size of $\eps$-coresets. Recently, there has been great progress in closing the gap between upper and lower bounds in high-dimensional spaces. However, researches on the coreset size in small dimensional spaces are rare. There are still large gaps between upper and lower bounds even for $1$-d \oneMedian.

Clustering in small dimensional Euclidean spaces is of both theoretical and practical importance. In practice, many applications involve clustering points in small dimensional spaces. A typical example is clustering objects in $\mathbb{R}^2$ or $\mathbb{R}^3$ based on their spatial coordinates~\citep{ SpatialClusteringExample1, SpatialClusteringExample2}. Another example is spectral clustering for graph and social network analysis~\citep{von2007tutorial, SocialNetworkExample2010, SocialNetworkExampl2014, SocialNetworkExample2017}. In spectral clustering, nodes are first embedded into a small dimensional Euclidean space using spectral methods and then Euclidean clustering algorithms are applied in the embedding space. Even the simplest $1$-d \kMedian has numerous practical applications ~\citep{ClusteringInRExample1, ClusteringInRExample2, ClusteringInRExample3}. 

On the theory side, existing techniques for coresets in high dimensions may not be sufficient to obtain optimal coresets in small dimensions. For example, much smaller size is achievable in $\mathbb{R}^1$ by using geometric methods, while the sampling methods for strong coresets in high dimension~\citep{langberg2010universal, cohenaddad2021new, Wu2022OptimalUpper} seem not viable to obtain such bounds in low dimensions. This suggests that optimal coreset construction in small dimensions may require new techniques, which provides a partial explanation of why $1$-d \oneMedian is still open after two decades of research. That being said, the coreset problem for clustering in small dimensional spaces is of great theoretical interest and practical value. Yet it is largely unexplored in the literature. This paper aims to fill the gap and study the following question:
\begin{question}
\label{que:MainQuestion}
What is the tight coreset size for Euclidean \kMedian problem in $\mathbb{R}^d$ for small $d$?
\end{question}

\subsection{Problem Definitions and Previous Results}
\paragraph{Euclidean \kMedian.} 
In the Euclidean \kMedian problem, we are given a dataset $P\subset \R^d$ ($d\geq 1$) of $n$ points and an integer $k\geq 1$; and the goal is to find a $k$-center set $C \subset \R^d$ that minimizes the objective function
\begin{equation} \label{eq:DefCost}
	\cost(P, C) := \sum_{p \in P}{d(p, C)} = \sum_{p\in P}{\min_{c\in C} d(p,c)},
\end{equation}
where $d(p,c)$ represents the Euclidean distance between $p$ and $c$.
It has many application domains including approximation algorithms, unsupervised learning, and computational geometry~\citep{lloyd1982least,tan2006cluster,arthur2007k,coates2012learning}.

\paragraph{Coresets.} 
Let $\calC$ denote the collection of all $k$-center sets, i.e., $\calC := \{ C\subset \R^d~:~ |C|=k\}$.

\begin{definition}[\bf $\eps$-Coreset for Euclidean \kMedian~\citep{harpeled2004on}]
	\label{def:coreset}
	Given a dataset $P\subset \R^d$ of $n$ points, an integer $k\geq 1$ and $\eps\in (0,1)$, an $\eps$-coreset for Euclidean \kMedian is a subset $S \subseteq P$ with weight $w : S \to \R_{\geq 0}$, such that 
	\begin{equation*}
		\forall C\in \calC,
		\qquad
		\sum_{p \in S}{w(p) \cdot d(p, C)}
		\in (1 \pm \eps) \cdot \cost(P, C).
	\end{equation*}
\end{definition}
	
\noindent
\sloppy
For Euclidean \kMedian, the best known upper bound on $\eps$-coreset size is $\Tilde{O}(\min\left\{\frac{k^{4/3}}{\eps^2}, \frac{k}{\eps^3} \right\})$~\citep{Wu2022OptimalUpper,cohenaddad2022towards} and $\Omega(\frac{k}{\eps^2})$ is the best existing lower bound~\citep{cohenaddad2022towards}. The upper bound is dimension-independent, since using dimensionality reduction techniques such as Johnson–Lindenstrauss transform, the dimension can be reduced to $\Tilde{\Theta}(\frac{1}{\eps^2})$. Thus, most previous work essentially only focus on $d=\Tilde{\Theta}(\frac{1}{\eps^2})$, whereas the case for $d < \frac{1}{\epsilon^2}$ is largely unexplored. The lower bound requires $d = \Omega(\frac{k}{\eps^2})$, as the hard instance for the lower bound is an orthonormal basis of size $\Omega(\frac{k}{\eps^2})$. For constant $k$ and large enough $d$, the upper and lower bounds match up to a polylog factor.

On the contrary, for $d \ll \Theta(\frac{1}{\eps^2})$, tight coreset sizes for \kMedian are far from well-understood, even when $k=1$. 
Specifically, for constant $d$, the current best upper bound is $\Tilde{O}(\frac{k}{\eps^3}, \frac{kd}{\eps^2})$~\citep{feldman2011unified}, and the best lower bound is $\Omega(\frac{k}{\sqrt{\eps}})$~\citep{baker2020coresets}.
%
%
Thus, there is a still large gap between the upper and lower bounds for small $d$. 
Perhaps surprisingly, this is the case even for $d=1$: \citet{HarPeled2005SmallerCF} present a coreset of size $\Tilde{O}(\frac{k}{\epsilon})$ in $\mathbb{R}$ while the best known lower bound is $\Omega(\frac{k}{\sqrt{\eps}})$.

\subsection{Our Results}
\label{sec:contribution}

We provide a complete characterization of the coreset size (up to a logarithm factor) for $d=1$ and partially answer \Cref{que:MainQuestion} for $1< d < \Theta(\frac{1}{\epsilon^2})$. Our results are summarized in Table~\ref{tab:result}.


\begin{table}
\begin{center}
    \caption{Comparison of coreset sizes for \kMedian in $\mathbb{R}^d$. We use following abbreviations: [1] for \citep{HarPeled2005SmallerCF}, [2] for \citep{feldman2011unified}, [3] for \citep{baker2020coresets}, [4] for \citep{CohenAddad2021ImprovedCA}, [5] for \citep{cohenaddad2022towards} and [6] for \citep{Wu2022OptimalUpper}. The symbol $\dagger$ represents that the results can be generalized to \kzC (\Cref{def:coreset_general}).}
\centering

\small
\begin{tabular}{|cc|c|c|c|}
\hline
\multicolumn{2}{|c|}{Paremeters $d,k$}    & Best Known Upper Bound & Best Known Lower Bound & Our Results  \\ \hline
\multicolumn{1}{|c|}{\multirow{2}{*}{$d = 1$}} & $k=1$ & $\Tilde{O}(\eps^{-1})$ [1] & $\Omega(\eps^{-1/2})$ [3] & \makecell{$\Tilde{O}(\eps^{-1/2})$ \\(Thm. ~\ref{thm:1d1k})} \\ \cline{2-5}
\multicolumn{1}{|c|}{}  & $k > 1$ & $O(k\eps^{-1})$ [1] & $\Omega(k\eps^{-1/2})$ [3]& \makecell{$\Omega(k\eps^{-1})$ \\ (Thm.~\ref{thm:Lowerbound1dkmedian})} \\ \hline
\multicolumn{1}{|c|}{\multirow{2}{*}{$1 < d < \Theta(\eps^{-2})$}} & $k=1$ & $\Tilde{O}(\eps^{-2})$ [4] & $\Omega(\eps^{-1/2})$ [3] & \makecell{$\Tilde{O}(\sqrt{d}\eps^{-1})^\dagger$ \\ (Thm.~\ref{thm:upper})}  \\ \cline{2-5} 
\multicolumn{1}{|c|}{}    & $k>1$ & $ \Tilde{O}(\min\left\{ \frac{kd}{\eps^2}, \frac{k}{\eps^3}, \frac{k^{4/3}}{\eps^2} \right\})$ [2,5, 6] & $\Omega(k\eps^{-1/2})$ [3] & \makecell{$\Omega(kd + k\eps^{-1})^\dagger$ \\ (Thm.~\ref{thm:lowerbound})} \\ \hline
%
\multicolumn{1}{|c|}{$ d = \Omega(\eps^{-2})$ }    & $k \ge 1$ & $ \Tilde{O}(\min\left\{\frac{k}{\eps^3}, \frac{k^{4/3}}{\eps^2} \right\})$ [5, 6] &$\Omega(k\eps^{-2})$ [5] & $\Biggm/$ \\ \hline
\end{tabular}
\end{center}
\label{tab:result}
\end{table}

For $d=1$, we construct coresets with size $\Tilde{O}(\frac{1}{\sqrt{\eps}})$ for \oneMedian (\Cref{thm:1d1k}) and prove that the coreset size lower bound is $\Omega(\frac{k}{\eps})$ for $k \ge 2$ (\Cref{thm:Lowerbound1dkmedian}).  
Previous work has shown coresets with size $\Tilde{O}(\frac{k}{\eps})$ exist for \kMedian~\citep{HarPeled2005SmallerCF} in $1$-d, and thus our lower bound nearly matches this upper bound. On the other hand, it was proved that the coreset size of \oneMedian in $1$-d is $\Omega(\frac{1}{\sqrt{\eps}})$~\citep{baker2020coresets}, which shows our upper bound result for \oneMedian is nearly tight. 

For $d>1$, we provide a discrepancy-based method that constructs deterministic coresets of size $\Tilde{O}(\frac{\sqrt{d}}{\eps})$ for \oneMedian (\Cref{thm:upper}). 
Our result improves over the existing $\Tilde{O}(\frac{1}{\eps^2})$ upper bound~\citep{CohenAddad2021ImprovedCA} for $1< d < \Theta(\frac{1}{\epsilon^2})$ and matches the $\Omega(\frac{1}{\eps^2})$ lower bound~\citep{cohenaddad2022towards} for $d = \Theta(\frac{1}{\epsilon^2})$. We further prove a lower bound of $\Omega(kd)$ for \kMedian in $\mathbb{R}^d$ (\Cref{thm:lowerbound}). 
Combining with our $1$-d lower bound $\Omega(\frac{k}{\eps})$, this improves over the existing $\Omega(\frac{k}{\sqrt{\eps}}+d)$ lower bound~\citep{baker2020coresets,cohenaddad2022towards}.

\subsection{Technical Overview}
\label{sec:technical}

We first discuss the 1-d \kMedian problem and show that the framework of \citep{HarPeled2005SmallerCF} is optimal with significant improvement for $k=1$.
Then we briefly summarize our approaches for $2\leq d\leq \eps^{-2}$.

\paragraph{The Bucket-Partitioning Framework for $1$-d \kMedian in \citep{HarPeled2005SmallerCF}.}
Our main results in $1$-d are based on the classic bucket-partitioning framework, developed in \citep{HarPeled2005SmallerCF}, which we briefly review now.
They greedily partition a dataset $P\subset \R$ into $O(k \eps^{-1})$ consecutive buckets $B$'s and collect the mean point $\mu(B)$ together with weight $|B|$ as their coreset $S$.
Their construction requires that the cumulative error $\delta(B) = \sum_{p\in B}|p-\mu(B)| \leq \eps\cdot \OPT/k$ holds for every bucket $B$, where $\OPT$ is the optimal \kMedian cost of $P$.
Their important geometric observation is that the induced error $|\cost(B,C) - |B|\cdot d(\mu(B),C) |$ of every bucket $B$ is at most $\delta(B)$, and even is 0 when all points in $B$ assign to the same center.
Consequently, only $O(k)$ buckets induce a non-zero error for every center set $C$ and the total induced error is at most $\eps \cdot \OPT$, which concludes that $S$ is a coreset of size $O(k \eps^{-1})$.

\paragraph{Reducing the Number of Buckets for $1$-d \oneMedian via Adaptive Cumulative Errors.}
In the case of $k=1$ where there is only one center $c\in \R$, we improve the result in~\citep{HarPeled2005SmallerCF} (\Cref{thm:1d1k}) through the following observation: $\cost(P,c)$ can be much larger than $\OPT$ when center $c$ is close to either of the endpoints of $P$, and consequently, can allow a larger induced error of coreset than $\eps \cdot \OPT$.
This observation motivates us to adaptively select cumulative errors for different buckets according to their locations.
Inspired by this motivation, our algorithm (\Cref{alg:k1d1}) first partitions dataset $P$ into blocks $B_i$ according to clustering cost, i.e., $\cost(P,c)\approx 2^i\cdot \OPT$ for all $c\in B_i$, and then further partition each block $B_i$ into buckets $B_{i,j}$ with a carefully selected cumulative error bound $\delta(B_{i,j})\leq \eps\cdot 2^i\cdot \OPT$.
Intuitively, our selection of cumulative errors is proportional to the minimum clustering cost of buckets, which results in a coreset. 
%

For the coreset size, we first observe that there are only $O(\log \eps^{-1})$ non-empty blocks $B_i$ (\Cref{lem:number_bucket}) since we can ``safely ignore'' the leftmost and the rightmost $\eps n$ points and the remaining points $p\in P$ satisfy $\cost(P,p)\leq \eps^{-1}\OPT$.
The most technical part is that we show the number $m$ of buckets in each $B_i$ is at most $O(\eps^{-1/2})$ (\Cref{lem:number_subbucket}), which results in our improved coreset size $\tilde{O}(\eps^{-1/2})$.
The basic idea is surprisingly simple: the clustering cost of a bucket is proportional to its distance to center $c$, and hence, the clustering cost of $m$ consecutive buckets is proportional to $m^2$ instead of $m$. 
According to this idea, we find that $m^2\cdot \delta(B_{i,j})\leq 2^i\cdot \OPT$ for every $B_i$, which implies a desired bound $m = O(\eps^{-1/2})$ by our selection of $\delta(B_{i,j}) \approx \eps\cdot 2^i\cdot \OPT$.

\paragraph{Hardness Result for $1$-d \twoMedian: Cumulative Error is Unavoidable.}
We take $k=2$ as an example here and show the tightness of the $O(\eps^{-1})$ bound by~\citep{harpeled2004on}.
The extension to $k > 2$ is standard via an idea of \citep{baker2020coresets}.

We construct the following worst-case instance $P\subset \R$ of size $\eps^{-1}$: We construct $m = \eps^{-1}$ consecutive buckets $B_1, B_2, \ldots, B_m$ such that the length of buckets exponentially increases while the number of points in buckets exponentially decreases.
We fix a center at the leftmost point of $P$ (assuming to be $0$ w. l. o. g.) and move the other center $c$ along the axis.
Such dataset $P$ satisfies the following: 
\begin{itemize}
\item the clustering cost is stable: for all $c$, $f_P(c) := \cost(P, \left\{0,c\right\}) \approx \eps^{-1}$ up to a constant factor;
\item the cumulative error for every bucket $B_i$ is $\delta(B_i)\approx 1$;
\item for every $B_i$, $\cost(B_i,\left\{0,c\right\})$ is a quadratic function that first decreases and then increases as $c$ moves from left to right within $B_i$, and the gap between the maximum and the minimum values is $\Omega(\delta(B_i))$.
\end{itemize}

\noindent
Suppose $S\subseteq P$ is of size $o(\eps^{-1})$.
Then there must exist a bucket $B$ such that $S\cap B = \emptyset$.
We find that function $f_S(c) := \cost(S, \left\{0,c\right\})$ is an affine linear function when $c$ is located within $B_i$ (\Cref{lem:1d2m-cost-is-affine}).
Consequently, the maximum induced error $\max_{c\in B_i} |f_P(c) - f_S(c)|$ is at least $\Omega(\delta(B_i))$ since the estimation error of an affine linear function $f_S$ to a quadratic function $f_P$ is up to certain ``cumulative curvature'' of $f_P$ (\Cref{lem:quadratic_approximation}), which is $\Omega(\delta(B_i))$ due to our construction.
Hence, $S$ is not a coreset since $f_P(c) \approx \eps^{-1}$ always holds.

We remind the readers that the above cost function $f_P$ is actually a piecewise quadratic function with $O(\eps^{-1})$ pieces instead of a quadratic one, which ensures the stability of $f_P$. 
This is the main difference from $k=1$, which leads to a gap of $\eps^{-1/2}$ on the coreset size between $k=1$ and $k=2$.
As far as we know, this is the first such separation in any dimension.

\paragraph{Our Approaches when $2\leq d\leq \eps^{-2}$.}
For \oneMedian, our upper bound result (\Cref{thm:upper}) combines a recent hierarchical decomposition coreset framework in~\citep{braverman2022power}, that reduces the instance to a hierarchical ring structure (\Cref{thm:reduction}), and the discrepancy approaches (\Cref{thm:discrepancy_upper}) developed by~\citep{karnin2019discrepancy}.
The main idea is to extend the analytic analysis of~\citep{karnin2019discrepancy} to handle multiplicative errors in a scalable way.

For \kMedian, our lower bound result (\Cref{thm:lowerbound}) extends recently developed approaches in~\citep{cohenaddad2022towards}. Their hard instance is an orthonormal basis in $\R^d$, the size of which is at most $d$, and hence cannot obtain a lower bound higher than $\Omega(d)$. We improve the results by embedding $\Theta(k)$ copies of their hard instance in $\R^d$, each of which lies in a different affine subspace. We argue that the errors from all subspaces add up. However, the error analysis from~\citep{cohenaddad2022towards} cannot be directly used; we need to overcome several technical challenges. For instance, points in the coreset are not necessary in any affine subspace, so the error in each subspace is not a corollary of their result. Moreover, errors from different subspaces may cancel each other.

\subsection{Other Related Work}
\label{sec:related}

\paragraph{Coresets for Clustering in Metric Spaces}
Recent works \citep{cohenaddad2022towards,cohen-addad2022improved,huang2023coresets} show that Euclidean \kzC admits $\epsilon$-coresets of size $\tilde{O}(k\epsilon^{-2}\cdot \min \{\epsilon^{-z},k^{\frac{z}{z+2}}\})$ and a nearly tight bound $\tilde{O}(\epsilon^{-2})$ is known when $k=1$ \citep{CohenAddad2021ImprovedCA}.
Apart from the Euclidean metric, the research community also studies coresets for clustering in general metric spaces a lot. For example, 
 \citet{feldman2011unified} construct coresets of size $\Tilde{O}(k\eps^{-2} \log n)$ for general discrete metric. \citet{baker2020coresets} show that the previous $\log n$ factor is unavoidable. There are also works on other specific metrics spaces: doubling metrics~\citep{huang2018epsilon} and graphs with shortest path metrics~\citep{baker2020coresets,BJKW21,cohenaddad2021new}, to name a few. 

\paragraph{Coresets for Variants of Clustering}
Coresets for variants of clustering problems are also of great interest. For example, \citet{braverman2022power} construct coresets of size $\Tilde{O}(k^3 \eps^{-6})$ for capacitated \kMedian, which is improved to $\Tilde{O}(k^3 \eps^{-5})$ by~\citep{huang2023coresets}. Other important variants of clustering include ordered clustering~\citep{Braverman2019CoresetsFO}, robust clustering~\citep{Robust2022}, and time-series clustering~\citep{TimeSeries2022}.


\section{Tight Coreset Sizes for $1$-d \kMedian}
\label{sec:tight}


\subsection{Near Optimal Coreset for $1$-d \oneMedian}
\label{sec:d1k1}


We have the following theorem.

\begin{theorem}[\bf Improved Coreset for one-dimensional \oneMedian]
\label{thm:1d1k}
There is a polynomial time algorithm, such that given an input data set $P\subset \mathbb{R}$, it outputs an $\eps$-coreset of $P$ for $\oneMedian$ with size $\tilde{O}(\eps^{-\frac{1}{2}})$.
\end{theorem}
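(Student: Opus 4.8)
The plan is to follow the adaptive-cumulative-error strategy sketched in the technical overview. First I would normalize the instance: compute (a constant-factor approximation of) $\OPT$, the optimal $1$-median cost of $P$, and discard the $\eps n$ leftmost and $\eps n$ rightmost points of $P$ — replacing each discarded block by its mean with the appropriate weight — so that the residual error contributed is $O(\eps\cdot\OPT)$ (this uses the standard fact that for any $c$, the cost of a block entirely on one side of $c$ equals $|B|\cdot d(\mu(B),c)$ plus an error bounded by $\delta(B)=\sum_{p\in B}|p-\mu(B)|$, and for the extreme blocks $\delta(B)\le\OPT$ trivially). After trimming, every remaining point $p$ satisfies $\cost(P,p)\le \eps^{-1}\OPT$, while $\cost(P,p)\ge\OPT$ always; hence partitioning the residual points into blocks $B_i$ with $\cost(P,c)\in[2^i\OPT,2^{i+1}\OPT)$ for $c\in B_i$ yields only $O(\log\eps^{-1})$ nonempty blocks (this is \Cref{lem:number_bucket}).

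Next, inside each block $B_i$ I would run the \citet{HarPeled2005SmallerCF} greedy bucketing but with the relaxed, location-adaptive threshold $\delta(B_{i,j})\approx \eps\cdot 2^i\cdot\OPT$: scan the points of $B_i$ from left to right, growing a bucket until its cumulative error $\delta$ would exceed this threshold, then close it and start the next. Each closed bucket $B_{i,j}$ is represented by $(\mu(B_{i,j}),|B_{i,j}|)$ in the coreset $S$. Correctness of $S$ is the easy direction: for any center $c$, each bucket $B_{i,j}$ not containing $c$ contributes induced error at most $\delta(B_{i,j})$; since buckets are consecutive intervals, $c$ lies in at most one of them, so for a fixed $i$ almost all buckets of $B_i$ lie entirely on one side of $c$. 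One then argues that the total induced error over block $B_i$ is dominated by the buckets of $B_i$ nearest to $c$ — but actually the clean bound is that the \emph{one} bucket containing $c$ (if any) is the only one with nonzero error beyond what telescopes, and the sum $\sum_{j}\delta(B_{i,j})$ is too crude; instead I would use that for $c$ outside $B_i$ the entire block contributes error at most $\delta(B_i):=\sum_{p\in B_i}|p-\mu(B_i)|\le \cost(P,c)$-scale... more carefully: for $c\notin B_i$ all points of $B_i$ are on one side of $c$, and a short computation shows the block's induced error is at most $\delta(\text{merge of all buckets in }B_i)$, which is bounded using $\cost(P,c)\le 2^{i+1}\OPT$ by $O(\eps)\cdot\cost(P,c)$; summing over the $O(\log\eps^{-1})$ blocks and the single block containing $c$ gives total error $O(\eps\log\eps^{-1})\cdot\cost(P,c)$, i.e. an $\eps'$-coreset after rescaling $\eps$ by a log factor, absorbed into the $\tilde O$.

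The substantive step — and the main obstacle — is the size bound $m=O(\eps^{-1/2})$ on the number of buckets within each block $B_i$ (\Cref{lem:number_subbucket}). The point is the quadratic-growth phenomenon: if $B_i=\bigcup_{j\le m}B_{i,j}$ and the optimal center $c^\ast$ for $P$ lies (say) to the left of $B_i$, then for the $t$-th bucket counted from the left end of $B_i$, its distance to $c^\ast$ is at least the combined length of the first $t-1$ buckets, and one shows $\cost(B_{i,t},c^\ast)$ is proportional to that distance times $|B_{i,t}|$, while $\delta(B_{i,t})$ is comparable to $|B_{i,t}|$ times the \emph{internal} length of the bucket. Chaining these, the lengths of successive buckets must grow geometrically-or-worse, so $\sum_{t}\cost(B_{i,t},c^\ast)\gtrsim m^2\cdot\min_t\delta(B_{i,t})\gtrsim m^2\cdot\eps\cdot 2^i\OPT$ (using that each bucket except possibly the last was grown until $\delta$ hit the threshold $\eps\cdot 2^i\OPT$). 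But $\cost(B_i,c^\ast)\le\cost(P,c^\ast)=\OPT\le 2^i\OPT$, forcing $m^2\eps=O(1)$, i.e. $m=O(\eps^{-1/2})$. I would need to handle the bookkeeping for the bucket straddling $c^\ast$, for blocks split by $c^\ast$ on both sides, and to verify the "$\cost$ of a bucket is proportional to its distance to $c^\ast$" claim (which holds because after trimming every bucket is far from $c^\ast$ relative to its own diameter — this is where the $2^i$ scaling is used). Combining: $O(\log\eps^{-1})$ blocks times $O(\eps^{-1/2})$ buckets each gives $|S|=\tilde O(\eps^{-1/2})$, and every step is polynomial time.
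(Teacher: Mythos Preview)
Your overall algorithm matches the paper's, but both the correctness and the size arguments have gaps.

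\textbf{Correctness.} You are overcomplicating this and in the process introducing an error. The key fact (Lemma~\ref{lem:CumulativeErrorControlsCoresetError}, part 2) is that a bucket $B_{i,j}$ with $c\notin(p_l,p_r)$ contributes \emph{exactly zero} induced error, not merely error $\le\delta(B_{i,j})$. Since the buckets are disjoint intervals, for any $c$ at most one bucket contains $c$, and if that bucket lies in block $B_i$ then its error is $\le\delta(B_{i,j})\le\eps\cdot 2^i\cdot\OPT\le\eps\cdot f_P(c)$ (the last inequality by convexity of $f_P$ and the block definition). That is the whole argument for $c\in(p_L,c^\star]$; there is no need to sum errors over a block or to reason about $\delta(\text{merge of }B_i)$. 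Your merged-block bound is actually wrong as stated: you write ``bounded using $\cost(P,c)\le 2^{i+1}\OPT$'' but that inequality characterizes $c\in B_i$, whereas you are in the case $c\notin B_i$.

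\textbf{Size bound (the main gap).} Your route to $m=O(\eps^{-1/2})$ via $\cost(B_i,c^\star)\le\OPT$ and a ``quadratic growth'' lower bound does not work. Concretely, your claimed inequality $\sum_t\cost(B_{i,t},c^\star)\gtrsim m^2\min_t\delta(B_{i,t})$ fails when the buckets have geometrically varying sizes: if (after pairing) $N_t\propto 2^{-t}$ and $L_t\propto 2^t$ with $N_tL_t\approx\tau:=\eps\cdot2^i\cdot\OPT$ for all $t$, then $\sum_t N_t\sum_{s<t}L_s\approx m\tau$, not $m^2\tau$, and your bound degrades to $m=O(\eps^{-1})$. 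The paper's proof of Lemma~\ref{lem:number_subbucket} avoids $c^\star$ entirely. It uses the block definition to get $f_P(p_{l_i})+f_P(p_{r_i})\le 4\cdot2^i\OPT$, observes that $\sum_{p\in B_i}(|p-p_{l_i}|+|p-p_{r_i}|)=N(B_i)L(B_i)$, and then applies Cauchy--Schwarz:
\[
N(B_i)L(B_i)\;\ge\;\Bigl(\sum_j N(C_j)\Bigr)\Bigl(\sum_j L(C_j)\Bigr)\;\ge\;\Bigl(\sum_j\sqrt{N(C_j)L(C_j)}\Bigr)^2\;\ge\;\Bigl(\sum_j\sqrt{\delta(C_j)}\Bigr)^2\;>\;\Bigl\lfloor\tfrac{m-2}{2}\Bigr\rfloor^2\tau,
\]
where $C_j=B_{i,2j}\cup B_{i,2j+1}$. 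This Cauchy--Schwarz step is precisely what converts the product $N\cdot L$ into an $m^2$ and is the missing idea in your sketch.
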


\paragraph{Useful Notations and Facts.}
Throughout this section, we use $P=\{p_1,\cdots,p_n\}\subset \R$ with $p_1<p_2<\cdots<p_n$. Let $c^\star = p_{\lfloor \frac{n}{2}\rfloor}$, we have the following simple observations for $\cost(P,c)$.

\begin{observation}\label{obs:convexity}
$\cost(P,c)$ is a convex piecewise affine linear function of $c$ and $\OPT=\cost(P,c^\star)$ is the optimal \oneMedian cost on $P$.
\end{observation}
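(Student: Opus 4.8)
The plan is to read off both claims directly from the summand-by-summand structure of $\cost(P,\cdot)$. Write $\cost(P,c)=\sum_{i=1}^{n}|p_i-c|$. Each summand $c\mapsto |p_i-c|$ is convex and piecewise affine linear, with its only breakpoint at $c=p_i$ and slope $-1$ to the left of $p_i$ and $+1$ to the right of it. Both convexity and ``being piecewise affine linear'' are preserved under finite sums, so $\cost(P,\cdot)$ is convex and piecewise affine linear; its breakpoints are contained in $\{p_1,\dots,p_n\}$, and on the open interval $(p_j,p_{j+1})$ — using the conventions $p_0=-\infty$ and $p_{n+1}=+\infty$ — its slope equals $j-(n-j)=2j-n$. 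Since $2j-n$ is strictly increasing in $j$, the slopes are non-decreasing from left to right, which re-confirms convexity. This settles the first half of the statement.

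For the optimality of $c^\star=p_{\lfloor n/2\rfloor}$, I would invoke the first-order optimality condition for convex functions: $c$ minimizes $\cost(P,\cdot)$ iff $0$ lies in its subdifferential at $c$, i.e.\ iff the left-hand slope at $c$ is at most $0$ and the right-hand slope is at least $0$. At $c=c^\star$ the left-hand slope equals $2(\lfloor n/2\rfloor-1)-n$ and the right-hand slope equals $2\lfloor n/2\rfloor-n$, so the condition reduces to the elementary inequalities on the number of data points lying strictly to either side of $c^\star$; equivalently, this is the classical fact that a median of $P$ minimizes the sum of absolute deviations, and it yields $\OPT=\cost(P,c^\star)$.

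There is no genuine obstacle here — the statement is elementary and the ``proof'' is a couple of lines. The only point requiring any care is the floor-function (parity) bookkeeping in the subgradient inequalities: $c^\star$ should be read as a median of $P$, with the appropriate convention on which of the two middle points to take when $n$ is even, after which both inequalities are immediate. In the remainder of the section we will in any case only use the qualitative consequences recorded here, namely that $\cost(P,\cdot)$ is convex and piecewise affine linear and is minimized at a data point of $P$.
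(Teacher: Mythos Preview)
Your argument is correct and is exactly the standard one; the paper states this as an observation without proof, so there is nothing to compare against. One small correction to your parity remark: the problematic case is actually $n$ odd, not $n$ even. For $n$ even your subgradient inequalities $2(\lfloor n/2\rfloor-1)-n\le 0\le 2\lfloor n/2\rfloor-n$ hold with the right-hand side equal to $0$, but for $n$ odd the right-hand side equals $-1$ and the inequality fails. In other words, for odd $n$ the index $\lfloor n/2\rfloor$ is off by one from the median index $\lceil n/2\rceil$; this is a harmless typo in the paper's choice of $c^\star$, and as you say, the downstream arguments only use that $\cost(P,\cdot)$ is convex piecewise affine and attains its minimum at some data point.
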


\noindent
The following notions, proposed by~\citep{harpeled2004on}, are useful for our coreset construction.

\begin{definition}[\bf Bucket]
A bucket $B$ is a continuous subset $\{p_l,p_{l+1}\dots,p_r \}$ of $P$ for some $1\leq l\leq r\leq n$. 
\end{definition}

\begin{definition}[\bf Mean and cumulative error 
\citep{HarPeled2005SmallerCF}]\label{Def:MeanAndCumulativeError}
Given a bucket $B = \{p_l,\dots,p_r \}$ for some $1\leq l\leq r\leq n$, denote $N(B) := r-l+1$ to be the number of points within $B$ and $L(B) := p_r - p_l$ to be the length of $B$. 
We define the \emph{mean} of $B$ to be  
$
\mu(B) :=\frac{1}{N(B)} \sum_{p\in B} p, 
$
and define the \emph{cumulative error} of $B$ to be
$
\delta(B) :=\sum_{p\in B} |p-\mu(B)|.
$
\end{definition}

\noindent
Note that $\mu(B)\in [p_l,p_r]$ always holds, which implies the following fact.

\begin{fact}\label{fac:del<=LN}
 $\delta(B)\leq N(B) \cdot L(B)$.
\end{fact}

\noindent
The following lemma shows that for each bucket $B$, the coreset error on $B$ is no more than $\delta(B)$.

\begin{lemma}[\bf Cumulative error controls coreset error~\citep{HarPeled2005SmallerCF}] \label{lem:CumulativeErrorControlsCoresetError}
Let $B = \left\{p_l,\ldots, p_r\right\}\subseteq P$ for $1\leq l\leq r\leq n$ be a bucket and $c\in \R$ be a center.
We have
\begin{enumerate}
\item if $c \in (p_l, p_r)$, $|\cost(B,c)-N(B) d(\mu(B),c) |\le \delta(B)$; 
\item if $c\notin (p_l, p_r)$, $|\cost(B,c)-N(B) d(\mu(B),c) | = 0$.
\end{enumerate}
\end{lemma}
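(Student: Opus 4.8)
The plan is to prove both parts of \Cref{lem:CumulativeErrorControlsCoresetError} by a direct estimate that compares, point by point, the contribution of each $p\in B$ to $\cost(B,c)=\sum_{p\in B}|p-c|$ with the corresponding term $|\mu(B)-c|$ in the ``coreset'' quantity $N(B)\,d(\mu(B),c)=\sum_{p\in B}|\mu(B)-c|$. The key algebraic input is the triangle inequality in the form $\bigl||p-c|-|\mu(B)-c|\bigr|\le |p-\mu(B)|$, which holds for every individual point. Summing this over $p\in B$ and using the triangle inequality once more on the outside gives
\[
\Bigl|\cost(B,c)-N(B)\,d(\mu(B),c)\Bigr|
=\Bigl|\sum_{p\in B}\bigl(|p-c|-|\mu(B)-c|\bigr)\Bigr|
\le\sum_{p\in B}\bigl||p-c|-|\mu(B)-c|\bigr|
\le\sum_{p\in B}|p-\mu(B)|=\delta(B),
\]
which already yields part~(1) (and in fact holds for all $c\in\R$, so it covers part~(2) with a weaker bound as well).

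For part~(2), where $c\notin(p_l,p_r)$, I would exploit the fact that on this side of the bucket the map $x\mapsto|x-c|$ is \emph{affine} on the interval $[p_l,p_r]$ that contains every $p\in B$ and also contains $\mu(B)$ (by the remark preceding \Cref{fac:del<=LN}). Concretely, if $c\le p_l$ then $|p-c|=p-c$ for all $p\in B$ and $|\mu(B)-c|=\mu(B)-c$, so
\[
\cost(B,c)=\sum_{p\in B}(p-c)=N(B)\mu(B)-N(B)c=N(B)\bigl(\mu(B)-c\bigr)=N(B)\,d(\mu(B),c),
\]
using $\sum_{p\in B}p=N(B)\mu(B)$ from the definition of the mean; the case $c\ge p_r$ is symmetric with $|x-c|=c-x$. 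Hence the difference is exactly $0$, not merely bounded by $\delta(B)$.

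There is essentially no serious obstacle here — the lemma is elementary and both parts follow from the triangle inequality together with the linearity of $x\mapsto|x-c|$ away from $c$. The only points that require a little care are: (i) noting explicitly that $\mu(B)\in[p_l,p_r]$ so that the ``affine'' argument in part~(2) applies to the mean as well as to the data points; and (ii) being slightly careful about the boundary cases $c=p_l$ and $c=p_r$, which can be assigned to either case since there $d(\cdot,c)$ is still affine on $[p_l,p_r]$ (so part~(2)'s conclusion in fact also holds at the endpoints, and the hypothesis $c\in(p_l,p_r)$ in part~(1) can be taken as the open interval without loss). I would present part~(1) first with the displayed chain of inequalities above, then dispatch part~(2) with the two symmetric one-line computations.
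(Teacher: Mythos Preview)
Your proof is correct and is the standard elementary argument. Note, however, that the paper does not give its own proof of this lemma: it is stated as a known fact from \citep{HarPeled2005SmallerCF} and used without proof, so there is nothing in the paper to compare your argument against beyond confirming that your proof establishes exactly the statement as used.
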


\paragraph{Algorithm for \Cref{thm:1d1k}.}
Our algorithm is summarized in \Cref{alg:k1d1}.
We improve the framework in \citep{HarPeled2005SmallerCF}, which partitions $P$ into multiple buckets so that the cumulative errors in different buckets are the same and collects their means as a coreset.
Our main idea is to carefully select an adaptive cumulative error for different buckets. 
In Lines 2-3, we take the leftmost $\eps n$ points and the rightmost $\eps n$ points, and add their weighted means to our coreset $S$. 
In Lines 4 (and 7), we divide the remaining points into disjoint blocks $B_i$ ($B'_i$) such that for every $p\in B_i$, $\cost(P,p)\approx 2^i \cdot \OPT$, and then greedily divide each $B_i$ into disjoint buckets $B_{i,j}$ with a cumulative error roughly $\eps \cdot 2^i \cdot \OPT$ in Line 5. 
We remind the readers that the cumulative error in \citep{HarPeled2005SmallerCF} is always $\eps \cdot \OPT$.
%



\renewcommand{\algorithmicrequire}{\textbf{Input:}}
\renewcommand{\algorithmicensure}{\textbf{Output:}}

\begin{algorithm}
    \caption{$\mathrm{Coreset1d}(P,\eps)$}
   \label{alg:k1d1}
    \begin{algorithmic}[1]
        \REQUIRE Dataset $P=\{p_1,\cdots,p_n\}\subset \R$ with $p_1<\cdots<p_n$, and $\eps\in (0,1)$.
        \ENSURE An $\eps$-coreset $S$ of $P$ for $1$-d \oneMedian
        \STATE Set $S\leftarrow \emptyset$. 
        \STATE Set $L\leftarrow \lfloor\eps n\rfloor$ and $R\leftarrow n-\lfloor\eps n\rfloor$. Set $B_-\leftarrow \left\{p_1,\ldots, p_L\right\}$ and $B_+\leftarrow \left\{p_{R+1},\ldots, p_n\right\}$.
        \STATE \ Add $\mu(B_-)$ with weight $N(B_-)$ and $\mu(B_+)$ with weight $N(B_+)$ into $S$.
        \STATE Divide $\{p_{L+1}, \dots, p_{\lfloor \frac{n}{2} \rfloor} \}$ into disjoint blocks $\{ B_i \}_{i \ge 0}$ where $B_i := \big\{ p\in \{p_{L+1}, \dots p_{\lfloor \frac{n}{2} \rfloor} \} :  2^i\cdot \OPT\leq \cost(P,p) < 2^{i+1}\cdot \OPT \big\}$.
        \STATE For each non-empty block $B_i$ ($i\geq 0$), consider the points within $B_i$ from left to right and group them into buckets $\{ B_{i,j}\}_{j\geq 0}$ in a greedy way: each bucket $B_{i,j}$ is a maximal set with $\delta(B_{i,j})\leq \eps \cdot 2^i\cdot \OPT$.
        \STATE For every bucket $B_{i,j}$, add $\mu(B_{i,j})$ with weight $N(B_{i,j})$ into $S$.
        \STATE Symmetrically divide $\{ p_{\lfloor \frac{n}{2} \rfloor + 1},\dots, p_R\}$ into disjoint buckets $\{ B'_{i,j}\}_{i,j\geq 0}$ and add $\mu(B'_{i,j})$ with weight $N(B'_{i,j})$ into $S$ for every bucket $B'_{i,j}$.
        \STATE Return $S$.
    \end{algorithmic}
\end{algorithm}


%
\noindent
We define function $f_P: \R\rightarrow \R_{\geq 0}$ such that $f_P(c)=\cost(P,c)$ for every $c\in \R$ and define $f_S: \R\rightarrow \R_{\geq 0}$ such that $f_S(c) = \cost(S,c)$ for every $c\in \R$. By Observation~\ref{obs:convexity}, $f_P(c)$ is decreasing on $(-\infty, c^*]$ and increasing on $[c^*,\infty)$. As a result, each $B_i (B'_i)$ consists of consecutive points in $P$. 
The following lemma shows that the number of blocks 
 $B_i$($B'_i$) is $O(\log \frac{1}{\eps})$.

\begin{lemma}[\bf Number of blocks]
\label{lem:number_bucket}
There are at most $O(\log(\frac{1}{\eps}))$ non-empty blocks $B_i$ or $B'_i$.
\end{lemma}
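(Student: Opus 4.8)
The plan is to show that only blocks $B_i$ with $0 \le i \le O(\log \tfrac1\eps)$ can be nonempty, by controlling the clustering cost $\cost(P,p)$ for the "surviving" points $p \in \{p_{L+1},\dots,p_{\lfloor n/2\rfloor}\}$. The block index $i$ of a point $p$ is determined by $\cost(P,p)\in [2^i\OPT, 2^{i+1}\OPT)$, so it suffices to prove that for every such $p$ we have $\cost(P,p) = O(\eps^{-1}\cdot \OPT)$; since also $\cost(P,p)\ge \OPT$ by optimality of $c^\star$, this forces $i = O(\log \tfrac1\eps)$, and by the symmetric argument on $\{p_{\lfloor n/2\rfloor+1},\dots,p_R\}$ the same holds for the $B'_i$.

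The key estimate is a comparison between $\cost(P,p)$ and $\OPT = \cost(P,c^\star)$ for a point $p=p_\ell$ with $L+1 \le \ell \le \lfloor n/2\rfloor$. Moving the center from $c^\star=p_{\lfloor n/2\rfloor}$ to $p_\ell$ changes the cost by at most $n\cdot d(p_\ell, c^\star)$ crudely, but that is too weak; instead I would use the convex piecewise-affine structure from \Cref{obs:convexity}. Specifically, since $f_P$ is decreasing on $(-\infty, c^\star]$, we have $\cost(P,p_\ell) = \OPT + \int_{p_\ell}^{c^\star} (-f_P'(t))\,dt$, and at the point $t \in [p_\ell, c^\star]$ the left-derivative magnitude $-f_P'(t)$ equals $\#\{q\in P: q \le t\} - \#\{q \in P : q > t\}$, which is at most $n - 2\ell + 2 \le n - 2L$ for $t$ to the left of $p_\ell$... the cleaner route is to bound $\cost(P, p_{L+1})$ directly, since it is the largest among all surviving $p$ on the left half (as $f_P$ is decreasing there). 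Here $p_{L+1}$ has at least $L = \lfloor \eps n\rfloor$ points strictly to its left, so each of those $L$ points contributes distance $\ge d(p_{L+1}, c^\star)/$ (something) — more precisely, I will lower-bound $\OPT$ by $L \cdot d(p_{L+1}, p_1)$-type quantities and upper-bound $\cost(P, p_{L+1}) - \OPT$ by $n \cdot d(p_{L+1}, c^\star)$, then relate the two geometric quantities. The point is that the $L$ leftmost points, which sit to the left of $p_{L+1}$, already "pay" in $\OPT$ roughly what the whole set would pay to move the center left by the same amount, up to the factor $n/L = O(\eps^{-1})$.

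Concretely: for any $t$ with $p_1 \le t \le c^\star$, the slope satisfies $|f_P'(t)| \le n$, giving $\cost(P,p_{L+1}) \le \OPT + n\, d(p_{L+1}, c^\star)$. On the other hand, for $t \in [p_{L+1}, c^\star]$ at least $L+1$ points lie at $t$ or to its left and at most $n-L-1$ lie strictly to its right, so $-f_P'(t) \ge (L+1)-(n-L-1) $; this is only useful when $L$ is close to $n/2$, so instead I bound $\OPT \ge \cost(P,p_{L+1}) - \big(\text{cost saved moving from }p_{L+1}\text{ to }c^\star\big)$ is circular — the right statement is $\OPT = \cost(P,c^\star) \ge \sum_{j\le L} d(p_j, c^\star) \ge L\cdot d(p_{L+1}, c^\star)$ since each of $p_1,\dots,p_L$ is at distance $\ge d(p_{L+1},c^\star)$ from $c^\star$ (they lie left of $p_{L+1}$ which lies left of $c^\star$). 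Combining, $\cost(P,p_{L+1}) \le \OPT + n\,d(p_{L+1},c^\star) \le \OPT + \frac{n}{L}\OPT \le (1 + \eps^{-1} + o(1))\OPT = O(\eps^{-1}\OPT)$, assuming $\eps n \ge 1$ so $L\ge 1$; the edge case $\eps n < 1$ is trivial since then $B_-\cup B_+$ already covers all but $O(1)$ points and there is nothing to bound. Since $\cost(P,p) \le \cost(P,p_{L+1})$ for all $p\in\{p_{L+1},\dots,p_{\lfloor n/2\rfloor}\}$ (monotonicity of $f_P$ on the decreasing branch) and $\cost(P,p)\ge\OPT$, every nonempty $B_i$ has $2^i \le O(\eps^{-1})$, i.e. $i = O(\log\eps^{-1})$, so there are $O(\log\eps^{-1})$ of them; the argument for $B'_i$ is identical by symmetry.

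The main obstacle I anticipate is getting the monotonicity/slope bookkeeping exactly right at the bucket endpoints — in particular confirming that $p_{L+1}$ really does maximize $\cost(P,\cdot)$ over the left surviving range and handling the degenerate regime $\eps n < 1$ (and ties among the $p_j$, which the excerpt rules out by assuming $p_1 < \cdots < p_n$) cleanly. These are routine once the inequality $\OPT \ge L\cdot d(p_{L+1}, c^\star)$ and the global Lipschitz bound $|f_P'|\le n$ are in place.
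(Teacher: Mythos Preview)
Your proposal is correct and follows essentially the same approach as the paper's proof: both bound the cost at the leftmost surviving point by $\OPT + n\cdot d(\text{boundary},c^\star)$ (you via the Lipschitz bound $|f_P'|\le n$, the paper via the triangle inequality, which is the same thing), and both control $d(\text{boundary},c^\star)$ by noting that the $L$ leftmost points each contribute at least this distance to $\OPT$, giving $d(\text{boundary},c^\star)\le \OPT/L \le \OPT/(\eps n)$. The only cosmetic difference is that the paper bounds $f_P(p_L)$ rather than $f_P(p_{L+1})$, which is slightly stronger but inconsequential.
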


\begin{proof}
We prove \Cref{alg:k1d1} divides $\{p_{L+1}, \dots, p_{\lfloor \frac{n}{2} \rfloor} \}$ into at most $O(\log(\frac{1}{\eps}))$ non-empty blocks $B_i$. Argument for $\{ p_{\lfloor \frac{n}{2} \rfloor + 1},\dots, p_R\}$ is entirely symmetric.

If $B_i$ is non-empty for some $i \ge 0$, we must have $f_P(p) \ge 2^i \cdot \OPT$ for $p \in B_i$. We also have $p > p_L$ since $p \in B_i \subset \{p_{L+1}, \dots, p_{\lfloor \frac{n}{2} \rfloor} \}$.  Since $f_P$ is convex, we have $ 2^i \cdot \OPT \le f_P(p) \le f_P(p_L)$. If we show that $f_P(p_L)\leq (1+\eps^{-1})\cdot \OPT = (1+\eps^{-1})\cdot f_P(c^\star)$ then we have $2^i \le  (1+\eps^{-1})$ thus $i \le O(\log(\frac{1}{\eps}))$.

To prove $f_P(p_L)\leq (1+\eps^{-1})\cdot  f_P(c^\star)$, we use triangle inequality to obtain that 
\begin{eqnarray*}
f_P(p_L)&=&\sum_{i=1}^n |p_i-p_L|\\
        &\leq& \sum_{i=1}^n (|p_i-c^\star|+|c^\star-p_L|)\\
        &=&f_P(c^\star)+n\cdot |c^\star-p_L|.
\end{eqnarray*}

Moreover, we note that by the choice of $p_L$, $|c^\star-p_L|\leq \frac{1}{L}\cdot \sum_{i=1}^L |c^\star-p_i|\leq \frac{f_P(c^\star)}{\eps n}$. Thus we have,
$$
f_P(p_L)\leq f_P(c^\star)+n\cdot \frac{f_P(c^\star)}{\eps n}=(1+\eps^{-1})\cdot f_P(c^\star).
$$
\end{proof}

\noindent
We next give a key lemma that we use to obtain an improved coreset size.

\begin{lemma}[\bf Number of buckets]
\label{lem:number_subbucket}
Each non-empty block $B_i$ or $B'_i$ is divided into $O(\eps^{-1/2})$ buckets.
\end{lemma}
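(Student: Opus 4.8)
The plan is to fix a non-empty block $B_i$ and show that the greedy bucketing in Line 5 produces at most $O(\eps^{-1/2})$ buckets. Write the buckets inside $B_i$ in left-to-right order as $B_{i,1}, B_{i,2}, \dots, B_{i,m}$. The two facts I would leverage are: (i) by the maximality of each greedy bucket (except possibly the last), merging $B_{i,j}$ with the next point would violate the cumulative-error threshold, so each non-final bucket ``uses up'' roughly $\eps \cdot 2^i \cdot \OPT$ of cumulative error; and (ii) by definition of $B_i$, every point $p \in B_i$ has $\cost(P,p) \asymp 2^i \cdot \OPT$, so in particular $\sum_{q \in P} |q - p|$ is $\Theta(2^i \cdot \OPT)$ for all $p$ in the block, giving strong control on how far $p$ can be from $c^\star$. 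The heart of the argument is the ``quadratic vs.\ linear growth'' observation from the technical overview: the clustering cost contributed by a bucket grows like its distance to $c^\star$, hence the total contribution of $m$ consecutive buckets scales like $m^2$ rather than $m$.

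First I would set up the geometry. Assume WLOG $B_i$ lies to the left of $c^\star$ (the right case is symmetric), and let the buckets be ordered so that $B_{i,1}$ is leftmost. For a bucket $B = \{p_l, \dots, p_r\}$, note $\delta(B) \geq \frac{1}{N(B)} \sum_{p \in B} \sum_{q \in B, q > p}(q - p) \geq$ something, but more usefully I want a lower bound on $\delta(B_{i,j})$ in terms of its length and cardinality. Actually the cleaner route: by the greedy maximality, for each $j < m$, the bucket $B_{i,j}$ together with the first point of $B_{i,j+1}$ has cumulative error exceeding $\eps \cdot 2^i \cdot \OPT$. Using the crude bound $\delta(B') \le N(B') L(B')$ from \Cref{fac:del<=LN} applied to this enlarged bucket, and also a matching lower bound, I would extract that each non-final bucket satisfies $N(B_{i,j}) \cdot L(B_{i,j}) \gtrsim \eps \cdot 2^i \cdot \OPT$ (up to adjusting for the single extra point, which is harmless since a bucket with one point has zero cumulative error and can be absorbed). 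Summing, $\sum_{j} N(B_{i,j}) L(B_{i,j}) \gtrsim m \cdot \eps \cdot 2^i \cdot \OPT$.

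Next I would upper-bound the same sum using the ``quadratic'' structure. Let $d_j$ be the distance from (say the rightmost point of) bucket $B_{i,j}$ to $c^\star$; since the buckets are consecutive and ordered, $d_j \ge d_{j+1}$ and moreover $d_j \ge \sum_{j' > j} L(B_{i,j'})$, i.e.\ the distance to the center dominates the total length of all buckets between $B_{i,j}$ and $c^\star$. The key cost estimate is that $\cost(B_{i,j}, p) \ge N(B_{i,j}) \cdot d(p, B_{i,j})$ and, for $p$ chosen to be the point $c^\star$ (or any point of $B_i$ beyond the block), this is roughly $N(B_{i,j}) \cdot d_j$. Since $\cost(B_{i,j}, c^\star) \le \cost(P, c^\star) = \OPT$, we get $N(B_{i,j}) \cdot d_j \lesssim \OPT$ for each $j$; but I need to be more careful because I want a bound that beats the trivial $m$. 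The right move is: $\OPT \ge \cost(B_i, c^\star) = \sum_j \cost(B_{i,j}, c^\star) \gtrsim \sum_j N(B_{i,j}) \cdot d_j \gtrsim \sum_j N(B_{i,j}) \cdot \big( \sum_{j' \le j} L(B_{i,j'}) \big)$ (distances accumulate from the center outward), and combining with the per-bucket lower bound $N(B_{i,j}) L(B_{i,j}) \gtrsim \eps \cdot 2^i \cdot \OPT$, one gets an inequality of the shape $\OPT \gtrsim (\eps \cdot 2^i \cdot \OPT) \cdot \big(\sum_j \sum_{j' \le j} L(B_{i,j'})/L(B_{i,j})\big)$, and then a convexity / rearrangement argument (the worst case being all lengths equal, giving $\sum_j j \asymp m^2$) forces $m^2 \lesssim \eps^{-1}$, i.e.\ $m = O(\eps^{-1/2})$.

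\textbf{Main obstacle.} The delicate point is making the ``distances accumulate'' step and the per-bucket lower bound on $N(B_{i,j}) L(B_{i,j})$ fully rigorous \emph{simultaneously}: the greedy bound gives a lower bound on $\delta$ of each bucket, which relates to $N \cdot L$ only loosely via \Cref{fac:del<=LN} in one direction and needs a genuine lower bound $\delta(B) \gtrsim$ (some function of $N, L$) in the other — a single heavy point near one end of a bucket could make $\delta$ small relative to $N L$. I expect to handle this by noting that because the $2^i$-cost level set $B_i$ has bounded \emph{relative} variation in $\cost(P, \cdot)$, the point density cannot be too skewed within a bucket, or alternatively by charging cumulative error to pairs of points and arguing that a bucket with large $N \cdot L$ but small $\delta$ would have to be extremely unbalanced, which the greedy process would have split earlier. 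Squaring this circle — converting the greedy stopping condition into a clean lower bound compatible with the $m^2$-growth upper bound — is where the real work lies; the rest is bookkeeping with \Cref{fac:del<=LN}, \Cref{lem:CumulativeErrorControlsCoresetError}, and the definition of the blocks $B_i$.
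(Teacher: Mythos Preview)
Your plan has a genuine gap in the ``quadratic vs.\ linear'' step, and the obstacle you flag at the end is a red herring. The direction you need from \Cref{fac:del<=LN} is $NL\ge\delta$, so once greedy maximality gives $\delta(B_{i,j}\cup B_{i,j+1})>\eps\cdot 2^i\cdot\OPT$ you immediately get $N_jL_j\gtrsim\eps\cdot 2^i\cdot\OPT$ for the paired buckets; there is no need for a reverse inequality $\delta\gtrsim NL$, and no skewness issue to worry about. The real problem is the rearrangement claim. From $N_jL_j\gtrsim\eps\,2^i\,\OPT$ and $\OPT\ge\sum_j N_j d_j$ with $d_j\ge\sum_{j'>j}L_{j'}$, substituting $N_j\ge(\eps\,2^i\,\OPT)/L_j$ gives
\[
1\;\gtrsim\;\eps\,2^i\cdot\sum_j\frac{\sum_{j'>j}L_{j'}}{L_j},
\]
and you assert the ratio sum is $\gtrsim m^2$ with equality when all $L_j$ are equal. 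This is false: each summand is $\ge 0$ (or $\ge 1$ if you include $j'=j$), and taking the $L_j$ to decay geometrically makes every term $O(1)$, so the sum is only $\Theta(m)$. Your chain therefore yields $m=O(\eps^{-1})$, which is exactly the Har-Peled--Mazumdar bound you are trying to improve.

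The paper's proof bounds a different quantity and uses Cauchy--Schwarz to manufacture the square. Rather than $\sum_j N_j d_j\le\OPT$, it uses the block definition to get
\[
N(B_i)\,L(B_i)\;=\;\Bigl(\sum_j N_j\Bigr)\Bigl(\sum_j L_j\Bigr)\;\le\;f_P(p_{l_i})+f_P(p_{r_i})\;<\;4\cdot 2^i\cdot\OPT,
\]
via $f_P(p_{l_i})+f_P(p_{r_i})\ge\sum_{p\in B_i}(|p-p_{l_i}|+|p-p_{r_i}|)=N(B_i)L(B_i)$. Then Cauchy--Schwarz gives $\bigl(\sum_j N_j\bigr)\bigl(\sum_j L_j\bigr)\ge\bigl(\sum_j\sqrt{N_jL_j}\bigr)^2\ge\bigl(\sum_j\sqrt{\delta(C_j)}\bigr)^2$, and now the per-pair lower bound $\delta(C_j)>\eps\,2^i\,\OPT$ yields $(m/2)^2\cdot\eps\,2^i\,\OPT<4\cdot 2^i\,\OPT$, i.e.\ $m=O(\eps^{-1/2})$. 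The Cauchy--Schwarz step is precisely what turns ``$m$ buckets each with $NL\gtrsim\tau$'' into an $m^2$ lower bound; your distance-accumulation route cannot do this because it is hostage to the shape of the $L_j$.
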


\begin{proof}
We  prove that each block $B_i \subset \{p_{L+1}, \dots, p_{\lfloor \frac{n}{2} \rfloor} \}$  is divided into at most $O(\eps^{-1/2})$ buckets $B_{i,j}$. Argument for $B'_i \subset \{ p_{\lfloor \frac{n}{2} \rfloor + 1},\dots, p_R\}$ is entirely symmetric.

Suppose $B_i = \{p_{l_i},\dots,p_{r_i}\}$ and we divide $B_i$ into $t$ buckets $\{B_{i,j} \}_{j=0}^{t-1}$. Since each $B_{i,j}$ is the maximal bucket with $\delta(B_{i,j})\leq \eps \cdot 2^i\cdot \OPT$, we have $\delta(B_{i,2j} \cup B_{i,2j+1}) > \eps \cdot 2^i\cdot \OPT$ for $2j+1 < t$. Denote  $B_{i,2j} \cup B_{i,2j+1}$ by $C_j$ for $j \in \{0,\dots,\lfloor \frac{t-2}{2} \rfloor\}$, we have:

\begin{eqnarray}
4 \cdot 2^i \cdot \OPT &\geq& f_P(p_{l_i})+f_P(p_{r_i})\nonumber\\
&\geq&\sum_{p \in B_i} (|p-p_{l_i}|+|p-p_{r_i}|)\nonumber\\
&=&N(B_i)(p_{r_i}-p_{l_i})\nonumber\\
&\geq& (\sum_{j=1}^{\lfloor \frac{t-2}{2} \rfloor} N(C_j))\cdot (\sum_{j=1}^{\lfloor \frac{t-2}{2} \rfloor} L(C_j))\nonumber\\
&\geq& \big(\sum_{j=1}^{\lfloor \frac{t-2}{2} \rfloor} N(C_j)^{\frac{1}{2}}  L(C_j)^{\frac{1}{2}}\big)^2 \label{eqn:cauchy}\\
&\geq & \big(\sum_{j=1}^{\lfloor \frac{t-2}{2} \rfloor} \delta(C_j)^{\frac{1}{2}}\big)^2 \quad \textnormal{by \Cref{fac:del<=LN}}\nonumber\\
&>& (\lfloor \frac{t-2}{2} \rfloor)^2 \cdot \eps \cdot 2^i \cdot \OPT.\nonumber
\end{eqnarray}
Here \eqref{eqn:cauchy} is from Cauchy-Schwarz inequality.
So we have  $(\lfloor \frac{t-2}{2} \rfloor)^2 \cdot \eps \cdot 2^i \cdot \OPT < 4 \cdot 2^i \cdot \OPT$, which implies $t\leq O(\eps^{-\frac{1}{2}})$.
\end{proof}

\noindent
Now we are ready to prove \Cref{thm:1d1k}.

\begin{proof}[of \Cref{thm:1d1k}]
We first verify that the set $S$ is an $O(\eps)$-coreset.
Our goal is to prove that for every $c\in \R$, $f_S(c)\in (1\pm \eps)\cdot f_P(c)$.
We prove this for any $c\in (-\infty,c^\star]$. The argument for $c \in (c^\star,+\infty)$  is entirely symmetric. 

For any $c \in (-\infty,c^\star] $, we have  
$$f_P(c) - f_S(c) = \sum_{B} \cost(B,c) - N(B) \cdot d(\mu(B),c)$$ where $B$ takes over all buckets. We then separately analyze  the $c  \in (-\infty,p_L]$ case and the $c \in (p_L,c^*]$ case.
When $c  \in (-\infty,p_L]$, we note  that $f_P(p_L)=f_S(p_L)$(Lemma~\ref{lem:CumulativeErrorControlsCoresetError}). By elementary calculus, both $\frac{df_P(c)}{dc}$ and $\frac{df_S(c)}{dc}$ are within  $[-n,-(1-2\eps) n]$; hence differ by at most a multiplicative factor of $1+\eps$. Thus, $|f_P(c)-f_S(c)|\leq O(\eps)\cdot f_P(c)$. 

When $c \in (p_L,c^*]$, there is at most one bucket $B = \{ p_{l},\dots,p_r\}$ such that $c \in (p_l, p_r)$ since these buckets are disjoint. If such a bucket $B$ does not exist, we have $f_P(c) = f_S(c)$. Now suppose such a bucket $B$ exists. Since $c > p_L$, we have $B \subset B_i$ for some block $B_i$. Thus, by Lemma~\ref{lem:CumulativeErrorControlsCoresetError} and the construction of buckets:
\begin{eqnarray*}
|f_P(c) - f_S(c)|\leq  \delta(B)
\leq \eps \cdot 2^i \cdot \OPT. \\
\end{eqnarray*}
We have $f_P(p_l) \ge 2^i \cdot \OPT $ and $f_P(p_r) \ge 2^i \cdot \OPT$. Since $f_P$ is convex (thus decreasing on $(-\infty, c^*]$) and $c \in (p_l, p_r)$, we also have $f_P(c) \ge 2^i \cdot \OPT$. This implies $|f_P(c) - f_S(c)| \leq \eps \cdot f_P(c)$.

It remains to show that the size of $S$, which is the total number of buckets, is $\Tilde{O}(\eps^{-1/2})$. However, by \Cref{lem:number_bucket}, there are $O(\log(1/\eps))$ blocks, and by \Cref{lem:number_subbucket}, each block contains $O(\eps^{-1/2})$ buckets. Thus, there are at most $\Tilde{O}(\eps^{-1/2})$ buckets.
\end{proof}

\subsection{Tight Lower Bound on Coreset Size for $1$-d \kMedian when $k\geq 2$}
\label{sec:d1k2}

In this subsection, we prove that the size lower bound of $\epsilon$-coreset for \kMedian problem in $\mathbb{R}^1$ is $\Omega(\frac{k}{\epsilon})$. This lower bound matches the upper bound in~\citep{HarPeled2005SmallerCF}.

\begin{theorem}[\bf{Coreset lower bound for $1$-d \kMedian when $k\geq 2$}]\label{thm:Lowerbound1dkmedian}
For a given integer $k\geq 2$ and $\epsilon\in (0,1)$, there exists a dataset $P\subset \R$ such that any $\eps$-coreset $S$ must have size $|S|\geq \Omega(k \eps^{-1})$.
\end{theorem}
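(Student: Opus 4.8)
The plan is to construct an explicit hard instance realizing the ideas sketched in the technical overview, first for $k=2$ and then lift to general $k$. For $k=2$, I would place one ``sink'' cluster of $\Theta(n)$ coincident points at the origin, which will absorb one of the two centers, and build a sequence of $m = \Theta(\eps^{-1})$ consecutive buckets $B_1,\dots,B_m$ to the right of it whose lengths $L(B_i)$ grow geometrically (say $L(B_i) \approx 2^i$, up to overall scaling) while the point counts $N(B_i)$ decay geometrically ($N(B_i) \approx 2^{-i} \cdot (\text{total})$), so that each bucket contributes a cumulative error $\delta(B_i) \approx N(B_i) L(B_i) \approx 1$ (using \Cref{fac:del<=LN} and a matching lower bound on $\delta(B_i)$ obtained by spreading the points of $B_i$ roughly uniformly across its length). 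The key design goal, as the overview states, is that $f_P(c) := \cost(P,\{0,c\}) \approx \eps^{-1}$ for \emph{all} placements of the second center $c$ in the region spanned by the buckets, so that an additive induced error of $\Omega(1)$ on some bucket translates to a multiplicative error of $\Omega(\eps)$ — i.e. $S$ fails to be an $\eps$-coreset. I would verify this stability of $f_P$ by noting that $f_P$ restricted to each bucket is a shifted quadratic in $c$ (the within-bucket cost is quadratic when $c$ is inside and affine when outside), and the geometric scaling makes the total roughly balanced: the $\Theta(\eps^{-1})$ buckets each contribute $\Theta(1)$.

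Next I would argue that any candidate coreset $S \subseteq P$ with $|S| = o(\eps^{-1})$ must entirely miss some bucket $B_i$ (pigeonhole, since there are $\Theta(\eps^{-1})$ buckets), possibly after first noting we can assume $S$ also places weight near the origin so the ``left center'' is irrelevant — more carefully, since the adversary gets to choose $c$ but the left center is pinned at $0$ by the instance geometry, I only need the cost identity $f_P(c) = \cost(P,\{0,c\})$ vs.\ $f_S(c) = \cost(S,\{0,c\})$ and can reduce to a one-center-moves analysis. Within the window of the missed bucket $B_i$, $f_S(c)$ is an affine linear function of $c$ (this is the analog of \Cref{lem:1d2m-cost-is-affine}: each point of $S$ lies either entirely to the left or entirely to the right of the bucket window, so its contribution to $f_S$ is affine in $c$ there), whereas $f_P(c)$ over that same window is affine \emph{plus} the strictly convex quadratic bump coming from the points of $B_i$ itself. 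By the quadratic-approximation lemma (\Cref{lem:quadratic_approximation}), the sup-distance between an affine function and a convex quadratic over an interval is $\Omega$ of the ``cumulative curvature,'' which here is $\Omega(\delta(B_i)) = \Omega(1)$. Combined with $f_P(c) \approx \eps^{-1}$ throughout, this gives $\max_c |f_P(c) - f_S(c)| / f_P(c) = \Omega(\eps)$, contradicting the coreset guarantee; hence $|S| = \Omega(\eps^{-1})$.

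To extend to general $k \geq 2$, I would use the standard gadget-replication idea attributed to \citep{baker2020coresets}: take $\lfloor k/2 \rfloor$ disjoint, well-separated translated copies of the $k=2$ instance along the real line, separated by gaps large enough that in any near-optimal (within $1+\eps$) solution each copy must ``own'' two of its centers, and that costs across copies do not interfere. Then a coreset for the union must restrict to an $\eps$-coreset (up to constants) on each copy — because the adversary can fix the center configuration to be near-optimal on all copies but one and then vary two centers inside the remaining copy — so it needs $\Omega(\eps^{-1})$ points per copy, totalling $\Omega(k\eps^{-1})$. The main obstacle I anticipate is the stability claim $f_P(c) \approx \eps^{-1}$ uniformly in $c$: getting the geometric parameters right so that the quadratic bumps from all $\Theta(\eps^{-1})$ buckets sum to $\Theta(\eps^{-1})$ while no single placement of $c$ (especially near the leftmost or rightmost bucket) blows the cost up or down by more than a constant factor requires care, and I would also need a clean lower bound $\delta(B_i) = \Omega(N(B_i)L(B_i))$ rather than just the upper bound in \Cref{fac:del<=LN}, which forces the within-bucket point placement to be genuinely spread out. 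A secondary subtlety is ensuring the left center really does stay pinned at $0$ (equivalently, that the two-center cost is minimized and stays stable with one center at the sink), which is why the sink must hold a constant fraction of all points and the buckets must lie on one side of it.
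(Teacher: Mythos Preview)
Your overall architecture matches the paper's, but there is a real gap in the step where you claim $f_S(c)$ is affine on the window of the missed bucket $B_i$. Your justification (``each point of $S$ lies either entirely to the left or entirely to the right of the bucket window, so its contribution to $f_S$ is affine in $c$ there'') is incorrect: for a point $p>0$ in $S$, its contribution to $\cost(S,\{0,c\})$ is $\min(|p|,|p-c|)$, which has a kink at $c=2p$, not at $c=p$. This is exactly what \Cref{lem:1d2m-cost-is-affine} says --- the transition points of $f_S$ are $S\cup\{2p:p\in S\}$. So even if $S\cap B_i=\emptyset$, the function $f_S$ can fail to be affine on $B_i$ whenever some $p\in S$ satisfies $2p\in B_i$, i.e., whenever there is a coreset point in the ``half-interval'' $[l_i/2,r_i/2]$. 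With your growth rate~$2$ for bucket lengths, $[l_i/2,r_i/2]$ overlaps $B_{i-1}$, so a single empty bucket does not suffice.

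The paper fixes this by choosing the length growth rate to be~$4$ (lengths $4^{i-1}$, densities $16^{-(i-1)}$), which forces $r_i\ge 4l_i$, and then restricting the moving center to the subinterval $[l_i+\tfrac{1}{3}(r_i-l_i),\,r_i]$. On that subinterval both $c$ and $c/2$ lie in $[l_i,r_i]$, so if $S\cap(l_i,r_i)=\emptyset$ then neither $c$ nor $c/2$ crosses a point of $S$ and \Cref{lem:1d2m-cost-is-affine} genuinely gives affinity of $f_S$. The quadratic curvature of $f_P$ on that same subinterval is then computed directly (the paper does this via a continuous density, which also sidesteps your worry about lower-bounding $\delta(B_i)$), and \Cref{lem:quadratic_approximation} finishes. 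Your sink cluster at the origin is harmless but unnecessary: the lower-bound prover chooses both centers, so one can simply be placed at the leftmost point. Your extension to general $k$ by well-separated copies is the same as the paper's.
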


\noindent
For ease of exposition, we only prove the lower bound for \twoMedian here. 
The generalization to \kMedian is straightforward and can be found in \cref{sec:general1dkmedianlower}.

We first prove a technical lemma, which shows that a quadratic function cannot be approximated well by an affine linear function in a long enough interval. We note that similar technical lemmas appear in coresets lower bound of other related clustering problems~\citep{Braverman2019CoresetsFO}~\citep{baker2020coresets}. The lemma in~\citep{Braverman2019CoresetsFO} shows that the function $
\sqrt{x}$ cannot be approximated well by an affine linear function while our lemma is about approximating a quadratic function. The lemma in~\citep{baker2020coresets} shows that a quadratic function cannot be approximated well by an affine linear function on a bounded interval, a situation slightly different from ours.

\begin{lemma}[\bf{Quadratic function cannot be approximated well by affine linear functions}]\label{lem:quadratic_approximation}
Let $[a,b]$ be an interval, $f(c)$ be a quadratic function on interval $[a,b]$, $\alpha >0 $ and $ \beta > 0$ be two constants, and $0 \le \epsilon < \frac{1}{32} \frac{\beta}{\alpha}$ be a non-negative real number.
If $|f(c)| \le \alpha$ and $(b-a)^2 f''(c)\ge \beta$  for all $c \in [a,b]$, then there is no affine linear function $g$ such that $|g(c) - f(c)| \le \epsilon f(c)$ for all $c \in [a,b]$.
\end{lemma}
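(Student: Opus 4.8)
The plan is to derive a contradiction by integrating the second-derivative lower bound twice. Suppose for contradiction that such an affine linear $g$ exists, so $g(c) = f(c) + r(c)$ where $|r(c)| \le \epsilon f(c)$ and hence $|r(c)| \le \epsilon \alpha$ on all of $[a,b]$. Since $g$ is affine, $g'' \equiv 0$, so $r''(c) = -f''(c)$ for all $c \in [a,b]$. The idea is that a function with second derivative bounded below in absolute value by $\beta/(b-a)^2$ cannot itself stay within the narrow band $[-\epsilon\alpha, \epsilon\alpha]$ over an interval of length $b-a$, provided $\epsilon\alpha$ is small compared to $\beta$.

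To make this precise, I would write $r(c) = g(c) - f(c)$ and note $r$ is a quadratic (or affine) function; since $r'' = -f''$ has constant sign (say $f'' > 0$, so $r$ is concave; the other case is symmetric) with $|r''(c)| \ge \beta/(b-a)^2$, standard estimates for concave functions on an interval apply. Concretely, for a concave function $r$ on $[a,b]$, the value at the midpoint $m = (a+b)/2$ satisfies $r(m) \ge \tfrac{1}{2}(r(a) + r(b)) + \tfrac{1}{8}|r''|\,(b-a)^2 \ge -\epsilon\alpha + \tfrac{1}{8}\beta$ (using $r(a), r(b) \ge -\epsilon\alpha$ and $|r''|(b-a)^2 \ge \beta$). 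Since $\epsilon < \tfrac{1}{32}\tfrac{\beta}{\alpha}$, we get $r(m) \ge -\tfrac{1}{32}\beta + \tfrac{1}{8}\beta > 0$, in fact $r(m) \ge \tfrac{3}{32}\beta$. But $r(m) \le \epsilon \alpha < \tfrac{1}{32}\beta$, a contradiction. (If $f'' < 0$ everywhere, replace $r$ by $-r$, which is then concave, and the same computation bounds $-r(m)$ from below, contradicting $-r(m) \le \epsilon\alpha$.)

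The one point requiring a little care is the quadratic interpolation identity I invoked: for a quadratic function $r$ with leading coefficient giving $r'' = $ const, $r(m) - \tfrac12(r(a)+r(b)) = -\tfrac18 r''(b-a)^2$ exactly, so the ``$\ge$'' becomes an equality and the bound is clean. I should state this as a one-line computation (expand $r(c) = r(m) + r'(m)(c-m) + \tfrac12 r''(c-m)^2$ at $c = a$ and $c = b$ and average). Also I need that $f''$ has constant sign on $[a,b]$: since $f$ is quadratic, $f''$ is a constant, and the hypothesis $(b-a)^2 f''(c) \ge \beta > 0$ forces that constant to be positive, so actually only the $f'' > 0$ case arises and no sign-casing is needed — I will note this simplification. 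The main (and only real) obstacle is bookkeeping the constants so that the final inequality $\tfrac18\beta - \epsilon\alpha > \epsilon\alpha$ holds, which is exactly what $\epsilon < \tfrac{1}{32}\tfrac{\beta}{\alpha}$ guarantees with room to spare; everything else is elementary calculus on a single quadratic.
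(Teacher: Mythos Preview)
Your proof is correct. The overall strategy matches the paper's---set up the error $r = g - f$ (or $f - g$), observe $\|r\|_\infty \le \epsilon\alpha$ while $|r''|(b-a)^2 \ge \beta$, and show these are incompatible---but the mechanics differ. The paper bounds $|r'(\tfrac{a+b}{2})|$ via the mean value theorem on each half of the interval, then integrates $r'$ from the midpoint to $b$ to get $r(b) - r(\tfrac{a+b}{2}) \ge -2\epsilon\alpha + \tfrac{1}{8}\beta$, contradicting $r(b) - r(\tfrac{a+b}{2}) \le 2\epsilon\alpha$. Your argument instead exploits the exact midpoint identity for quadratics, $r(m) - \tfrac{1}{2}(r(a)+r(b)) = -\tfrac{1}{8}r''(b-a)^2$, which is shorter and in fact yields the sharper threshold $\epsilon \ge \tfrac{1}{16}\tfrac{\beta}{\alpha}$ rather than $\tfrac{1}{32}\tfrac{\beta}{\alpha}$. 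The paper's route would generalize more readily to non-quadratic $f$ with a lower bound on $f''$, but since the lemma is stated only for quadratics, your direct computation is the cleaner choice. Your observation that the hypothesis $(b-a)^2 f''(c) \ge \beta > 0$ forces the constant $f''$ to be positive (so no sign-casing is needed) is correct and worth stating up front rather than as an afterthought.
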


\begin{proof}
Assume there is an affine linear function $g(c)$ that satisfies $|g(c) - f(c)| \le \epsilon f(c)$ for all $c\in[a,b]$. We denote the error function by $r(c) = f(c) - g(c)$, which has two properties. First, its $l_{\infty}$ norm $\| r \|_{\infty} = \sup_{c \in [a,b]} |r(c)| \le \epsilon \alpha$. Second, it is quadratic and satisfies $r''(c) = f''(c)$, thus $(b-a)^2 r''(c) \ge \beta$  for all $c \in [a,b]$. 

Define $L = b-a$. By the mean value theorem, there is a point $c_{1/4} \in [a,\frac{a+b}{2}]$ such that $|r'(c_{1/4})| = |\frac{1}{L/2} [r(\frac{a+b}{2}) - r(a)]| \le \frac{4}{L} \| r \|_{\infty}$. Similarly there is a point $c_{3/4} \in [\frac{a+b}{2},b]$ such that $|r'(c_{3/4})| \le \frac{4}{L} \| r \|_{\infty}$. Since $r$ is a quadratic function, its derivative is monotonic and $|r'(\frac{a+b}{2})| \le \max(|r'(c_{1/4})| ,|r'(c_{3/4})|) \le \frac{4}{L} \| r \|_{\infty}$. Thus we have
\begin{align*}
    r(b) - r(\frac{a+b}{2}) &= \int_{\frac{a+b}{2}}^{b} r'(c) \mathrm{dc} \\
        &=\int_{\frac{a+b}{2}}^b r'(\frac{a+b}{2}) +  \int_{\frac{a+b}{2}}^c  r''(t) \mathrm{dt} \mathrm{dc}\\
        &= \frac{L}{2} r'(\frac{a+b}{2}) + \int_{\frac{a+b}{2}}^b \int_{\frac{a+b}{2}}^c  r''(t) \mathrm{dt} \mathrm{dc} \\
        &\ge - \frac{L}{2}  \frac{4}{L} \| r \|_{\infty} + \frac{1}{8} (b-a)^2 r''(c) \\
        &\ge -2 \epsilon \alpha + \frac{1}{8} \beta. 
\end{align*}

On the other hand $r(b) - r(\frac{a+b}{2}) \le  2 \| r \|_{\infty} \le 2 \epsilon \alpha$. We have $2\epsilon \alpha  \ge -2 \epsilon \alpha + \frac{1}{8} \beta$. Thus $\epsilon \ge \frac{1}{32} \frac{\beta}{\alpha}$.
\end{proof}

\noindent
For any dataset $P$, with a slight abuse of notations, we denote the cost function for \twoMedian with one query point fixed in $0$ by $f_P(c) = \cost(P,\{ 0, c\})$. The following lemma shows that $f_P(c)$ is a piecewise affine linear function and all the transition points are  $P\cup \{2p\mid p\in P\}$.

\begin{lemma}[\bf{The function $f_P(c)$ is piecewise affine linear}]\label{lem:1d2m-cost-is-affine}
Let $P \subset \R$ be a weighted dataset. The function $f_P(c)$ is a piecewise affine linear function. All the transition points between two affine pieces are $P\cup \{2p\mid p\in P\}$.
\end{lemma}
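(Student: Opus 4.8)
# Proof Proposal for Lemma~\ref{lem:1d2m-cost-is-affine}

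The plan is to express $f_P(c)$ explicitly as a sum over the data points of the per-point contribution, and to show that each such contribution is a piecewise affine linear function whose only transition points are $p$ and $2p$. Since a finite sum of piecewise affine linear functions is piecewise affine linear, with transition set contained in the union of the individual transition sets, the lemma follows immediately.

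First I would fix a single weighted point $p \in P$ with weight $w(p)$ and analyze its contribution $w(p)\cdot d(p, \{0,c\}) = w(p)\cdot \min\{|p|, |p-c|\}$ as a function of $c$. The quantity $|p-c|$ is itself piecewise affine in $c$ with a single transition at $c = p$. The comparison $|p-c| \le |p|$ versus $|p-c| > |p|$ determines whether the point is served by $c$ or by $0$; the boundary of this comparison is exactly the set of $c$ with $|p-c| = |p|$, which (for $p \neq 0$) consists of the two points $c = 0$ and $c = 2p$, and (for $p = 0$) is trivial. So on each of the intervals cut out by $\{0, p, 2p\}$, the contribution of $p$ is either the constant $w(p)|p|$ or the affine function $w(p)|p-c|$, hence affine linear; and the only places it can fail to be affine are at $c \in \{0, p, 2p\}$. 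I would handle the sign cases ($p > 0$ versus $p < 0$, and $c$ relative to $0$) carefully but briefly, noting that $c = 0$ is always a transition point of $|p - c|$-type expressions only in a degenerate sense — in fact at $c=0$ both branches agree since $d(p,\{0,0\}) = |p| = \min\{|p|,|p-0|\}$, so $0$ need not be listed as a genuine transition point, which is consistent with the claimed transition set $P \cup \{2p : p \in P\}$ (and one may also observe $0 \in \{2p : p\in P\}$ iff $0 \in P$).

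Then I would assemble the global statement: $f_P(c) = \sum_{p \in P} w(p)\min\{|p|,|p-c|\}$ is a finite sum of functions each of which is affine linear away from $\{p, 2p\}$, so $f_P$ is piecewise affine linear and every transition point lies in $\bigcup_{p\in P}\{p, 2p\} = P \cup \{2p : p \in P\}$.

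The only mildly delicate point — the "main obstacle," though it is minor — is the bookkeeping of cases when verifying that the per-point contribution has transitions \emph{only} at $p$ and $2p$ and not spuriously at $0$: one must check that at $c = 0$ the two pieces $w(p)|p|$ and $w(p)|p-c|$ join smoothly (they do, since they take equal value and, on the side where $0$ is interior to a piece, equal slope). Equivalently, one argues that the "served by $0$" region is the closed interval between $0$ and $2p$ (or its complement, depending on sign), and that the function $w(p)\min\{|p|, |p-c|\}$ restricted to a neighborhood of $0$ is just $w(p)|p-c|$ when $0$ lies strictly between $p$ and $2p$ — wait, that needs care: when $p>0$, the points served by $0$ are those with $|p-c|\ge|p|$, i.e. $c\le 0$ or $c\ge 2p$, so near $c=0^-$ the contribution is the constant $w(p)p$ while near $c=0^+$ it is $w(p)|p-c| = w(p)(p-c)$, and these agree in value but the slope jumps from $0$ to $-w(p)$. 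So $0$ \emph{is} a transition point of the individual term when $p > 0$. This is fine for the lemma as long as $0$ is accounted for: indeed, if $0 \notin P$ then some term has a transition at $0$, but the lemma as stated asserts the transition set \emph{is} $P \cup \{2p : p \in P\}$, so I should double-check the intended reading — most likely the claim is that the transition set is \emph{contained in} $P \cup \{2p : p\in P\} \cup \{0\}$, or the convention is that $0 \in P$; I would state the containment $\{\text{transition points}\} \subseteq P \cup \{2p : p \in P\} \cup \{0\}$ and remark that $0$ is harmless (or that in the intended application $0 \in P$). Modulo this normalization, the proof is a routine case analysis.
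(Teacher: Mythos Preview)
Your proposal is correct and takes essentially the same approach as the paper: both argue by analyzing the single-point contribution $w(p)\min\{|p|,|p-c|\}$ and locating where its form changes, the paper via a derivative computation showing $f_P'(c)$ jumps only when $c$ or $c/2$ crosses a data point, you via directly writing out the piecewise formula. Your observation about $c=0$ is sharper than the paper's own treatment --- the paper restricts to $c\ge 0$ (and symmetrically $c\le 0$) without discussing the junction, but in its application $0$ is the leftmost point of $P$, so $0\in P$ and the stated transition set is correct there; your proposed normalization (include $0$, or note $0\in P$) is exactly the right fix.
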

\begin{proof}
We denote the weight of point $p$ by $w(p)$ and denote the midpoint between any point $c$ and $0$ by $\text{mid} = \frac{c}{2}$. Now assume $c \ge 0$ and both $c$ and $\frac{c}{2}$ are not in the dataset $P$.  The clustering cost of a single point $p$ is
\[
\cost(p,\{0,c\}) = \begin{cases}
   w(p)p \quad &\text{for } p \in [0,\text{mid}], \\
   w(p)(c-p) \quad &\text{for } p \in [\text{mid},c],\\
   w(p)(p-c) \quad &\text{for } p \in [c,+\infty).
\end{cases}
\]

If $c$ changes to $c + \mathrm{dc}$ we have 
\begin{align*}
&\cost(p,\{0,c+\mathrm{dc}\}) - \cost (p,\{0,c\}) \\
= &\begin{cases}
    0 \quad &\text{for } p \in [0,\text{mid}], \\
    w(p)\mathrm{dc} \quad &\text{for } p \in [\text{mid}+\frac{1}{2} \mathrm{dc},c], \\
    -w(p)\mathrm{dc} \quad &\text{for } p \in [c+ \mathrm{dc},+\infty).
\end{cases}
\end{align*}

Assume $|\mathrm{dc}|$ is small enough, then there are no data points in $[\text{mid},\text{mid}+\frac{1}{2} \mathrm{dc}]$ and $[c,c+\mathrm{dc}]$. We have 
\begin{align*}
& \quad f_P(c+\mathrm{dc}) - f_P(c) \\
= & \quad \sum_{p \in P \cap [\text{mid},c]} w(p)  \mathrm{dc} - \sum_{p \in P \cap [c,+\infty)} w(p) \mathrm{dc},
\end{align*}
thus 
\[f_P'(c) = \sum_{p \in P \cap [\text{mid},c]} w(p) - \sum_{p \in P \cap [c,+\infty)} w(p) . \]

Consider $c$ moves in $\R$ from left to right, the derivative $f_P'(c)$ changes only when $c$ or $\text{mid} = \frac{c}{2}$ pass a data point in $P$. The same conclusion also holds for $c<0$ by a symmetric argument. This is exactly what we want.
\end{proof}

\begin{proof}[\twoMedian case of Theorem~\ref{thm:Lowerbound1dkmedian}]
We first construct the dataset $P$. The dataset $P$ is a union of $\frac{1}{\epsilon}$ disjoint intervals $\{ I_i\}_{i=1}^{\frac{1}{\epsilon}}$. Denote the left endpoint and right endpoint of $I_i$ by $l_i$ and $r_i$ respectively. We recursively define $l_i = r_{i-1}$ for $i\geq 2$,  $r_i = l_i+4^{i-1}$ for $i\geq 1$,  and $l_1 = 0$. Thus $r_i = l_{i+1} = \frac{1}{3} (4^i  - 1)$. The weight of points is specified by a measure $\lambda$ on $P$. The measure is absolutely continuous with respect to Lebesgue measure $m$ such that its density on the $i$th interval is $\frac{\mathrm{d\lambda}}{\mathrm{dm}} =  (\frac{1}{16})^{i-1}$. 
We denote the density on the $i$th interval by $\mu_i$ and the density at point $p$ by $\mu(p)$. 
Note that $P$ can be discretized in the following way. 
We only need to take a large enough constant $n$, create a bucket $B_i$ of $(\frac{1}{4})^{i-1} n $ equally spaced points in each interval $I_i$, and assign weight $\frac{1}{n}$ to every point.

The cost function $f_P(c)$ has following two features:
\begin{enumerate}
    \item \label{fea:bound} the function value $f_P(c) \in [0,\frac{2}{\epsilon}]$ for any $c \in \mathbb{R}$,
    \sloppy
    \item \label{fea:quadratic} the function is quadratic on the interval $[l_i+\frac{1}{3}(r_i - l_i), r_i]$  and satisfies $[\frac{2}{3}(r_i-l_i)]^2 f''_P(c) = \frac{2}{3}$ for each $i$.
\end{enumerate}

We show how to prove theorem~\ref{thm:Lowerbound1dkmedian} from these features and defer verification of these features later. Note that feature~\ref{fea:quadratic} does not contradict lemma~\ref{lem:1d2m-cost-is-affine} since the dataset contains infinite points.

Assume that $S$ is an $\frac{\epsilon}{300}$-coreset of $P$. We prove $|S| \ge \frac{1}{2\epsilon}$ by contradiction. If $|S| < \frac{1}{2\epsilon}$, then there is an interval $I_i = [l_i,r_i]$ such that $ (l_i,r_i) \cap S = \varnothing$ by the pigeonhole's principle. Consider function $f_S(c)$ on interval $[l_i+\frac{1}{3}(r_i - l_i), r_i]$. When $c \in [l_i+\frac{1}{3}(r_i - l_i), r_i]$, we have $\frac{c}{2} \in [l_i,r_i]$. Thus both $c$ and $\frac{c}{2}$ do not pass points in $S$ when $c$ moves from $l_i+\frac{1}{3}(r_i - l_i)$ to $r_i$. By lemma~\ref{lem:1d2m-cost-is-affine}, function $f_S(c)$ is affine linear on interval $[l_i+\frac{1}{3}(r_i - l_i), r_i]$. Since $S$ is an $\frac{\epsilon}{300}$-coreset of $P$, we have $|f_S(c) - f_P(c) | \le \frac{\epsilon}{300} f_P(c)$ on interval $[l_i+\frac{1}{3}(r_i - l_i), r_i]$. However, by applying lemma~\ref{lem:quadratic_approximation} to $f_P(c)$ and $f_S(c)$ on interval $[l_+\frac{1}{3}(r_i - l_i),r_i]$ with $\alpha = \frac{2}{\epsilon}$ and $\beta = \frac{2}{3}$, we obtain that $\frac{\epsilon}{300} \ge \frac{1}{32} \times \frac{2}{3} \times \frac{\epsilon}{2} > \frac{\epsilon}{300}$. This is a contradiction.

It remains to verify the two features of $f_P(c)$. We verify feature~\ref{fea:bound} by direct computations. For any point $c$, the function satisfies 
\begin{align*}
    0 \le f_P(c) &\le \cost(P,\{0,0\})  = \int_P p \mu(p) \mathrm{dp} \\
    &\le \sum_{i=1}^{\frac{1}{\epsilon}} \lambda(I_i) r_i \le  \sum_{i=1}^{\frac{1}{\epsilon}} (\frac{1}{4})^{i-1} \times 2 \times 4^{i-1}\\
    &=\frac{2}{\epsilon}.
\end{align*}

To verify feature~\ref{fea:quadratic}, we compute the first order derivative by computing the change of the function value $f_P(c+\mathrm{dc}) - f_P(c)$ up to the first order term when $c$ increases an infinitesimal number $\mathrm{dc}$. The unweighted clustering cost of a single point $p$ is
\[
\cost(p,\{0,c\}) = \begin{cases}
   p \quad &\text{for } p \in [0,\text{mid}], \\
   c-p \quad &\text{for } p \in [\text{mid},c],\\
   p-c \quad &\text{for } p \in [c,+\infty).
\end{cases}
\]

As $c$ increases to $c + \mathrm{dc}$, the clustering cost of a single point changes
\begin{align*}
& \quad \cost(p,\{0,c+\mathrm{dc}\}) - \cost (p,\{0,c\})] \\
= & \quad \begin{cases}
    0 \quad &\text{for } p \in [0,\text{mid}], \\
    O(\mathrm{dc}) \quad &\text{for } p \in [\text{mid},\text{mid}+\frac{1}{2} \mathrm{dc}], \\
    \mathrm{dc} \quad &\text{for } p \in [\text{mid}+\frac{1}{2} \mathrm{dc},c], \\
    O(\mathrm{dc}) \quad &\text{for } p \in [c,c + \mathrm{dc}], \\
    -\mathrm{dc} \quad &\text{for } p \in [c+ \mathrm{dc},+\infty).
\end{cases}
\end{align*}

The cumulative clustering cost changes
\begin{align*}
    & f_P(c +\mathrm{dc}) - f_P(c) \\
    = & \int_0^{+\infty} \cost(p,\{0,c+\mathrm{dc}\}) - \cost (p,\{0,c\}) \mathrm{d\lambda} \\
    = & \int_0^{\text{mid}} 0\mathrm{d\lambda} + \int_{\text{mid}}^{\text{mid}+\frac{1}{2} \mathrm{dc}} O(\mathrm{dc}) \mathrm{d\lambda} + \int_{\text{mid}+\frac{1}{2} \mathrm{dc}}^c \mathrm{dc} \mathrm{d\lambda}\\
    &+\int_c^{c+ \mathrm{dc}} O(\mathrm{dc}) \mathrm{d\lambda} + \int_{c+ \mathrm{dc}}^{+\infty} -\mathrm{dc} \mathrm{d\lambda}\\
    = & \lambda([\text{mid},c])\mathrm{dc} - \lambda([c,+\infty))\mathrm{dc} + O(\mathrm{dc})^2.
\end{align*}

Thus the first order derivative $f_P'(c) = \lambda([\frac{c}{2},c]) - \lambda([c,+\infty))$ and the second order derivative
\begin{align*}
    f_P''(c) &= \frac{\mathrm{d}}{\mathrm{dc}} \bigl( \lambda([\frac{c}{2},c]) - \lambda([c,+\infty)) \bigr),\\
 &= 2\mu(c) - \frac{1}{2} \mu(\frac{c}{2}) .
\end{align*} 

For $c \in [l_i+\frac{1}{3} (r_i - l_i), r_i]$, the two points $c$ and $\frac{c}{2}$ both lie in 
interval $[l_i, r_i]$. We have $\mu(c) = \mu(\frac{c}{2}) = \mu_i$ and $f_P''(c) = \frac{3}{2} \mu_i$. Thus the function $f_P(c)$ is quadratic on $[l_i+\frac{1}{3}(r_i - l_i), r_i]$ and $[\frac{2}{3}(r_i-l_i)]^2 f''_P(c) = \frac{2}{3}$.
\end{proof}
\section{Improve Coreset Sizes when $2\leq d \leq \eps^{-2}$}
\label{sec:improve}

In this section, we consider the case of constant $d$, $2\leq d\leq \eps^{-2}$, and provide several improved coreset bounds for a general problem of Euclidean \kMedian, called Euclidean \kzC.
The only difference from \kMedian is that the goal is to find a $k$-center set $C \subset \R^d$ that minimizes the objective function
\begin{equation} \label{eq:DefCost_general}
    \cost_z(P, C) := \sum_{p \in P}{d^z(p, C)} = \sum_{p\in P}{\min_{c\in C} d^z(p,c)},
\end{equation}
where $d^z$ represents the $z$-th power of the Euclidean distance.
The coreset notion is as follows.

\begin{definition}[\bf $\epsilon$-Coreset for Euclidean \kzC~\citep{harpeled2004on}]
	\label{def:coreset_general}
	Given a dataset $P\subset \R^d$ of $n$ points, an integer $k\geq 1$, constant $z\geq 1$ and $\eps\in (0,1)$, an $\eps$-coreset for Euclidean \kzC is a subset $S \subseteq P$ with weight $w : S \to \R_{\geq 0}$, such that 
	\begin{equation*}
		\forall C\in \calC,
		\qquad
		\sum_{p \in S}{w(p) \cdot d^z(p, C)}
		\in (1 \pm \eps) \cdot \cost_z(P, C).
	\end{equation*}
\end{definition}
	
\noindent
We first study the case of $k=1$ and provide a coreset upper bound $\tilde{O}(\sqrt{d} \eps^{-1})$ (\Cref{thm:upper}).
Then we study the general case $k\geq 1$ and provide a coreset lower bound $\Omega(kd)$ (\Cref{thm:lowerbound}).

\subsection{Improved Coreset Size in $\R^d$ when $k=1$}
\label{sec:d2k1}

We prove the following main theorem for $k=1$ whose center is a point $c\in \R^d$.

\begin{theorem}[\bf{Coreset for Euclidean \onezC}]
    \label{thm:upper}
    Let integer $d\geq 1$, constant $z\geq 1$ and $\eps\in (0,1)$.
    There exists a randomized polynomial time algorithm that given a dataset $P\subset \R^d$, outputs an $\eps$-coreset for Euclidean \onezC of size at most $z^{O(z)} \sqrt{d}\eps^{-1}\log \eps^{-1}$.
\end{theorem}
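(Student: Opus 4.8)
\textbf{Proof proposal for \Cref{thm:upper}.}
The plan is to combine a reduction to a hierarchically structured instance with a deterministic discrepancy-based coreset construction applied ring-by-ring, exactly along the lines sketched in \Cref{sec:technical}. First I would invoke the hierarchical decomposition framework of~\citep{braverman2022power} (the result labeled \Cref{thm:reduction} in this paper): it suffices to build coresets for instances that decompose into $O(\log \eps^{-1})$ ``levels'', where each level is a disjoint union of \emph{rings} $R = \{p : r \le \|p - c_0\| \le 2r\}$ centered at an approximate center $c_0$, and within each ring every point has cost $\Theta_z(r^z)$ for centers that are ``far'' and a controlled cost otherwise. The error budget is apportioned so that an $O(\eps)$-coreset per ring, summed over all $O(\log\eps^{-1})$ levels, yields an $O(\eps \log\eps^{-1})$-coreset overall; rescaling $\eps$ absorbs the log into the size bound.

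The core of the argument is the per-ring construction. Inside a single ring the distances $\|p-c_0\|$ are all within a factor $2$, so up to a $z^{O(z)}$ factor the cost function $p \mapsto d^z(p,C)$ restricted to the ring behaves like an \emph{additive} approximation problem on a bounded-diameter point set in $\R^d$. Here I would apply the discrepancy machinery of~\citep{karnin2019discrepancy} (available as \Cref{thm:discrepancy_upper}): recursively halve the point set so that at each halving step the signed discrepancy of the vectors $\big(d^z(p,c)\big)_{c}$ — or rather a low-complexity proxy for them — is small simultaneously over the relevant family of center configurations. The key geometric input is that, within a ring, the function $c \mapsto d^z(p,c)$ has bounded ``complexity'': after a Taylor/Lipschitz argument it is well approximated by a function living in a space of dimension $O(d)$ (essentially linear functionals of $p$ plus the scalar $\|p\|^2$), so the discrepancy bound scales like $\sqrt{d}$ rather than with the number of distinct centers. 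Halving $\log(\eps^{-1} n)$ times and keeping the reweighted surviving set gives additive error $O(\sqrt{d})$ per halving in the worst case, and a standard geometric-series / martingale summation over the $O(\log(1/\eps))$ rescaled levels and over the halving depth turns this into an $\eps$-multiplicative guarantee with $\tilde O(\sqrt d \eps^{-1})$ surviving points.

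The main obstacle I anticipate is precisely the passage from \emph{additive} discrepancy bounds to \emph{multiplicative} $(1\pm\eps)$ error in a scale-free way: the cost of a center set $C$ can range over many orders of magnitude as $C$ varies, while discrepancy natively controls absolute error. This is the point flagged in \Cref{sec:technical} as ``extend the analytic analysis of~\citep{karnin2019discrepancy} to handle multiplicative errors in a scalable way.'' I would handle it by working level-by-level in the hierarchical decomposition: on each level the relevant cost contributions are pinned to a known scale $\Theta_z(2^{j z})$, so additive error $O(\eps \cdot 2^{jz})$ on that level is the same as $O(\eps)$ multiplicative error there, and the decomposition guarantees that a query center set's total cost is dominated (up to $z^{O(z)}$) by the contributions of the levels where its cost is at scale. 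A secondary technical nuisance is that the function class $\{c \mapsto d^z(p,c)\}$ is not literally finite-dimensional for $z>1$; the fix is a netting/linearization argument over each ring that introduces only a $z^{O(z)}$ overhead and does not inflate the ambient dimension beyond $O(d)$, which is what keeps the discrepancy bound at $\sqrt d$. Finally I would verify that the whole construction is a \emph{subset} of $P$ with nonnegative weights (discrepancy halving naturally produces this) and runs in randomized polynomial time, completing the proof.
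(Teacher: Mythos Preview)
Your proposal correctly identifies the two ingredients (the reduction of~\citep{braverman2022power}, \Cref{thm:reduction}, and the discrepancy machinery of~\citep{karnin2019discrepancy}, \Cref{thm:discrepancy_upper}) and correctly names the main obstacle: converting additive discrepancy into a multiplicative, scale-free guarantee. However, the mechanism you propose for that conversion is not the one the paper uses, and as written it leaves a real gap.

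The reduction \Cref{thm:reduction} does not hand you a family of rings each pinned to a single cost scale; it hands you a \emph{single} normalized instance $P\subset B(0,1)$ for which you must build a \emph{mixed coreset} (\Cref{def:mixed_coreset}): additive error $\eps\,\max\{1,\|c\|_2\}^z\,|P|$ for \emph{every} $c\in\R^d$. Within this one instance the center $c$ can still sit at any distance, so your sentence ``on each level the relevant cost contributions are pinned to a known scale $\Theta_z(2^{jz})$'' conflates the data decomposition with the center scale. Karnin's bound \Cref{thm:discrepancy_upper} as stated only covers centers in a fixed ball $B(0,1/(2C))$, so a per-ring application handles nearby centers but says nothing about far ones; your ``controlled cost otherwise'' and ``netting/linearization'' do not fill this in.

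The paper closes this gap with \Cref{lm:discrepancy_median}: it constructs a \emph{single} sign vector $\sigma$ (via the lifting $\phi(p)=(\tfrac12\|p\|_2^2,\tfrac{\sqrt2}{2}p,\tfrac12)\in\R^{d+2}$, which makes $\|p-c\|^z=(\langle\phi(p),\psi(c)\rangle)^{z/2}$ an analytic function of an inner product) and then shows, by a rescaling argument $p'=p/(4r),\,c'=c/(4r)$, that this same $\sigma$ achieves discrepancy $\le z^{O(z)}\max\{1,r\}^z\sqrt d/m$ simultaneously for every radius $r$. That uniform scaling of one coloring across all center scales is exactly the ``scalable multiplicative'' step flagged in \Cref{sec:technical}, and it is what your proposal is missing. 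Once you have \Cref{lm:discrepancy_median}, iterated halving (Fact~6 of~\citep{karnin2019discrepancy}) gives the mixed coreset of size $z^{O(z)}\sqrt d\,\eps^{-1}$, and \Cref{thm:reduction} supplies the final $\log\eps^{-1}$ factor.
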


\begin{proofsketch}
    By~\citep{braverman2022power}, we first reduce the problem to constructing a mixed coreset $(S,w)$ for Euclidean \onezC for a dataset $P\subset B(0,1)$ satisfying that $\forall c\in \R^d$,
    \[
	\sum_{p \in S}{w(p) \cdot d^z(p, c)} \in
	\cost_z(P, c)\pm \eps \max\left\{1, \|c\|_2\right\}^z \cdot |P|.
    \]
    The main idea to construct such $S$ is to prove that the class discrepancy of Euclidean \onezC for $P$ is at most $z^{O(z)}\max\left\{1,r\right\}^z \cdot \sqrt{d}/m$ for $c\in B(0,r)$ (\Cref{lm:discrepancy_median}), which implies the existence of a mixed coreseet $S$ of size $z^{O(z)} \sqrt{d}\eps^{-1}$ by Fact 6 of~\citep{karnin2019discrepancy}.
    For the class discrepancy, we apply an analytic result of ~\citep{karnin2019discrepancy} (\Cref{thm:discrepancy_upper}).
    The main difference is that \citep{karnin2019discrepancy} only considers an additive error that can handle $c\in B(0,1)$ instead of an arbitrary center $c\in \R^d$.
    In our case, we allow a mixed error proportional to the scale of $\|c\|_2$ and extend the approach of \citep{karnin2019discrepancy} to handle arbitrary centers $c\in \R^d$ by increasing the discrepancy by a multiplicative factor $\|c\|_2^z$.
\end{proofsketch}

\noindent
The above theorem is powerful and leads to the following results for $z=O(1)$:
\begin{enumerate}
    \item By dimension reduction as in~\citep{huang2020coresets,cohenaddad2021new,cohenaddad2022towards}, we can assume $d=O(\eps^{-2}\log\eps^{-1})$.
    Consequently, our coreset size is upper bounded by $\tilde{O}(\eps^{-2})$, which matches the nearly tight bound in~\citep{cohenaddad2022towards}.
    \item For $d= O(1)$, our coreset size is $O(\eps^{-1})$, which is the first known result in small dimensional space.
    Specifically, the prior known coreset size in $\R^2$ is $\tilde{O}(\eps^{-3/2})$~\citep{braverman2022power}, and our result improves it by a factor of $\eps^{-1/2}$.
\end{enumerate}

\noindent
We conjecture that our coreset size is almost tight, i.e., there exists a coreset lower bound $\Omega(\sqrt{d} \eps^{-1})$ for constant $2\leq d \leq \eps^{-2}$, which leaves as an interesting open problem.

\subsubsection{Useful Notations and Facts}
\label{sec:notation_upper}

For preparation, we first propose a notion of mixed coreset (\Cref{def:mixed_coreset}), and then introduce some known discrepancy results.

\paragraph{Reduction to mixed coreset.}
Let $B(a,r)$ denote the $\ell_2$-ball in $\R^d$ that centers at $a\in \R^d$ with radius $r\geq 0$.
Specifically, $B(0,1)$ is the unit ball centered at the original point.

\begin{definition}[\bf Mixed coreset for Euclidean \onezC]
	\label{def:mixed_coreset}
	Given a dataset $P\subset B(0,1)$ and $\eps\in (0,1)$, an $\eps$-mixed-coreset for Euclidean \onezC is a subset $S \subseteq P$ with weight $w : S \to \R_{\geq 0}$, such that $\forall c\in \R^d$,
	\begin{equation} 
	\label{eq:DefCoreset}
	\sum_{p \in S}{w(p) \cdot d^z(p, c)} \in
	\cost_z(P, c)\pm \eps \max\left\{1, \|c\|_2\right\}^z \cdot |P|.
	\end{equation}
\end{definition}

\noindent
Actually, prior work~\citep{cohenaddad2021new,cohenaddad2022towards,braverman2022power} usually consider the following form: $
	\forall c\in \R^d$,
\begin{equation} 
	\label{eq:DefCoreset_equivalent}
	\sum_{p \in S}{w(p) \cdot d^z(p, c)}
	\in (1 \pm \eps) \cdot \cost_z(P, c)\pm \eps |P|.
\end{equation}

Compared to \Cref{def:coreset}, the above inequality allows both a multiplicative error $\eps\cdot \cost_z(P,c)$ and an additional additive error $\eps |P|$.
Note that for a small $r=O(1)$, the additive error $\eps |P|$ dominates the total error; while for a large $r\gg \Omega(1)$, the multiplicative error $\eps\cdot \cost_z(P,c)\approx \eps \|c\|_2\cdot |P|$ dominates the total error.
Hence, it is not hard to check that Inequality~\eqref{eq:DefCoreset_equivalent} is an equivalent form of Inequality~\eqref{eq:DefCoreset} (up to an $2^{O(z)}$-scale).
This is also the reason that we call \Cref{def:mixed_coreset} mixed coreset.
We have the following useful reduction.

\begin{theorem}[\bf{Reduction from coreset to mixed coreset~\citep{braverman2022power}}]
    \label{thm:reduction}
    Let $\eps\in (0,1)$.
    Suppose there exists a polynomial time algorithm $A$ that constructs an $\eps$-mixed coreset for Euclidean \onezC of size $\Gamma$.
    Then there exists a polynomial time algorithm $A'$ that constructs an $\eps$-coreset for Euclidean \onezC of size $O(\Gamma\log \eps^{-1})$.
\end{theorem}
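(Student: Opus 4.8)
\textbf{Proof proposal for \Cref{thm:reduction}.} The plan is to localize the dataset to a bounded ball via a standard rescaling trick, apply the mixed coreset algorithm $A$ at a sequence of geometric scales, and stitch the pieces together. First I would split $P$ into $O(\log \eps^{-1})$ concentric annuli around a fixed reference point (e.g. a constant-factor approximate optimal $1$-center $c_0$, which can be computed in polynomial time): let $r^\star$ be such that $\OPT \approx r^\star \cdot |P|$, and for $t = 0, 1, \dots, O(\log \eps^{-1})$ define the ring $P_t = \{p\in P : \|p - c_0\|_2 \in [2^{t} r^\star, 2^{t+1} r^\star)\}$, together with an innermost ball $P_{-\infty}$ of radius $O(\eps r^\star)$ and an outermost set collecting all remaining far points. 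The key arithmetic point is that the far points contribute a cost that is already within a $(1\pm\eps)$ factor no matter where the center is, because any center within distance $O(r^\star/\eps)$ of $c_0$ changes their distances by a negligible additive amount relative to their own large distance, and any center farther than that is handled by the triangle inequality since $\OPT \ll \|c - c_0\|\cdot|P|$; so I can keep the far points as-is (or collapse them coarsely) without blowing up the coreset size.

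Next, for each ring $P_t$, I would translate by $-c_0$, rescale by $1/(2^{t+1} r^\star)$ so the ring lands inside $B(0,1)$, and run $A$ to obtain an $\eps$-mixed coreset $(S_t, w_t)$ of size $\Gamma$. Undoing the scaling, the guarantee~\eqref{eq:DefCoreset} becomes: for every center $c$, $\sum_{p\in S_t} w_t(p) d^z(p,c) \in \cost_z(P_t, c) \pm \eps \max\{2^{t+1}r^\star, \|c - c_0\|_2\}^z |P_t|$. The main technical step is then to argue that summing these per-ring additive errors over all $t$ yields at most $O(\eps)\cdot\cost_z(P,c)$, i.e. a genuine multiplicative guarantee. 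For this I would fix $c$, let $\rho = \max\{r^\star, \|c-c_0\|_2\}$, and observe two regimes: (i) rings with $2^{t+1} r^\star \lesssim \rho$ contribute error $\lesssim \eps \rho^z |P_t|$, and since $\cost_z(P,c) \gtrsim z^{-O(z)}\rho^z|P|$ (because at least a constant fraction of $P$ is at distance $\Theta(\rho)$ from $c$, or $\OPT$ itself is $\Theta(\rho^z|P|)$ when $\rho \approx r^\star$), these sum to $z^{O(z)}\eps \cdot \cost_z(P,c)$ after absorbing the $\log\eps^{-1}$ number of such rings into a reparametrization $\eps \leftarrow \eps/\log\eps^{-1}$; (ii) rings with $2^{t+1}r^\star \gg \rho$ have error $\lesssim \eps (2^{t+1}r^\star)^z |P_t|$, but for such a far ring every point is at distance $\Theta(2^t r^\star)$ from $c$, so $\cost_z(P_t,c) \gtrsim z^{-O(z)}(2^t r^\star)^z|P_t|$ and the error is $\lesssim z^{O(z)}\eps\cdot\cost_z(P_t,c) \le z^{O(z)}\eps\cdot\cost_z(P,c)$ — here the ring structure is essential because each far ring's error is charged against its own cost rather than the global cost, so we do not pay a $\log$ factor for regime (ii).

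Finally I would set $S = P_{-\infty} \cup \bigcup_t S_t \cup (\text{far points})$ with the induced weights; its size is $O(\Gamma \log \eps^{-1}) + O(\Gamma) + O(1)$ provided the far points can be represented in $O(1)$ (or at worst $O(\Gamma)$) points, and the runtime is polynomial since we invoke $A$ a logarithmic number of times. The $\log\eps^{-1}$ loss in the size bound comes precisely from the number of annuli (equivalently, from the reparametrization $\eps\leftarrow\eps/\log\eps^{-1}$ needed in regime (i)), matching the statement. I expect the main obstacle to be the bookkeeping in regime (i): one must be careful that the innermost ball and the $O(\log\eps^{-1})$ near rings together do not accumulate more than $O(\eps)\cdot\cost_z(P,c)$ total error for centers $c$ that are close to $c_0$ (where $\cost_z(P,c)$ is as small as $\Theta(\OPT)$), which is exactly why the mixed-coreset error term is stated with the $\max\{1,\|c\|_2\}^z$ factor rather than a pure additive $\eps|P|$ — the reduction crucially exploits that when $\|c-c_0\|$ is tiny the ring contributions scale like $\eps(2^t r^\star)^z|P_t|$ and telescope against $\OPT$, whereas a naive additive bound would not. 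Handling the $z$-th power triangle inequalities (the $z^{O(z)}$ factors, and the fact that $d^z$ is only approximately subadditive) is the other place where care is needed, but these are by now standard manipulations.
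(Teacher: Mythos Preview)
The paper does not give its own proof of this statement; \Cref{thm:reduction} is quoted directly from~\citep{braverman2022power} and used as a black box. Your ring-decomposition outline is indeed the approach taken there, so at the level of ``which reduction is this'' you are on target. That said, two points in your write-up deserve correction.

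First, your regime~(i) analysis is too pessimistic and, as written, actually fails to prove the theorem. You propose absorbing the $O(\log\eps^{-1})$ rings via a reparametrization $\eps\leftarrow\eps/\log\eps^{-1}$, but then the black box $A$ would be invoked with parameter $\eps/\log\eps^{-1}$ and would return a mixed coreset of size $\Gamma(\eps/\log\eps^{-1})$, not $\Gamma(\eps)$; the final bound would be $O\bigl(\Gamma(\eps/\log\eps^{-1})\log\eps^{-1}\bigr)$, which is not what the theorem asserts. In fact no reparametrization is needed: summing the regime-(i) errors gives $\sum_t \eps\,\rho^z |P_t|\le \eps\,\rho^z|P|$, and one checks via the generalized triangle inequality that $\rho^z|P|\le z^{O(z)}\cost_z(P,c)$ whenever $\rho=\max\{r^\star,\|c-c_0\|\}$ (with $r^\star:=(\OPT/|P|)^{1/z}$; your $\OPT\approx r^\star|P|$ is only correct for $z=1$). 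So the near-ring errors already telescope to $z^{O(z)}\eps\cdot\cost_z(P,c)$, and the $\log\eps^{-1}$ in the size bound comes purely from the number of pieces, not from the error accounting.

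Second, your treatment of the extremes is too loose. Your rings cover only $[r^\star,2^{O(\log\eps^{-1})}r^\star)$, leaving the annulus $[\eps r^\star,r^\star)$ unaccounted for; the innermost piece should simply be the full ball $B(c_0,r^\star)$ (one more application of $A$). More seriously, ``keep the far points as-is'' does not bound the coreset size, and ``collapse them coarsely'' has to be made precise. The clean way is to note that any ring $P_t$ with $t>T:=\Theta(z^{-1}\log\eps^{-1})$ satisfies $|P_t|\le 2^{-tz}|P|$, so for any center $c$ at distance $\rho$ from $c_0$ the crude error from replacing $P_t$ by a single weighted point is at most $z^{O(z)}(2^{t}r^\star)^z|P_t|\le z^{O(z)}\eps\cdot\cost_z(P,c)$; summing over far rings and then recursing (or merging) removes the residual $\log n$ dependence. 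This is the bookkeeping that~\citep{braverman2022power} carries out, and it is the one genuine gap in your sketch.
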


\noindent
Thus, it suffices to prove that an $\eps$-mixed coreset is of size $z^{O(z)}\sqrt{d} \eps^{-1}$, which implies \Cref{thm:upper}.

\paragraph{Class discrepancy.}
For preparation, we introduce the notion of class discrepancy introduced by~\citep{karnin2019discrepancy}.
The idea of combining discrepancy and coreset construction has been studied in the literature, specifically for kernel density estimation~\citep{phillips2018improved,phillips2018near,karnin2019discrepancy,tai2022optimal}.
We propose the following definition.

        \begin{definition}[\bf{Class discrepancy~\citep{karnin2019discrepancy}}]
	    \label{def:discrepancy}
	    Let $m\geq 1$ be an integer.
	    Let $f: \calX, \calC\rightarrow \R$ and $P\subseteq \calX$ with $|P|=m$.
	    The class discrepancy of of $P$ w.r.t. $(f,\calC)$ is 
	    \begin{align*}
	    D^{(\calC)}_P(f) := & \min_{\sigma\in \left\{-1,1\right\}^P} D^{(\calC)}_P(f,\sigma) \\
        = &\min_{\sigma\in \left\{-1,1\right\}^P} \max_{c\in \calC} \frac{1}{m}\left|\sum_{p\in P} \sigma_p\cdot f(p,c)\right|.
	    \end{align*}
	    Moreover, we define $D^{(\calX, \calC)}_m(f) := \max_{P\subseteq \calX: |P|=m} D^{(\calC)}_P(f)$ to be the class discrepancy w.r.t. $(f,\calX,\calC)$.
	\end{definition}
	
	\noindent
	Here, $\calX$ is the instance space and $\calC$ is the parameter space.
	Specifically, for Euclidean \onezC, we let $\calX, \calC\subseteq \R^d$ and $f$ be the Euclidean distance.
	The class discrepancy $D^{(\calX, \calC)}_m(f)$ measures the capacity of $\calC$.
	Intuitively, if the capacity of $\calC$ is large and leads to a complicated geometric structure of vector $(f(p,c))_{p\in P}$ for $c\in \calC$, $D^{(\calX, \calC)}_m(f)$ tends to be large.

\paragraph{Useful discrepancy results.}
For a vector $p\in \R^d$ and integer $l\geq 1$, let $p^{\otimes l}$ present the $l$-dimensional tensor obtained from the outer product of $p$ with itself $l$ times.
For a $l$-dimensional tensor $X$ with $d^l$ entries, we consider the measure $\|X\|_{T_l} := \max_{c\in \R^d: \|q\| = 1} |\langle X, q^{\otimes l}\rangle |$.
Next, we provide some known results about the class discrepancy.

\begin{theorem}[\bf{An upper bound for class discrepancy (restatement of Theorem 18 of~\citep{karnin2019discrepancy})}]
    \label{thm:discrepancy_upper}
    \sloppy
    Let $\calX = B(0,1)$ in $\R^d$.
    Let $f:\R\rightarrow \R$ be analytic satisfying that for any integer $l\geq 1$, $|\frac{d^l f}{d x^l}(x)| \leq \gamma_1 C^l l!$ for some constant $\gamma_1, C>0$.
    Let $\calC = B(0,\frac{1}{2C})$ and $m\geq 1$ be an integer.
    The class discrepancy w.r.t. $(f = f(\langle p,c \rangle), \calX, \calC)$ is at most $D^{(\calX, \calC)}_m(f) \leq \gamma_2\gamma_1\sqrt{d}/m$ for some constant $\gamma_2 > 0$.

     Moreover, for any dataset $P\subset \calX$ of size $m$, there exists a randomized polynomial time algorithm that constructs $\sigma\in \left\{-1,1\right\}^P$ satisfying that for any integer $l\geq 1$, we have
    \begin{align*}
    \label{ineq:tensor}
    \|\sum_{p\in P}\sigma_p\cdot p^{\otimes l}\|_{T_l} = O(\sqrt{d l \log^3 l}).
    \end{align*}
    This $\sigma$ satisfies $D^{(\calC)}_P(f,\sigma) \leq \gamma_2\gamma_1\sqrt{d}/m$.
\end{theorem}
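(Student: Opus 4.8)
The plan is to reproduce the analytic discrepancy argument behind Theorem 18 of \citep{karnin2019discrepancy}: expand $f(\langle p,c\rangle)$ as a power series in the scalar $\langle p,c\rangle$, reduce the discrepancy of $f$ to the balancing of the rank-one tensors $p^{\otimes l}$ measured in the injective norm $\|\cdot\|_{T_l}$, and then sum a geometrically decaying series whose decay is supplied jointly by the analyticity bound on $f$ and the radius restriction $\calC = B(0,\tfrac{1}{2C})$.

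First I would set up the power-series reduction. Since $|a_l| := |f^{(l)}(0)|/l! \le \gamma_1 C^l$, the series $\sum_{l\ge 0} a_l x^l$ converges for $|x| < 1/C$; as $|\langle p,c\rangle| \le \|p\|\,\|c\| \le \tfrac{1}{2C}$ for $p\in B(0,1)$, $c\in B(0,\tfrac{1}{2C})$, we may write $f(\langle p,c\rangle) = \sum_{l\ge 0} a_l\langle p,c\rangle^l$. Using $\langle p,c\rangle^l = \langle p^{\otimes l}, c^{\otimes l}\rangle$, the definition $\|X\|_{T_l} = \max_{\|q\|=1}|\langle X, q^{\otimes l}\rangle|$, and $c^{\otimes l} = \|c\|^l (c/\|c\|)^{\otimes l}$, we get for every $\sigma\in\{-1,1\}^P$ and every admissible $c$
\[
\Bigl|\sum_{p\in P}\sigma_p f(\langle p,c\rangle)\Bigr| \;\le\; \sum_{l\ge 0} |a_l|\,\|c\|^l\,\Bigl\|\sum_{p}\sigma_p\, p^{\otimes l}\Bigr\|_{T_l} \;\le\; \sum_{l\ge 0} \gamma_1\, 2^{-l}\,\Bigl\|\sum_{p}\sigma_p\, p^{\otimes l}\Bigr\|_{T_l},
\]
using $|a_l|\le\gamma_1 C^l$ and $\|c\|\le\tfrac1{2C}$.

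The heart of the proof, and the step I expect to be the main obstacle, is the simultaneous tensor-balancing claim: for every $P\subseteq B(0,1)$ with $|P|=m$ there is $\sigma$, computable in randomized polynomial time, with $\|\sum_p \sigma_p p^{\otimes l}\|_{T_l} = O(\sqrt{dl\log^3 l})$ for all $l\ge 1$ (and $|\sum_p\sigma_p| = O(1)$; for the degenerate small $l$ an $O(\sqrt{d}\,\mathrm{polylog}\,d)$ bound suffices). I would prove it by (i) replacing, for each $l$, the maximum over unit $q$ in $\|\cdot\|_{T_l}$ by a maximum over a $\tfrac12$-net $\calN_l$ of $S^{d-1}$ of size $C_0^d$, at the cost of a constant; (ii) stacking, for each $p$, the scalars $\theta_l\langle p,q\rangle^l$ over $l\ge1$ and $q\in\calN_l$ — with weights $\theta_l$ chosen so $\sum_l\theta_l^2|\calN_l|$ converges — into a bounded-norm vector $v_p$; and (iii) invoking a constructive Banaszczyk-type vector-balancing theorem to obtain $\sigma$ with $\sum_p\sigma_p v_p$ lying, after rescaling, in a constant multiple of a subgaussian body, so that each coordinate is $O(\sqrt{\log|\calN_l|}\,/\theta_l)=O(\sqrt d/\theta_l)$; taking $\theta_l\asymp(l\log^3 l)^{-1/2}$ gives the per-level bound and absorbs the union over levels. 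Beating the trivial $O(\sqrt m)$ balancing to get a bound governed by $d$ rather than $m$, for all levels at once and algorithmically, is the genuinely non-routine ingredient; the $\log^3 l$ is exactly the net/weight bookkeeping loss.

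Finally I would sum the series. Split at $L := \lceil\log_2(m/\sqrt d)\rceil = O(\log m)$. For $l>L$ the trivial bound $\|\sum_p\sigma_p p^{\otimes l}\|_{T_l}\le\sum_p\|p\|^l\le m$ gives $\sum_{l>L}\gamma_1 2^{-l}m\le\gamma_1\sqrt d$; for $1\le l\le L$ the balancing bound gives $\sum_{l=1}^{L}\gamma_1 2^{-l}\,O(\sqrt{dl\log^3 l})=O(\gamma_1\sqrt d)$ since $\sum_{l\ge1}2^{-l}\sqrt{l\log^3 l}=O(1)$; and the $l=0$ term is $|a_0|\,|\sum_p\sigma_p|=O(\gamma_1)$. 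Hence $\max_{c\in\calC}|\sum_p\sigma_p f(\langle p,c\rangle)| = O(\gamma_1\sqrt d)$, and dividing by $m$ yields $D^{(\calC)}_P(f,\sigma)\le\gamma_2\gamma_1\sqrt d/m$ for a universal $\gamma_2$; the supremum over $P$ gives $D^{(\calX,\calC)}_m(f)\le\gamma_2\gamma_1\sqrt d/m$, and the $\sigma$ produced in the obstacle step is exactly the claimed efficiently computable signing.
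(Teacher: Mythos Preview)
The paper does not actually prove this theorem: it is quoted as Theorem~18 of \citet{karnin2019discrepancy}, and the only addition is the remark that the constructive version follows by replacing Banaszczyk's existential balancing with the algorithmic version of \citet{bansal2019gram}. Your overall plan --- Taylor-expand $f(\langle p,c\rangle)$, reduce to simultaneously balancing the tensors $p^{\otimes l}$ in the injective norm, and then sum a geometrically decaying series in $l$ --- is the Karnin--Liberty framework, and your first and last steps are correct.

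The gap is in your step~(ii). You propose to replace the sphere by a $\tfrac12$-net $\calN_l$ of size $C_0^d$, stack the scalars $\theta_l\langle p,q\rangle^l$ over all $(l,q)$ into a vector $v_p$, and apply Banaszczyk in $\ell_\infty$ to get per-coordinate bounds $O(\sqrt{\log|\calN_l|})=O(\sqrt d)$. But Banaszczyk's $\ell_\infty$ guarantee of $O(\sqrt{\log N})$ requires $\|v_p\|_2\le 1$, and your own estimate $\|v_p\|_2^2\le\sum_l\theta_l^2|\calN_l|$ is only $C_0^d\cdot O(1)$. This is not slack: already at level $l=1$, for \emph{any} net one has $\max_{\|p\|=1}\sum_{q\in\calN_1}\langle p,q\rangle^{2}\ge \mathrm{tr}\bigl(\sum_q qq^\top\bigr)/d=|\calN_1|/d$, which is exponential in $d$. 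After normalizing the $v_p$'s, every coordinate bound picks up a $C_0^{d/2}$ factor, so the argument as written yields $\exp(\Theta(d))$ rather than $\sqrt d$. The way \citet{karnin2019discrepancy} avoid this is not through $\ell_\infty$ over a net: they apply Banaszczyk with the convex body $K=\prod_l\{X_l:\|X_l\|_{T_l}\le r_l\}$ directly in the product of symmetric tensor spaces, where the stacked vectors $v_p=(\theta_l\, p^{\otimes l})_{l\ge 1}$ satisfy $\|v_p\|_2^2=\sum_l\theta_l^2\|p\|^{2l}\le\sum_l\theta_l^2=O(1)$ with no net blowup, and the radii $r_l$ are set by bounding the Gaussian width $\E\sup_{\|q\|=1}|\langle G,q^{\otimes l}\rangle|$ of the injective-norm ball via chaining over $S^{d-1}$. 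That chaining bound, together with the union over levels, is where the polynomial $\sqrt{dl\log^3 l}$ actually comes from.
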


\noindent
Note that the above theorem is a constructive result instead of an existential result in Theorem 18 of~\citep{karnin2019discrepancy}.
This is because Theorem 18 of~\citep{karnin2019discrepancy} applies the existential version of Banaszczyk’s \citep{banaszczyk1998balancing}, which has been proven to be constructive recently~\citep{bansal2019gram}.
Also, note that the construction of $\sigma$ only depends on $P$ and does not depend on the selection of $\calC$.
This observation is important for the construction of mixed coresets via discrepancy.

\subsubsection{Proof of \Cref{thm:upper}}
\label{sec:proof_upper}

We are ready to prove \Cref{thm:upper}.
The main lemma is as follows.

\begin{lemma}[\bf Class discrepancy for Euclidean \onezC]
    \label{lm:discrepancy_median}
    Let $m\geq 1$ be an integer.
    Let $f = d^z$ and $\calX = B(0,1)$.
    For a given dataset $P\subset \calX$ of size $m$, there exists a vector $\sigma\in \left\{-1,1\right\}^P$ such that for any $r>0$,
    \[ 
    D^{(B(0,r))}_P(f,\sigma) \leq z^{O(z)}\max\left\{1,r\right\}^z \cdot \sqrt{d}/m.
    \]
\end{lemma}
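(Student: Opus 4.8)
\textbf{Proof plan for Lemma~\ref{lm:discrepancy_median}.}
The plan is to reduce the bound on the class discrepancy of $f = d^z$ to the class-discrepancy machinery of \Cref{thm:discrepancy_upper}, which is stated only for functions of the inner product $\langle p, c\rangle$ on the unit ball. The key algebraic identity is $d(p,c)^z = \left(\|p\|_2^2 - 2\langle p, c\rangle + \|c\|_2^2\right)^{z/2}$, so for fixed $c$ the function $p \mapsto d(p,c)^z$ is a composition $g_c(\langle p, c\rangle)$ where $g_c(x) = (\|p\|_2^2 - 2x + \|c\|_2^2)^{z/2}$ — except the $\|p\|_2^2$ term depends on $p$, not just on $\langle p,c\rangle$. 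First I would handle this by observing that $d(p,c)^z$ is an analytic function of the two quantities $\langle p,c\rangle$ and $\|p\|_2^2$, and more usefully, by expanding $(\|p-c\|_2^2)^{z/2}$ as a convergent power series. Writing $u = \|p\|_2^2 + \|c\|_2^2$ and $v = \langle p,c\rangle$, we have $d(p,c)^z = (u - 2v)^{z/2}$; for $r = O(1)$ (handled first, then rescale), $u \le 1 + r^2$ and $|v| \le r$, so $u - 2v$ stays in a fixed compact interval bounded away from catastrophic behavior (it is $\ge 0$ always), and $(u-2v)^{z/2}$ has a Taylor expansion in $v$ around $v=0$ whose coefficients I can bound: the $l$-th derivative in $v$ is $\prod_{i=0}^{l-1}(z/2 - i) \cdot (-2)^l (u)^{z/2 - l}$, giving $|\partial_v^l| \le z^{O(z)} \, l! \, C^l$ for an absolute constant $C$ once $u$ is bounded below (when $u$ could be near $0$ one must be slightly more careful, but $u \ge \|p\|^2 \vee \|c\|^2$ is not true — rather one uses that $(u-2v)^{z/2} = (\|p-c\|^2)^{z/2}$ is itself entire-ish and the expansion in the tensor variables $p^{\otimes l}$ is what actually matters).

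The cleaner route, which I would actually pursue, is to expand directly in tensors of $p$. Write $d(p,c)^z = \left(\|p\|^2 - 2\langle p,c\rangle + \|c\|^2\right)^{z/2} = \sum_{l \ge 0} a_l(c)\, \langle p^{\otimes l}, t_l(c)\rangle$ for suitable multilinear forms, by Taylor-expanding the scalar function $s \mapsto (\|c\|^2 + s)^{z/2}$ at $s = 0$ in the variable $s = \|p\|^2 - 2\langle p,c\rangle$, then further expanding each power $s^l = (\|p\|^2 - 2\langle p,c\rangle)^l$ into a sum of terms of the form $\langle p^{\otimes j}, (\text{tensor built from } c)\rangle$ using the multinomial theorem (note $\|p\|^2 = \langle p^{\otimes 2}, I\rangle$ is itself a degree-2 tensor evaluation). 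This writes $\sum_p \sigma_p d(p,c)^z$ as a sum over $l$ of terms $\langle \sum_p \sigma_p p^{\otimes l},\, X_l(c)\rangle$ where $\|X_l(c)\|$ is controlled by $z^{O(z)} C^l \max\{1,r\}^z$ (with appropriate rescaling of $c$ into $B(0, 1/(2C))$ as \Cref{thm:discrepancy_upper} demands, absorbing the rescale factor into the $z^{O(z)}\max\{1,r\}^z$ prefactor). Then I apply the constructive tensor-balancing guarantee of \Cref{thm:discrepancy_upper}: the \emph{same} $\sigma$ (depending only on $P$, which is crucial and is flagged in the excerpt) satisfies $\|\sum_p \sigma_p p^{\otimes l}\|_{T_l} = O(\sqrt{d\, l \log^3 l})$ for every $l$, and pairing a bounded-norm tensor against this gives $|\langle \sum_p \sigma_p p^{\otimes l}, X_l(c)\rangle| \le O(\sqrt{d\,l\log^3 l}) \cdot z^{O(z)} C^l \max\{1,r\}^z$. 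Summing the geometric-type series over $l$ (the $C^l$ against a fixed-radius ball times $l!$-type coefficients gives convergence, since the Taylor coefficients of $(\cdot)^{z/2}$ decay like $l!^{-1}$ times what cancels the $C^l$) yields $\frac{1}{m}|\sum_p \sigma_p d(p,c)^z| \le z^{O(z)}\max\{1,r\}^z \sqrt{d}/m$, i.e., the claimed $D^{(B(0,r))}_P(f,\sigma)$ bound, uniformly over $c \in B(0,r)$.

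The main obstacle I anticipate is the bookkeeping to get the dependence on $r$ and on $z$ right simultaneously: \Cref{thm:discrepancy_upper} is phrased for a scalar analytic $f$ with $|f^{(l)}| \le \gamma_1 C^l l!$ on the \emph{unit} ball with center ball $B(0, 1/(2C))$, whereas I need to (i) allow centers of arbitrary norm $r$, which forces a rescaling $c \mapsto c / (2Cr)$ (or similar) and a compensating blow-up of the effective $\gamma_1$ to $z^{O(z)}\max\{1,r\}^z$, and (ii) deal with the fact that $d(p,c)^z$ is not literally a function of $\langle p,c\rangle$ alone, so I cannot cite \Cref{thm:discrepancy_upper} as a black box but must re-run its tensor-expansion argument — which is exactly why the theorem is stated in the excerpt in its ``constructive $\sigma$ depending only on $P$'' form with the explicit tensor-norm bound, so that I can feed in my own tensor decomposition of $d^z$. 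A secondary subtlety is verifying the series $\sum_l$ genuinely converges after pairing with the $O(\sqrt{d\,l\log^3 l})$ growth — the $l\log^3 l$ factor grows, but the Taylor coefficients of the $z/2$-power, being $O(l!^{-1})$-scale after extracting $C^l$, decay fast enough to dominate; making this rigorous for all $z \ge 1$ (including non-integer $z/2$, where the binomial series is infinite) is where a careful estimate of $\left|\binom{z/2}{l}\right| \le z^{O(z)} 2^{-l} l^{-z/2-1}$-type bounds enters, and I would isolate it as a short sub-claim before assembling the final sum.
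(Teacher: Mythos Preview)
Your plan correctly identifies the central difficulty --- that $d(p,c)^z = (\|p\|_2^2 - 2\langle p,c\rangle + \|c\|_2^2)^{z/2}$ is not a function of $\langle p,c\rangle$ alone because of the $\|p\|_2^2$ term --- but your proposed workaround has a real gap. When you write $\|p\|_2^2 = \langle p^{\otimes 2}, I\rangle$ and push the multinomial expansion through, the resulting coefficient tensors $X_l(c)$ are mixtures of identity factors and powers of $c$; they are \emph{not} of the rank-one symmetric form $q^{\otimes l}$. The guarantee $\|\sum_p \sigma_p\, p^{\otimes l}\|_{T_l} = O(\sqrt{d\, l\log^3 l})$ from \Cref{thm:discrepancy_upper} controls only pairings $\langle\,\cdot\,, q^{\otimes l}\rangle$ --- that is the definition of $\|\cdot\|_{T_l}$ --- so the step ``pairing a bounded-norm tensor against this'' does not follow. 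Bounding $|\langle \sum_p \sigma_p\, p^{\otimes l}, X_l(c)\rangle|$ via $\|\cdot\|_{T_l}$ would require the symmetric \emph{nuclear} norm of $X_l(c)$, and each identity factor $I$ has nuclear norm $d$, so the terms coming from $\|p\|_2^{2a}$ pick up $d^a$ and destroy the $\sqrt d$ dependence you are after.

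The paper sidesteps this with a lifting. Map each $p \in B(0,1)$ to $\phi(p) = \bigl(\tfrac12\|p\|_2^2,\ \tfrac{\sqrt 2}{2}\,p,\ \tfrac12\bigr) \in B(0,1)\subset\R^{d+2}$ and apply the tensor-balancing part of \Cref{thm:discrepancy_upper} to $\phi(P)$ rather than to $P$. After the rescaling $p' = p/(4r)$, $c' = c/(4r)$ (which you also anticipated, pulling out the $(4r)^z$ factor), one builds a companion map $\psi$ with $\|p'-c'\|_2^2 = \langle \phi(p), \psi(c')\rangle$ and $\psi(c') \in B(0,\tfrac13)$. Now $d(p',c')^z = g(\langle \phi(p), \psi(c')\rangle)$ with $g(x) = x^{z/2}$ is genuinely a function of a single inner product, the coefficient tensors in the expansion are the rank-one powers $\psi(c')^{\otimes l}$, and \Cref{thm:discrepancy_upper} applies as a black box in dimension $d+2$. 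The $\|p\|_2^2$ obstacle is absorbed into one extra coordinate of $\phi(p)$ rather than into an identity tensor --- that is the idea your plan is missing.
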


\noindent
The above lemma indicates that the class discrepancy for Euclidean \onezC linearly depends on the radius $r$ of the parameter space $\calC$.
Note that the lemma finds a vector $\sigma$ that satisfies all levels of parameter spaces $\calC= B(0,r)$ simultaneously. 
This requirement is slightly different from \Cref{def:discrepancy} that considers a fixed parameter space.
Observe that the term $\max\left\{1,r\right\}$ is similar to $\max\left\{1,\|c\|_2\right\}$ in \Cref{def:mixed_coreset}, which is the key of reduction from \Cref{lm:discrepancy_median} to \Cref{thm:upper}.
The proof idea is similar to that of Fact 6 of~\citep{karnin2019discrepancy}.

\begin{proof}[of \Cref{thm:upper}]
    Let $P\subset B(0,1)$ be a dataset of size $n$ and $\Lambda = z^{O(z)} \sqrt{d}\eps^{-1}$.
    By the same argument as in Fact 6 of~\citep{karnin2019discrepancy}, we can iteratively applying \Cref{lm:discrepancy_median} to construct a subset $S\subseteq P$ of size $m = \Theta(\Lambda)$ together with weights $w(p) = \frac{n}{|S|}$ for $p\in S$ and a vector $\sigma\in \left\{-1,1\right\}^S$, and $(S,\sigma)$ satisfies that for any $c\in \R^d$,
    \begin{align*}
    & \quad \left| \sum_{p\in S} w(p)\cdot d(p,c) - \cost_z(P,c) \right| \\
    \leq & \quad n\cdot D^{(B(0,\|c\|_2))}_S(f,\sigma) \\
    \leq & \quad \eps \max\left\{1, \|c\|_2\right\}\cdot n.
    \end{align*}
    %
    %
    
    This implies that $S$ is an $O(\eps)$-mixed coreset for Euclidean \onezC of size at most $\Lambda = z^{O(z)}\sqrt{d} \eps^{-1}$, which completes the proof of \Cref{thm:upper}.
\end{proof}

\noindent
It remains to prove \Cref{lm:discrepancy_median}.

\begin{proof}[of \Cref{lm:discrepancy_median}]
    Let $P\subset B(0,1)$ be a dataset of size $m$.
    We first construct a vector $\sigma\in \left\{-1,1\right\}^P$ by the following way: 
    \begin{enumerate}
        \item For each $p\in P$, construct a point $\phi(p)=(\frac{1}{2}\|p\|_2^2, \frac{\sqrt{2}}{2} p, \frac{1}{2})\in \R^{d+2}$.
        \item By \Cref{thm:discrepancy_upper}, construct $\sigma\in \left\{-1,1\right\}^P$ such that for any integer $l\geq 1$, 
        \[
        \|\sum_{p\in P}\sigma_p\cdot \phi(p)^{\otimes l}\|_{T_l} = O(\sqrt{(d+2) l \log^3 l}).
        \]
    \end{enumerate}
    Let $\phi(P)$ be the collection of all $\phi(p)$s.
    Note that $\|\phi(p)\|_2\leq 1$ by construction, which implies that $\phi(P)\subset B(0,1)\subset \R^{d+2}$.
    In the following, we show that $\sigma$ satisfies \Cref{lm:discrepancy_median}.

    Fix $r\geq 1$ and let $\calC=B(0,r)$.
    We construct another dataset $P' = \left\{p'=\frac{p}{4r}: p\in P\right\}$.
    For any $c\in \calC=B(0,r)$, we let $c' = \frac{c}{4r}\in B(0,\frac{1}{4})$.
    By definition, we have for any $p\in \calX$ and $c\in \calC$,
    \[
    \frac{1}{m} \left|\sum_{p\in P}\sigma_p\cdot f(p,c)\right| = \frac{(4r)^z}{m} \left|\sum_{p'\in P'}\sigma_p\cdot f(p',c')\right|,
    \]
    which implies that 
    \[
    D^{(\calC)}_P(f,\sigma) = (4r)^z\cdot D^{(B(0,\frac{1}{4}))}_{P'}(f,\sigma).
    \]
    Thus, it suffices to prove that 
    \begin{align}
    \label{ineq1_proof_lm:discrepancy_median}
    D^{(B(0,\frac{1}{4}))}_{P'}(f,\sigma) \leq z^{O(z)}\sqrt{d}/m,
    \end{align}
    which implies the lemma.
    The proof idea of Inequality~\eqref{ineq1_proof_lm:discrepancy_median} is similar to that of Theorem 22 of~\citep{karnin2019discrepancy}.\footnote{Note that the proof of Theorem 22 of~\citep{karnin2019discrepancy} is actually incorrect. Applying Theorem 18 of~\citep{karnin2019discrepancy} may lead to an upper bound $\|\tilde{q}\|_2< 1$, which makes $R$ in Theorem 22 of~\citep{karnin2019discrepancy} not exist.}
    For each $p'\in P'$ and $c'\in B(0,\frac{1}{4})$, let $\psi(c') = (\frac{1}{8r^2}, -\frac{\sqrt{2}}{2r} c', 2\|c'\|_2^2)\in \R^{d+2}$ and we can rewrite $f(p',c')$ as follows:
    \[
    f(p',c') = \|p'-c'\|_2^z = (\left\langle \phi(p), \psi(c') \right\rangle)^{z/2}.
    \]
    We note that $\phi(p)\in B(0,1)$ and $\psi(c')\in B(0,\frac{1}{3})$ since $c'\in B(0,\frac{1}{4})$.
    Construct another function $g: P\times B(0,\frac{1}{3})$ as follows: for each $p\in P$ and $c\in B(0,\frac{1}{3})$,
    \begin{enumerate}
        \item If for any $p'\in P$, $\langle p',c\rangle \geq 0$, let $g(p,c) = g(\langle p,c\rangle) = (\langle p,c\rangle)^{z/2}$;
        \item Otherwise, let $g(p,c) = 0$.
    \end{enumerate}
    We have $|\frac{d^l g}{d x^l}(x)| \leq z^{O(z)} l!$ for any integer $l\geq 1$.
    By the construction of $\sigma$ and \Cref{thm:discrepancy_upper}, we have that 
    \[
    D^{(B(0,\frac{1}{3}))}_{\phi(P)}(g,\sigma) \leq z^{O(z)}\sqrt{d}/m,
    \]
    which implies Inequality~\eqref{ineq1_proof_lm:discrepancy_median} since $D^{(B(0,\frac{1}{4}))}_{P'}(f,\sigma)\leq D^{(B(0,\frac{1}{3}))}_{\phi(P)}(g,\sigma)$ due to the fact that $\psi(c')\in B(0,\frac{1}{3})$.

    Overall, we complete the proof.
\end{proof}

\subsection{Improved Coreset Lower Bound in $\R^d$ when $k\geq 2$}\label{sec:lb_main}

We present a lower bound for the coreset size in small dimensional spaces. 

\begin{theorem}[\bf{Coreset lower bound in small dimensional spaces}]
\label{thm:lowerbound}
Given an integer $k\geq 1$, constant $z\geq 1$ and a real number $\epsilon \in (0,1)$, for any integer $d \le \frac{1}{100\epsilon^2}$, there is a dataset $P \subset \R^{d+1}$ such that its $\epsilon$-coreset for \kzC must contain at least $\frac{dk}{10 z^4}$ points.
\end{theorem}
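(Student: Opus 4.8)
The plan is to amplify the orthonormal‑basis hard instance behind the $\Omega(k\eps^{-2})$ lower bound of~\citep{cohenaddad2022towards} by packing $k$ separated copies of it into $\R^{d+1}$. Fix the instance $Q$ of~\citep{cohenaddad2022towards} instantiated in the regime $d\le\frac1{100\eps^2}$, so that a dimension‑truncated version of their bound already certifies that every $\eps$-coreset of $Q$ for \onezC needs $\Omega(d/z^4)$ points (one should check their slack lets the threshold be taken as, say, $\tfrac{d}{5z^4}$ while the violated query still has error at least $3\eps$ times its cost). Take $Q=\{e_1,\dots,e_d\}$ (with suitable multiplicities) placed in the hyperplane $\{x_{d+1}=0\}$, let $o=0$ be its origin — equidistant from all points of $Q$ — and write $\Phi=\cost_z(Q,o)=\Theta(|Q|)$. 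For $i=1,\dots,k$ let $G_i$ be the translate of $Q$ into $\{x_{d+1}=iM\}$ for a gap $M\gg\diam(Q)$, let $o_i$ be the corresponding translate of $o$, and set $P=\bigcup_{i=1}^k G_i$; scaling makes $P$ an integer point set.

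Two structural facts drive the argument. First, a \emph{decoupling lemma}: because $M\gg\diam(Q)$, for any $k$-center query $C$ an optimal assignment serves each $G_i$ entirely by one center, and after discarding redundant centers this is a bijection, so $\cost_z(P,C)=\sum_i\cost_z(G_i,c_{\pi(i)})$ up to a $(1\pm\eps/100)$ factor; the same holds with $P$ replaced by any weighted $S\subseteq P$, and when the center assigned to $G_i$ lies in $G_i$'s hyperplane the contribution of $S\setminus G_i$ to $\cost_z(S,C)$ is unaffected by that center. Second, a \emph{weight‑forcing claim}: querying a center at distance $D\to\infty$ inside $G_i$'s hyperplane (reference centers on the other copies) forces the total coreset weight in $G_i$ to be $W_i\in(1\pm O(\eps))|G_i|$; since every point of $G_i$ is equidistant from $o_i$, this gives $|\cost_z(S\cap G_i,o_i)-\cost_z(G_i,o_i)|=O(\eps\Phi)$ for all $i$ simultaneously. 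Hence the "reference" query $o_i$ is automatically near‑exact on every copy, no matter how few coreset points that copy holds.

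Now suppose $|S|<\frac{dk}{10z^4}$. Since $\sum_i|S\cap G_i|=|S|$, at least $\frac9{10}k$ copies satisfy $|S\cap G_i|<\frac{d}{5z^4}$; call these \emph{deficient}. For a deficient $i$, $(S\cap G_i,w)$ is not an $\eps$-coreset of $G_i$ (the single‑copy bound of~\citep{cohenaddad2022towards} constrains the number of points, not the weights), so some query errs by more than $\eps$ times the true cost; using the mirror‑pair structure of the hard queries one picks such a $c_i^{+}$ inside $G_i$'s hyperplane whose error $\cost_z(S\cap G_i,c_i^{+})-\cost_z(G_i,c_i^{+})$ is \emph{positive} and at least $3\eps\cdot\cost_z(G_i,c_i^{+})=\Theta(\eps\Phi)$. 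Form $C^\star$ using $c_i^{+}$ on every deficient copy and $o_i$ elsewhere. By decoupling, $\cost_z(S,C^\star)-\cost_z(P,C^\star)$ equals $\sum_{i\text{ def.}}(\cost_z(S\cap G_i,c_i^{+})-\cost_z(G_i,c_i^{+}))$ plus $\sum_{i\text{ non-def.}}(\cost_z(S\cap G_i,o_i)-\cost_z(G_i,o_i))$, hence
\[
\cost_z(S,C^\star)-\cost_z(P,C^\star)\ \ge\ 3\eps\!\!\sum_{i\text{ def.}}\!\!\cost_z(G_i,c_i^{+})\ -\ O(\eps k\Phi),
\]
while $\eps\cdot\cost_z(P,C^\star)=\eps\sum_i\cost_z(G_i,\cdot)$ with all per‑copy costs $\Theta(\Phi)$; since deficient copies are a $\tfrac9{10}$‑fraction, the first sum dominates and the coreset inequality is violated once the absolute ($\eps$‑free) constants are tuned, giving $|S|\ge\frac{dk}{10z^4}$.

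The crux is the per‑copy step — converting the single‑subspace bound of~\citep{cohenaddad2022towards} into a query \emph{inside one hyperplane} with \emph{positive} error a constant factor above $\eps\cdot\cost$ — which is exactly where their statement cannot be quoted verbatim: their bound is for stand‑alone coresets of $Q$ with the "right" total weight, whereas $S\cap G_i$ is an uncontrolled weighted subset sitting inside a larger instance. The weight‑forcing claim removes the total‑weight issue and the decoupling lemma removes the influence of the other copies, but one must still re‑open their argument to (i) see that a point‑count deficiency alone yields a bad query and (ii) extract a bad query of a prescribed sign so that the $\ge\frac9{10}k$ deficient‑copy errors reinforce rather than cancel. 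The remaining items — the choice of $M$, discretization of $P$, and bookkeeping of the $O(\eps^2)$-order cross terms — are routine.
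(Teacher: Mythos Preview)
There is a genuine gap at the per-copy step. You place \emph{one} center in each of $k$ copies and then invoke an $\Omega(d/z^4)$ lower bound for \onezC on the orthonormal basis, attributing it to \citep{cohenaddad2022towards}. But that paper's argument hinges on the mirror pair $\{v,-v\}$ and hence needs at least two centers per instance; no single-center bound of this strength is proved there (and the best lower bound in the literature for $k=1$ is only $\Omega(\eps^{-1/2})$, cf.\ Table~\ref{tab:result}). Your appeal to ``the mirror-pair structure of the hard queries'' to fix the sign of the per-copy error is therefore incoherent with a one-center-per-copy layout: a mirror pair \emph{is} two centers, and you have budgeted only one. You correctly flag steps (i)--(ii) as the crux that must be re-opened, but you give no indication of how to carry them out for a single center, and for general $z$ this is not a routine rewriting of \citep{cohenaddad2022towards}.

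The paper takes a different route: it uses $k/2$ copies with \emph{two} centers each, and rather than extracting one sign-controlled bad query per deficient copy, it evaluates three global center sets $C_1,C_2,C_3$ simultaneously --- $C_1$ with centers orthogonal to both the data and the coreset in deficient slabs, $C_2$ with centers optimized for the coreset via \Cref{lem:cost-to-smallset}, and $C_3$ built from Hadamard vectors and then summed over the whole Hadamard basis. Combining the three resulting coreset inequalities (equations \eqref{eqn:weight-constraints} and \eqref{eqn:size-constraint} together with \Cref{lem:hadamard-cost}) bounds the number $|I|$ of deficient slabs by $O(\eps k\sqrt{d})$, which under $d\le 1/(100\eps^2)$ forces $|\bar I|\ge k/5$ and hence $|S|=\Omega(kd)$. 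This sidesteps both the sign/cancellation issue and the need for any standalone \onezC bound, and it also handles coreset points lying off the affine slabs $H_j$ (the $\Delta_p$ terms), a possibility your decoupling lemma implicitly assumes away. Your outline could plausibly be repaired by switching to $k/2$ copies with two centers each and redoing the per-copy deficiency directly from \Cref{lem:cost-to-basis} and \Cref{lem:cost-to-smallset}, but as written the cited black box does not exist.
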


\noindent
When $d = \Theta(\frac{1}{\epsilon^2})$, \Cref{thm:lowerbound} gives the well known lower bound $\frac{k}{\epsilon^2}$. When $d \ll \Theta(\frac{1}{\epsilon^2})$, the theorem is non-trivial. 
In the following, we prove \Cref{thm:lowerbound} for $z = 2$ and show how to extend to general $z \ge 1$ in \Cref{sec:generalz}.

\subsubsection{Preparation}
\paragraph{Notations} Let $e_0, \cdots, e_d$ be the standard basis vectors of $\R^{d+1}$, and $H_1,\cdots,H_{k/2}$ be $k/2$ $d$-dimensional affine subspaces, where $H_j := jLe_0 + \text{span}\left\{e_1,\dots,e_d\right\}$ for a sufficiently large constant $L$. For any $p\in \R^{d+1}$, we use $\Tilde{p}$ to denote the $d$-dimensional vector $p_{1:d}$ (i.e., discard the $0$-th coordinate of $p$).

\paragraph{Hard instance} 
%
We construct the hard instance as follows. Take $P_j = \{ jLe_0+e_1,\cdots, jLe_0+e_{d/2} \}$ for $j \in \{1, \dots, k/2\}$ and take $P$ to be the union of all $P_j$. 
The hard instance is $P$.
%
%
Note that $P_j \subset H_j$ for each $j$ and $|P|=kd/4$.
\color{black}
In our proof, we always put two centers in each $H_j$. Thus for large enough $L$, all $p\in P_j$ must be assigned to centers in $H_j$.

We will use the following two technical lemmas from \citep{cohenaddad2022towards}.
\begin{lemma}\label{lem:cost-to-basis}
For any $k\ge 1$, let $\{c_1,\cdots,c_k\}$ be arbitrary $k$ unit vectors in $\R^d$, we have
\begin{align*}
    \sum_{i=1}^{d/2} \min_{\ell=1}^k\|e_i-c_{\ell}\|^2 \ge d-\sqrt{dk/2}.
\end{align*}
\end{lemma}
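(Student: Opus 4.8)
The point of the lemma is purely geometric: the $e_i$ are unit vectors and the $c_\ell$ are unit vectors, so every squared distance $\|e_i-c_\ell\|^2$ simplifies, and the only thing that matters is how large the coordinates of the $c_\ell$ can collectively be on the first $d/2$ axes. The plan is to reduce the inequality to a bound on $\sum_{i=1}^{d/2}\max_{\ell} (c_\ell)_i$ and then prove that bound by a two-fold application of Cauchy–Schwarz together with a partition of the coordinate set among the centers.

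\textbf{Step 1: pointwise simplification.} Expand $\|e_i-c_\ell\|^2 = \|e_i\|_2^2 - 2\langle e_i,c_\ell\rangle + \|c_\ell\|_2^2 = 2 - 2(c_\ell)_i$, using that both vectors have unit norm; here $(c_\ell)_i$ denotes the $i$-th coordinate of $c_\ell$. Hence $\min_{\ell}\|e_i-c_\ell\|^2 = 2 - 2\max_{\ell}(c_\ell)_i$, and summing over $i=1,\dots,d/2$,
\[
\sum_{i=1}^{d/2}\min_{\ell=1}^{k}\|e_i-c_\ell\|^2 \;=\; d \;-\; 2\sum_{i=1}^{d/2}\max_{\ell=1}^{k}(c_\ell)_i .
\]
So it suffices to show $\sum_{i=1}^{d/2}\max_{\ell}(c_\ell)_i \le \sqrt{dk/2}$ (coordinates $i$ with $\max_\ell(c_\ell)_i\le 0$ only help, so we may assume each term is nonnegative).

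\textbf{Step 2: partition plus Cauchy–Schwarz.} For each $i\in\{1,\dots,d/2\}$ fix a winning center $\ell(i)\in\arg\max_\ell (c_\ell)_i$, and let $T_\ell=\{\,i\le d/2:\ell(i)=\ell\,\}$, so the $T_\ell$ partition $\{1,\dots,d/2\}$. Then
\[
\sum_{i=1}^{d/2}\max_{\ell}(c_\ell)_i \;=\; \sum_{\ell=1}^{k}\sum_{i\in T_\ell}(c_\ell)_i \;\le\; \sum_{\ell=1}^{k}\sqrt{|T_\ell|}\Big(\sum_{i\in T_\ell}(c_\ell)_i^2\Big)^{1/2} \;\le\; \sum_{\ell=1}^{k}\sqrt{|T_\ell|},
\]
where the first inequality is Cauchy–Schwarz within each block and the second uses $\|c_\ell\|_2=1$. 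A second Cauchy–Schwarz over the $k$ centers gives $\sum_{\ell}\sqrt{|T_\ell|}\le\sqrt{k\sum_\ell|T_\ell|}=\sqrt{k\cdot d/2}=\sqrt{dk/2}$. Plugging back into Step 1 yields the claimed lower bound $d-\sqrt{dk/2}$ (up to the precise constant in the lower-order term, which is irrelevant for the use of this lemma inside \Cref{thm:lowerbound}).

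\textbf{Main obstacle.} There is no deep difficulty here — the statement follows from the two nested Cauchy–Schwarz estimates above. The only points requiring a little care are (i) justifying that one may discard coordinates where every center has a non-positive entry, and (ii) handling ties when defining $\ell(i)$ so that the $T_\ell$ genuinely partition the index set; both are routine. The conceptual content is simply the observation in Step 1 that unit-norm centers turn the min-cost-to-basis quantity into an affine function of $\sum_i\max_\ell(c_\ell)_i$.
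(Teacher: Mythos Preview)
The paper does not prove this lemma itself; it simply quotes it from \citep{cohenaddad2022towards}. Your argument is correct and is exactly the natural one: reduce via $\|e_i-c_\ell\|^2=2-2(c_\ell)_i$, partition the coordinates by the winning center, and apply Cauchy--Schwarz twice.

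One point worth making explicit: your Step~2 yields $\sum_{i=1}^{d/2}\max_\ell(c_\ell)_i\le\sqrt{dk/2}$, which after Step~1 gives $d-2\sqrt{dk/2}=d-\sqrt{2dk}$, not the stated $d-\sqrt{dk/2}$. This is not a defect in your proof; the lemma as stated in the paper carries a typo. Indeed, your bound is tight (take the $c_\ell$ supported on disjoint coordinate blocks of size $d/(2k)$ with entries $\sqrt{2k/d}$, which gives exactly $d-\sqrt{2dk}$), and the paper's own general-$z$ version, \Cref{lem:cost-to-basis-z}, when specialized to $z=2$ reads $2^{0}d-2^{1}\sqrt{kd/2}=d-\sqrt{2kd}$, matching what you actually prove. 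So your parenthetical ``up to the precise constant'' is warranted, and the constant is indeed immaterial for the use in \Cref{thm:lowerbound}.
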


\begin{lemma}\label{lem:cost-to-smallset}
Let $S$ be a set of points in $\R^d$ of size $t$ and $w: S\rightarrow \R^+$ be their weights. There exist $2$ unit vectors $v_1, v_2$, such that
\begin{align*}
    \sum_{p\in S} w(p)\min_{\ell=1,2} \|p-v_{\ell}\|^2 \le \sum_{s\in P}w(p)(\|p\|^2+1)-  \frac{2\sum_{p\in S} w(p)\|p\|}{\sqrt{t}}.
\end{align*}
\end{lemma}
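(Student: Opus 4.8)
The plan is to strip the statement down to a ``$2$-median on the unit sphere'' problem and then place the two centers at rescaled weighted centroids of a well-chosen two-part partition of $S$.

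\emph{Reduction to the sphere.} Since each $v_\ell$ is a unit vector, $\|p-v_\ell\|^2=\|p\|^2+1-2\langle p,v_\ell\rangle$, hence $\min_{\ell}\|p-v_\ell\|^2=\|p\|^2+1-2\max_\ell\langle p,v_\ell\rangle$. The terms $\|p\|^2+1$ are fixed, so the claimed inequality is equivalent to the existence of unit vectors $v_1,v_2$ with
\[
\sum_{p\in S} w(p)\,\max_{\ell=1,2}\langle p, v_\ell\rangle \;\ge\; \frac{1}{\sqrt t}\sum_{p\in S} w(p)\,\|p\|.
\]
Writing $\hat p:=p/\|p\|$ (points $p=0$ carry weight $w(p)\|p\|=0$ and may be dropped), $w'(p):=w(p)\|p\|$, $W':=\sum_{p\in S}w'(p)$, and using $\langle p,v_\ell\rangle=\|p\|\langle\hat p,v_\ell\rangle$, this becomes: there exist unit $v_1,v_2$ with $\sum_{p}w'(p)\max_\ell\langle\hat p,v_\ell\rangle\ge W'/\sqrt t$. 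So we may assume all points lie on the unit sphere, with nonnegative weights $w'(p)$ of total mass $W'$ and $|S|=t$.

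\emph{Centroid centers, and the partition.} For a partition $S=S_1\sqcup S_2$ set $u_i:=\sum_{p\in S_i}w'(p)\hat p$ and $v_i:=u_i/\|u_i\|$ (an arbitrary unit vector if $u_i=0$). Since $\max_\ell\langle\hat p,v_\ell\rangle\ge\langle\hat p,v_i\rangle$ for $p\in S_i$, we get $\sum_p w'(p)\max_\ell\langle\hat p,v_\ell\rangle\ge\langle u_1,v_1\rangle+\langle u_2,v_2\rangle=\|u_1\|+\|u_2\|$, so it suffices to find a partition with $\|u_1\|+\|u_2\|\ge W'/\sqrt t$. Let $u:=u_1+u_2=\sum_p w'(p)\hat p$. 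If $\|u\|\ge W'/\sqrt t$, take $S_1=S$ and we are done. Otherwise take the random partition that sends each point to each side independently with probability $\tfrac12$; a second-moment computation gives
\[
\E\bigl[\|u_1\|^2+\|u_2\|^2\bigr]=\tfrac12\|u\|^2+\tfrac12\sum_{p}w'(p)^2\;\ge\;\frac{W'^2}{2t},
\]
the last step being Cauchy--Schwarz on the $t$ numbers $w'(p)$. So some partition attains $\|u_1\|^2+\|u_2\|^2\ge W'^2/(2t)$; since $u_1+u_2=u$ is short, $u_1$ and $u_2$ are nearly antipodal, so $\|u_1\|+\|u_2\|\approx 2\max(\|u_1\|,\|u_2\|)=\sqrt2\,\sqrt{\|u_1\|^2+\|u_2\|^2}\gtrsim W'/\sqrt t$ (exactly $W'/\sqrt t$ when $u=0$).

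\emph{Main obstacle.} The delicate point is the \emph{exact} constant $1/\sqrt t$ rather than $\Omega(1)/\sqrt t$. Two centers are genuinely necessary: placing them at $\pm v$ for the top eigenvector $v$ of $\sum_p w'(p)\hat p\hat p^{\top}$ only yields $\sum_p w'(p)|\langle\hat p,v\rangle|\ge W'/t$, a factor $\sqrt t$ too small --- the improvement to $W'/\sqrt t$ comes precisely from using centroids of large clusters, not eigenvectors. But the random-partition argument above loses a factor $\sqrt2$, and the cleaner variant ``$v_1=g$, $v_2=-g$ for a uniformly random unit $g\in\mathrm{span}\{\hat p\}$'' only gives $\E_g[\sum_p w'(p)|\langle\hat p,g\rangle|]\approx\sqrt{2/\pi}\cdot W'/\sqrt{t}$ with $\sqrt{2/\pi}<1$. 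The extremal instance is a weighted near-regular simplex of directions, for which $u=0$ and every two-part partition has both centroids of norm $\approx W'/(2\sqrt t)$, so $W'/\sqrt t$ is cleared by only a $\Theta(t^{-3/2})$ margin. Squeezing out the constant therefore forces a careful interleaving of the ``one centroid when $\|u\|$ is large'' and ``two antipodal centroids when $\|u\|$ is tiny'' regimes, using $\|u_1\|+\|u_2\|\ge\sqrt{2(\|u_1\|^2+\|u_2\|^2)-\|u\|^2}-\|u\|$. We remark that for the application in \Cref{thm:lowerbound} any bound of the form $\Omega(1)/\sqrt t$ already suffices, thanks to the $\tfrac1{10z^4}$ slack there, so the constant-chasing can be avoided if one only needs that consequence.
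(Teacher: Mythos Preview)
The paper does not give its own proof of this lemma; it is quoted from \citep{cohenaddad2022towards} (see the sentence preceding \Cref{lem:cost-to-basis}). So there is no in-paper argument to compare against, and I evaluate your proposal on its own merits.

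Your reduction to the sphere is correct, and in fact your random-partition argument already delivers the exact constant --- the ``main obstacle'' you describe is a phantom. The inequality you wrote at the end, $\|u_1\|+\|u_2\|\ge\sqrt{2(\|u_1\|^2+\|u_2\|^2)-\|u\|^2}-\|u\|$, carries a spurious $-\|u\|$: the triangle inequality gives directly
\[
\|u_1\|+\|u_2\|\;\ge\;\|u_1-u_2\|\;=\;\sqrt{2(\|u_1\|^2+\|u_2\|^2)-\|u\|^2}
\]
by the parallelogram law. Combined with your (correct) expectation bound $\E[\|u_1\|^2+\|u_2\|^2]=\tfrac12\|u\|^2+\tfrac12\sum_p w'(p)^2\ge \tfrac12\|u\|^2+W'^2/(2t)$, some partition therefore satisfies $\|u_1-u_2\|^2\ge W'^2/t$, hence $\|u_1\|+\|u_2\|\ge W'/\sqrt t$ with no loss. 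The two regimes do not need to be interleaved at all.

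Phrased equivalently --- and this is the standard argument, consistent with the appendix version \Cref{lem:cost-to-smallset-z} which asserts $v_1=-v_2$ --- take random signs $\sigma_p\in\{\pm1\}$ and set $u_\sigma=\sum_p\sigma_pw'(p)\hat p$; then $\E\|u_\sigma\|^2=\sum_pw'(p)^2\ge W'^2/t$, so some $\sigma$ has $\|u_\sigma\|\ge W'/\sqrt t$, and with $v_1=-v_2=u_\sigma/\|u_\sigma\|$ one gets $\sum_pw'(p)\,|\langle\hat p,v_1\rangle|\ge\langle u_\sigma,v_1\rangle=\|u_\sigma\|\ge W'/\sqrt t$. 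Under the substitution $\sigma_p=\pm1\leftrightarrow p\in S_{1,2}$ this is exactly your $u_1-u_2$, so your approach and the reference's coincide once the extra $-\|u\|$ is dropped.
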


\subsubsection{Proof of \Cref{thm:lowerbound} when $z=2$}

\noindent
Now we are ready to prove \Cref{thm:lowerbound} when $z=2$. 

\begin{proof}
Note that points in $S$ might not be in any $H_j$. We first map each point $p\in S$ to an index $j_p\in [k/2]$ such that $H_{j_p}$ is the nearest subspace of $p$. The mapping is quite simple: 
\begin{align*}
    j_p = \arg\min_{j\in[k/2]} |p_0 - jL|,
\end{align*}
where $p_0$ is the $0$-th coordinate of $p$.
Let $\Delta_p = p_0 - j_pL$, which is the distance of $p$ to the closest affine subspace. Let $S_j:=\{p\in S: j_p=j\}$ be the set of points in $P$, whose closest affine subspace is $H_j$. Define $I:=\{j\in[k/2] : |S_j| \le d/4\}$. Consider any $k$-center set $C$ such that $H_j\bigcap C \neq \emptyset$. Then $\cost(P, C)\ll 0.1 L$ for sufficiently large $L$. On the other hand, $\cost(S, C) \ge \sum_{p\in S} \Delta_p^2$. Since $S$ is a coreset, $\Delta_p^2 \ll L$ for all $p\in S$. \footnote{Here we do not allow offsets to simplify the proof, but our technique can be extended to handle offsets.} Therefore each $p\in S$ must be very close to its closest affine subspace; in particular, we can assume that $p$ must be assigned to some center in $H_{j_p}$ (if there exists one).

In the proof follows, we consider three different set of $k$ centers $C_1,C_2$ and $C_3$ and compare the costs $\cost(P, C_i)$ and $\cost(S, C_i)$ for $i=1,2,3$. In each $C_i$, there are two centers in each $H_j$. As we have discussed above, for large enough $L$, the total cost for both $P$ and $S$ can be decomposed into the sum of costs over all affine subspaces.

For each $j\in \Bar{I}$, the corresponding centers in $H_j$ are the same across $C_1,C_2,C_3$. Let $c_j$ be any point in $H_j$ such that $c_j-jLe_0$ has unit norm and is orthogonal to $e_1,\cdots,e_{d/2}$; in other words, $\|\Tilde{c}_j\|=1$ and the first $d/2$ coordinates of $\Tilde{c}_j=1$ are all zero. Specifically, we set $c_j = jLe_0 + e_{d/2+1}$ and the two centers in $H_j$ are two copies of $c_j$ for $j\in \Bar{I}$. 

We first consider the following $k$ centers denoted by $C_1$. As we have specified the centers for $j\in \Bar{I}$, we only describe the centers for each $j\in I$. Since by definition, $|S_j|\le d/4$, we can find a vector $c_j\in \R^{d+1}$ in $H_j$ such that $c_j-jLe_0$ has unit norm and is orthogonal to $e_1,\cdots,e_{d/2}$ and all vectors in $S_j$. Let $C_1$ be the set of $k$ points with each point in $\{c_1,\cdots,c_{k/2}\}$ copied twice. We evaluate the cost of $C_1$ with respect to $P$ and $S$. 
\begin{lemma}
    For $C_1$ constructed above, we have $\cost(P,C_1) = \frac{kd}{2}$ and 
    \begin{align*}
        \cost(S,C_1) =\sum_{p\in S} w(p)(\Delta_p^2+\|\Tilde{p}\|^2 +1) - 2\sum_{j\in \Bar{I}}\sum_{p\in S_j}w(p)\langle p-jLe_0, jLe_0-c_j \rangle.
    \end{align*}
\end{lemma}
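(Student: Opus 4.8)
The plan is to evaluate $\cost(P,C_1)$ and $\cost(S,C_1)$ separately by decomposing each cost over the affine subspaces $H_1,\dots,H_{k/2}$, which is legitimate for large enough $L$ as already argued in the text. For $\cost(P,C_1)$: each $P_j = \{jLe_0+e_1,\dots,jLe_0+e_{d/2}\}$ is assigned to the two copies of $c_j = jLe_0+e_{d/2+1}$ in $H_j$. Since $\|e_i - e_{d/2+1}\|^2 = 2$ for every $i\in\{1,\dots,d/2\}$ (they are distinct standard basis vectors), the cost contributed by $P_j$ is $\frac{d}{2}\cdot 2 = d$. Summing over the $k/2$ subspaces gives $\cost(P,C_1) = \frac{k}{2}\cdot d = \frac{kd}{2}$. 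This part is a one-line computation once the orthogonality/unit-norm properties of $c_j$ are invoked.

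For $\cost(S,C_1)$, I would again split $S = \bigcup_j S_j$ and handle two regimes. Each $p\in S_j$ is assigned to a center in $H_j$ (by the ``$p$ must be close to $H_{j_p}$'' discussion), and writing $p = jLe_0 + \Delta_p e_0^{\perp\text{-part}} + \dots$ — more precisely, decompose $p - c_j = (p - jLe_0) - (c_j - jLe_0)$, where $c_j - jLe_0$ is a unit vector orthogonal to $e_1,\dots,e_{d/2}$ (and, for $j\in I$, also orthogonal to every vector in $S_j$). Then
\[
\|p - c_j\|^2 = \|p - jLe_0\|^2 - 2\langle p - jLe_0,\, c_j - jLe_0\rangle + \|c_j - jLe_0\|^2 = \Delta_p^2 + \|\tilde p\|^2 + 1 - 2\langle p - jLe_0,\, c_j - jLe_0\rangle,
\]
using $\|p-jLe_0\|^2 = \Delta_p^2 + \|\tilde p\|^2$ (the $0$-th coordinate of $p-jLe_0$ is $\Delta_p$ and the rest is $\tilde p$). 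Now for $j\in I$: $c_j - jLe_0$ is orthogonal to $\tilde p$ for all $p\in S_j$ and its $0$-th component is zero, so $\langle p - jLe_0, c_j - jLe_0\rangle = 0$, and those terms vanish. For $j\in\bar I$: $c_j - jLe_0 = e_{d/2+1}$, so $\langle p - jLe_0, c_j - jLe_0\rangle = \langle p - jLe_0,\, jLe_0 - c_j\rangle\cdot(-1)$, matching the sign in the claimed formula; these are exactly the surviving cross terms. Summing the weighted contributions over all $p\in S$ yields
\[
\cost(S,C_1) = \sum_{p\in S} w(p)(\Delta_p^2 + \|\tilde p\|^2 + 1) - 2\sum_{j\in\bar I}\sum_{p\in S_j} w(p)\langle p - jLe_0,\, jLe_0 - c_j\rangle,
\]
as desired.

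The only genuine subtlety — and the step I'd be most careful about — is justifying that each $p\in S$ is assigned to a center inside $H_{j_p}$ rather than to some center in a different $H_{j'}$, and that there is no meaningful ``leftover'' contribution from the $0$-th coordinate beyond the $\Delta_p^2$ already accounted for. This is where the large-$L$ argument and the coreset property (forcing $\Delta_p^2 \ll L$) are used: the distance from $p$ to any center in $H_{j'}$ with $j'\neq j_p$ is at least $(L - |\Delta_p|)$-ish, which dominates, so the nearest-center assignment is as claimed. I would state this cleanly as a consequence of the preceding paragraph in the proof (the $C$ with $H_j\cap C\neq\emptyset$ discussion) rather than re-derive it, and then the rest is the bookkeeping above. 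No new inequality (Cauchy–Schwarz, Lemma~\ref{lem:cost-to-basis}, etc.) is needed for this particular lemma; those enter only when later comparing against $C_2, C_3$.
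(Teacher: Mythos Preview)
Your proposal is correct and follows exactly the paper's argument: decompose over the subspaces $H_j$, use that $c_j-jLe_0$ is a unit vector orthogonal to $e_1,\dots,e_{d/2}$ to get $\cost(P,C_1)=\tfrac{kd}{2}$, and expand $\|p-c_j\|^2=\|p-jLe_0\|^2+1-2\langle p-jLe_0,\,c_j-jLe_0\rangle$ noting that the cross term vanishes for $j\in I$ by the choice of $c_j$. (Your ``matching the sign'' step is a bit loose --- carrying the algebra through actually yields $+2\langle p-jLe_0,\,jLe_0-c_j\rangle$, the same harmless sign slip that appears in the paper's own expansion; it is immaterial since the term is absorbed into $\kappa$ and cancels in all subsequent comparisons.)
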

\begin{proof}
Since $e_i$ is orthogonal to $c_j-jLe_0$ and $c_j-jLe_0$ has unit norm for all $i,j$, it follows that
\begin{align}
\cost(P,C_1) &=
    \sum_{j=1}^{k/2}\sum_{i=1}^{d/2} \min_{c\in C_1}\|jLe_0+e_i-c\|^2 = \sum_{j=1}^{k/2} \sum_{i=1}^{d/2} \|jLe_0+e_i -c_j\|^2 \nonumber\\
    &=\sum_{j=1}^{k/2}\sum_{i=1}^{d/2}(\|e_i\|^2+\|c_j-jLe_0\|^2-2\langle e_i, c_j-jLe_0 \rangle )\nonumber\\
    &=\frac{kd}{2}.\label{eqn:cost-of-c}
\end{align}
On the other hand, the cost of $C$ w.r.t.\ $S_j$ is
\begin{align}
    \sum_{p\in S_j} \min_{c\in C_1}w(p)\|p-c\|^2 &= \sum_{p\in S_j} w(p)\|p-c_j\|^2 = \sum_{p\in S_j} w(p)\|p-jLe_0+ jLe_0-c_j\|^2 \nonumber\\
    &= \sum_{p\in S_j}w(p)\left(\|p-jLe_0\|^2 + 1 -2\langle p-jLe_0, jLe_0-c_j \rangle \right)\nonumber\\
    &= \sum_{p\in S_j} w(p)(\Delta_p^2+\|\Tilde{p}\|^2 +1) -2w(p)\langle p-jLe_0, jLe_0-c_j \rangle .\label{eqn:cost-of-c-to-S}
\end{align}
Recall $\Tilde{p}\in \R^d$ is $p_{1:d}$. 
For $j\in I$, the inner product is $0$, and thus the total cost w.r.t.\ $S$ is 
\begin{align*}
    \cost(S,C_1) = \sum_{p\in S} w(p)(\Delta_p^2+\|\Tilde{p}\|^2 +1) - 2\sum_{j\in \Bar{I}}\sum_{p\in S_j}w(p)\langle p-jLe_0, jLe_0-c_j \rangle,
\end{align*}
which finishes the proof.
\end{proof}

\noindent
For notational convenience, we define $\kappa := 2\sum_{j\in \Bar{I}}\sum_{p\in S_j}w(p)\langle p-jLe_0, jLe_0-c_j \rangle$.
Since $S$ is an $\eps$-coreset of $P$, we have 
\begin{align}\label{eqn:weight-constraints}
    dk/2- \eps dk/2\le \sum_{p\in S} w(p)(\Delta_p^2+\|p'\|^2 +1) - \kappa \le dk/2+\eps dk/2.
\end{align}

Next we consider a different set of $k$ centers denoted by $C_2$. By \Cref{lem:cost-to-smallset}, there exists unit vectors $v^j_1,v^j_2 \in \R^d$ such that
\begin{align}
    \sum_{p\in S_j} w(p)(\min_{\ell=1,2} \|\Tilde{p}-v^j_{\ell}\|^2+\Delta_p^2) \le \sum_{p\in S_j}w(p)(\|\Tilde{p}\|^2+1 + \Delta_p^2)-  \frac{2\sum_{p\in S_j} w(p)\|\Tilde{p}\|}{\sqrt{|S_j|}}. \label{eqn:small-coreset-cost}
\end{align}
Applying this to all $j\in I$ and get corresponding $v^j_1,v^j_2$ for all $j\in I$. Let $C_2=\{u_1^1,u_2^2,\cdots, u_1^{k/2},u_2^{k/2}\}$ be a set of $k$ centers in $\R^{d+1}$ defined as follows: if $j\in I$, $u_{\ell}^j$ is $v_{\ell}^j$ with an additional $0$th coordinate with value $jL$, making them lie in $H_j$; for $j\in \Bar{I}$, we use the same centers as in $C_1$, i.e., $u_{1}^j=u_{2}^j =c_j$. 

\begin{lemma}
    For $C_2$ constructed above, we have 
    \begin{align*}
        \cost(P,C_2) \ge \frac{kd}{2}-\sqrt{d}|I| \text{ and } 
    \end{align*} 
    \begin{align*}
        \cost(S,C_2) \le \sum_{p\in S}w(p)(\|\Tilde{p}\|^2+1+\Delta_p^2)-  \sum_{j\in I} \frac{2\sum_{p\in S_j} w(p)\|\Tilde{p}\|}{\sqrt{|S_j|}} -\kappa.
    \end{align*}
\end{lemma}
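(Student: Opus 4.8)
The plan is to reuse the affine-subspace decomposition already set up for $C_1$. Since $L$ is taken enormous and $S$ is an $\eps$-coreset, every $p\in S_j$ is forced to lie essentially on $H_j$ (its offset satisfies $\Delta_p^2 \ll L$, as argued in the surrounding proof), so it is served by one of the two centers of $C_2$ placed inside $H_j$; hence both $\cost(P,C_2)$ and $\cost(S,C_2)$ split as $\sum_{j\in I}(\cdot) + \sum_{j\in\Bar{I}}(\cdot)$, and on each $H_j$ we analyze the two index sets separately.

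For the lower bound on $\cost(P,C_2)$: if $j\in\Bar{I}$ the two centers both equal $c_j$ with $\|\Tilde{c}_j\|=1$ and $\langle e_i,\Tilde{c}_j\rangle=0$ for $i\le d/2$, so the cost restricted to $H_j$ is $\sum_{i=1}^{d/2}\|jLe_0+e_i-c_j\|^2 = \sum_{i=1}^{d/2}(\|e_i\|^2+1) = d$, exactly as in \eqref{eqn:cost-of-c}. If $j\in I$ the two centers are the unit vectors $v_1^j,v_2^j$ lifted into $H_j$, so the cost restricted to $H_j$ is $\sum_{i=1}^{d/2}\min_{\ell=1,2}\|e_i-v_\ell^j\|^2 \ge d-\sqrt{d}$ by \Cref{lem:cost-to-basis} applied with $k=2$. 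Summing over the $|\Bar{I}|+|I|=k/2$ subspaces gives $\cost(P,C_2) \ge |\Bar{I}|\,d + |I|(d-\sqrt{d}) = \frac{kd}{2} - \sqrt{d}\,|I|$.

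For the upper bound on $\cost(S,C_2)$: when $j\in\Bar{I}$ the centers in $H_j$ coincide with those of $C_1$, so $\cost(S_j,C_2)=\cost(S_j,C_1)$ is precisely the right-hand side of \eqref{eqn:cost-of-c-to-S}, namely $\sum_{p\in S_j}w(p)(\Delta_p^2+\|\Tilde{p}\|^2+1) - 2\sum_{p\in S_j}w(p)\langle p-jLe_0,\,jLe_0-c_j\rangle$. When $j\in I$, writing $p=(jL+\Delta_p)e_0+\Tilde{p}$ and $u_\ell^j = jLe_0+v_\ell^j$ gives $\|p-u_\ell^j\|^2 = \Delta_p^2 + \|\Tilde{p}-v_\ell^j\|^2$, hence $\cost(S_j,C_2) = \sum_{p\in S_j}w(p)\big(\Delta_p^2 + \min_{\ell=1,2}\|\Tilde{p}-v_\ell^j\|^2\big)$, which by \eqref{eqn:small-coreset-cost} is at most $\sum_{p\in S_j}w(p)(\Delta_p^2+\|\Tilde{p}\|^2+1) - \frac{2\sum_{p\in S_j}w(p)\|\Tilde{p}\|}{\sqrt{|S_j|}}$. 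Adding over all $j$ and recalling $\kappa = 2\sum_{j\in\Bar{I}}\sum_{p\in S_j}w(p)\langle p-jLe_0,\,jLe_0-c_j\rangle$ yields exactly the claimed inequality.

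The one point that requires care, rather than routine computation, is the subspace decomposition itself: one must certify that for $L$ large enough no point of $P$ or $S$ is better served by a center outside its home subspace, and that the $\min$ over the whole of $C_2$ collapses to the $\min$ over the two centers of $C_2$ lying in the relevant $H_j$. That argument is already supplied in the proof text preceding this lemma (via $\cost(P,C)\ll 0.1L$ and the coreset property forcing $\Delta_p^2\ll L$), so the lemma itself reduces to matching cross terms and keeping the $v_\ell^j$ returned by \Cref{lem:cost-to-smallset} the same in the $P$- and $S$-estimates.
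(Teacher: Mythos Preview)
Your proof is correct and follows essentially the same approach as the paper: for $\cost(P,C_2)$ you use orthogonality on $\Bar{I}$ and \Cref{lem:cost-to-basis} on $I$, and for $\cost(S,C_2)$ you reuse \eqref{eqn:cost-of-c-to-S} on $\Bar{I}$ and \eqref{eqn:small-coreset-cost} on $I$, then combine via the definition of $\kappa$. The paper's proof is organized identically, so there is nothing materially different to compare.
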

\begin{proof}
By \eqref{eqn:small-coreset-cost}, 
\begin{align*}
\cost(S,C_2) &= \sum_{j=1}^{k/2} \sum_{p\in S_j}w(p)\min_{c\in C_2}\|p-c\|^2 \\
&=  \sum_{j\in I}\sum_{p\in S_j} w(p)\min_{\ell=1,2} (\|\Tilde{p}-v^j_{\ell}\|^2+\Delta^2_p) +\sum_{j\in \Bar{I}} \sum_{p\in S_j}w(p)\|p-c_j\|^2\\
    &\le \sum_{p\in S}w(p)(\|\Tilde{p}\|^2+1+\Delta_p^2)-  \sum_{j\in I} \frac{2\sum_{p\in S_j} w(p)\|\Tilde{p}\|}{\sqrt{|S_j|}} -\kappa.
\end{align*}
By \Cref{lem:cost-to-basis} (with $k=2$), we have
\begin{align*}
    \sum_{i=1}^{d/2} \min_{\ell=1,2}\|e_i-v^j_{\ell}\|^2 \ge {d}-\sqrt{d}.
\end{align*}
It follows that
\begin{align*}
    \cost(P,C_2)&=\sum_{j=1}^{k/2}\sum_{i=1}^{d/2} \min_{c\in C_2}\|jLe_0+e_i-c\|^2 = \sum_{j\in I}\sum_{i=1}^{d/2} \min_{\ell=1,2} \|e_i-v^{j}_{\ell}\|^2 + \sum_{j\in \Bar{I}}\sum_{i=1}^{d/2} \|jLe_0+e_i-c\|^2 \\
    &\ge \frac{kd}{2}-\sqrt{d}|I|,
\end{align*}
where in the inequality, we also used the orthogonality between $e_i$ and $c_j-jLe_0$.
\end{proof}

\noindent
Since $S$ is an $\eps$-coreset of $P$, we have
\begin{align*}
    \frac{dk}{2}-|I|\sqrt{d} - \frac{\eps dk}{2} \le (\frac{dk}{2}-|I|\sqrt{d})(1-\eps) \le \sum_{p\in S}w(p)(\|\Tilde{p}\|^2+1+\Delta_p^2)-  \sum_{j\in I} \frac{2\sum_{p\in S_j} w(p)\|\Tilde{p}\|}{\sqrt{|S_j|}}-\kappa,
\end{align*}
which implies
\begin{align}
    \sum_{j\in I} \frac{2\sum_{p\in S_j} w(p)\|\Tilde{p}\|}{\sqrt{|S_j|}}
    &\le \sum_{p\in S}w(p)(\|\Tilde{p}\|^2+1+\Delta_p^2) -\frac{dk-2|I|\sqrt{d}-\eps kd}{2}-\kappa
    \nonumber\\
    &\le  \frac{dk+\eps dk}{2} - \frac{dk-2|I|\sqrt{d}-\eps kd}{2} \quad\textnormal{by \eqref{eqn:weight-constraints}} \nonumber\\
    &= |I|\sqrt{d} +\eps kd. \nonumber
\end{align}
By definition, $|S_j| \le d/4$, so
\begin{align*}
      \sum_{j\in I} \frac{2\sum_{p\in S_j} w(p)\|\Tilde{p}\|}{\sqrt{d/4}}\le \sum_{j\in I} \frac{2\sum_{p\in S_j} w(p)\|\Tilde{p}\|}{\sqrt{|S_j|}},
\end{align*}
and it follows that
\begin{align}\label{eqn:size-constraint}
     \frac{\sum_{j\in I} \sum_{p\in S_j} w(p)\|\Tilde{p}\|}{\sqrt{d}} 
     \le \frac{|I|\sqrt{d} +\eps kd} {4}.
\end{align}

Finally we consider a third set of $k$ centers $C_3$. Similarly, there are two centers per group. We set $m$ be a power of $2$ in $[d/2,d]$. Let $h_1,\cdots,h_m$ be the $m$-dimensional Hadamard basis vectors. So all $h_{\ell}$'s are $\{-\frac{1}{\sqrt{m}},\frac{1}{\sqrt{m}}\}$ vectors and $h_1=(\frac{1}{\sqrt{m}},\cdots,\frac{1}{\sqrt{m}})$. We slightly abuse notation and treat each $h_{\ell}$ as a $d$-dimensional vector by concatenating zeros in the end. For each $h_{\ell}$ construct a set of $k$ centers as follows. For each $j\in \Bar{I}$, we still use two copies of $c_j$.  For $j\in I$, the $0$th coordinate of the two centers is $jL$, then we concatenate $h_{\ell}$ and $-h_{\ell}$ respectively to the first and the second centers. 
\begin{lemma}\label{lem:hadamard-cost}
     Suppose $C_3$ is constructed based on $h_{\ell}$. Then for all $\ell\in [m]$, we have 
    \begin{align*}
        \cost(P,C_3) = \frac{kd}{2} -\frac{d|I|}{\sqrt{m}}\text{ and } 
    \end{align*} 
    \begin{align*}
        \cost(S,C_3) = \sum_{p\in S} w(p) (\|\Tilde{p}\|^2+1 + \Delta_p^2) - 2\sum_{j\in I} \sum_{p\in S_j}\langle w(p) \Tilde{p}, h^p_{\ell} \rangle -\kappa.
    \end{align*}
\end{lemma}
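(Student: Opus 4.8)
The plan is to prove both identities by the same subspace decomposition already used for $C_1$ and $C_2$: since $L$ is a sufficiently large constant, $\cost(P,C_3)$ and $\cost(S,C_3)$ each split as a sum over the affine subspaces $H_j$, $j\in[k/2]$, because every point of $P_j\subseteq H_j$, and every point $p\in S$ with $j_p=j$ (recall $\Delta_p^2\ll L$), is served by one of the two centers of $C_3$ that lie in $H_j$. For $j\in\Bar{I}$ these two centers are again the two copies of $c_j=jLe_0+e_{d/2+1}$, exactly as in $C_1$; hence such a group contributes $\sum_{i=1}^{d/2}\|e_i-e_{d/2+1}\|^2=d$ to $\cost(P,C_3)$, and contributes $\sum_{p\in S_j}w(p)(\Delta_p^2+\|\Tilde{p}\|^2+1)-2\sum_{p\in S_j}w(p)\langle p-jLe_0,\,jLe_0-c_j\rangle$ to $\cost(S,C_3)$, reproducing the $j\in\Bar{I}$ part of $\kappa$.

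For $j\in I$ the two centers are $jLe_0+h_{\ell}$ and $jLe_0-h_{\ell}$. On the $P$ side I would write, for the point $jLe_0+e_i$, $\min\big(\|e_i-h_{\ell}\|^2,\|e_i+h_{\ell}\|^2\big)=\|e_i\|^2+\|h_{\ell}\|^2-2|\langle e_i,h_{\ell}\rangle|=2-2/\sqrt{m}$, using that $h_{\ell}$ is a unit vector and that $|\langle e_i,h_{\ell}\rangle|=1/\sqrt{m}$ for every $i\le d/2\le m$ (each Hadamard basis vector has all of its first $m$ coordinates equal to $\pm1/\sqrt{m}$); summing over $i=1,\dots,d/2$ gives $d-d/\sqrt{m}$. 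On the $S$ side, for $p\in S_j$ set $q:=p-jLe_0$, so $q_0=\Delta_p$ and the remaining coordinates of $q$ are $\Tilde{p}$; then $\min\big(\|q-h_{\ell}\|^2,\|q+h_{\ell}\|^2\big)=\|q\|^2+1-2|\langle \Tilde{p},h_{\ell}\rangle|=\Delta_p^2+\|\Tilde{p}\|^2+1-2\langle \Tilde{p},h^p_{\ell}\rangle$, where $h^p_{\ell}\in\{h_{\ell},-h_{\ell}\}$ is the choice realizing the minimum; multiplying by $w(p)$ gives $w(p)(\Delta_p^2+\|\Tilde{p}\|^2+1)-2\langle w(p)\Tilde{p},h^p_{\ell}\rangle$.

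Summing the two kinds of groups then finishes the proof: $\cost(P,C_3)=|\Bar{I}|\,d+|I|(d-d/\sqrt{m})=\frac{kd}{2}-\frac{d|I|}{\sqrt{m}}$ using $|I|+|\Bar{I}|=k/2$, and $\cost(S,C_3)=\sum_{p\in S}w(p)(\|\Tilde{p}\|^2+1+\Delta_p^2)-2\sum_{j\in I}\sum_{p\in S_j}\langle w(p)\Tilde{p},h^p_{\ell}\rangle-\kappa$, where the first term is obtained by merging the per-group diagonal sums over all $j\in[k/2]$ (since $S=\bigcup_j S_j$) and $\kappa$ collects precisely the $j\in\Bar{I}$ cross terms. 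I do not expect a genuine obstacle here: the argument is a routine variant of the $C_1$ computation, and the only points requiring care are (i) invoking once more the large-$L$ assignment argument so that the costs decompose over the $H_j$, and (ii) bookkeeping the sign choice $h^p_{\ell}$ and regrouping the sums so that the stray cross terms assemble into exactly $\kappa$.
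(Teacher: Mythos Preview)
Your proposal is correct and follows essentially the same approach as the paper's proof: both decompose the costs over the subspaces $H_j$, treat the $j\in\Bar{I}$ groups exactly as in $C_1$ (yielding $d$ for $P_j$ and the $\kappa$ cross terms for $S_j$), and for $j\in I$ expand $\min_{s=\pm1}\|\cdot - s h_\ell\|^2$ using $\|h_\ell\|=1$ and $|\langle e_i,h_\ell\rangle|=1/\sqrt{m}$, introducing the sign choice $h^p_\ell$ on the $S$ side. Your bookkeeping of $q=p-jLe_0$, the merging of diagonal sums over all $j$, and the assembly of $\kappa$ are all handled just as the paper does.
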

\begin{proof}
For $j\in I$, the cost of the two centers w.r.t.\ $P_j$ is
\begin{align*}
    \cost(P_j,C_3) = \sum_{i=1}^{d/2} \min_{s=-1,+1}\|e_i - s\cdot h_{\ell}\|^2 = \sum_{i=1}^{d/2} (2-2\max_{s=-1,+1}\langle h_{\ell},e_i\rangle)=\sum_{i=1}^{d/2} (2-\frac{2}{\sqrt{m}}) = d-\frac{d}{\sqrt{m}}.
\end{align*}
For $j\in \Bar{I}$, the cost w.r.t.\ $P_j$ is $d$ by \eqref{eqn:cost-of-c}.
Thus, the total cost over all subspaces is 
\begin{align*}
\cost(P,C_3) =    (d-\frac{d}{\sqrt{m}})|I| + \left(\frac{k}{2} -|I| \right)d = \frac{kd}{2} -\frac{d|I|}{\sqrt{m}}.
\end{align*} 

On the other hand, for $j\in I$, the cost w.r.t.\ $S_j$ is 
\begin{align*}
    \sum_{p\in S_j} w(p)(\Delta_p^2+ \min_{s=\{-1,+1\}} \|\Tilde{p}-s\cdot h_{\ell}\|^2) &=\sum_{p\in S_j} w(p) (\|\Tilde{p}\|^2+1 + \Delta_p^2 - 2\max_{s=\{-1,+1\}}\langle \Tilde{p}, s\cdot h_{\ell} \rangle)\\
   & =\sum_{p\in S_j} w(p) (\|\Tilde{p}\|^2+1 + \Delta_p^2 - 2\langle \Tilde{p}, h^p_{\ell} \rangle).
\end{align*}
Here $h^p_{\ell} = s^p\cdot h_{\ell}$, where  $s^p=\arg\max_{s=\{-1,+1\}}\langle \Tilde{p}, s\cdot h_{\ell} \rangle$.
For $j\in \Bar{I}$, the cost w.r.t.\ $S_j$ is $\sum_{p\in S_j} w(p)(\Delta_p^2+\|\Tilde{p}\|^2 +1) -2\langle p-jLe_0, jLe_0-c_j \rangle )$ by \eqref{eqn:cost-of-c-to-S}. Thus, the total cost w.r.t.\ $S$ is
\begin{align*}
    \cost(S,C_3) = \sum_{p\in S} w(p) (\|\Tilde{p}\|^2+1 + \Delta_p^2) - 2\sum_{j\in I} \sum_{p\in S_j}\langle w(p) \Tilde{p}, h^p_{\ell} \rangle -\kappa .
\end{align*}
This finishes the proof.
\end{proof}

\begin{corollary}
Let $S$ be a $\eps$-coreset of $P$, and $I =\{j: |S_j|\le d/4\}$. Then 
\begin{align*}
    \sum_{j\in I}\sum_{p\in S_j} w(p)\|\Tilde{p}\| \ge \frac{d|I|-\eps kd\sqrt{d}}{2}. \\
\end{align*}
\end{corollary}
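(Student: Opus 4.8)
The plan is to average the cost inequalities from \Cref{lem:hadamard-cost} over all $m$ Hadamard vectors and combine the resulting bound with the weight constraint \eqref{eqn:weight-constraints}. The key point is that averaging $\cost(S,C_3)$ over $\ell \in [m]$ kills the cross term $\sum_{j\in I}\sum_{p\in S_j}\langle w(p)\Tilde{p}, h^p_\ell\rangle$ up to a controlled amount, leaving an expression close to $\sum_{p\in S} w(p)(\|\Tilde{p}\|^2+1+\Delta_p^2)-\kappa$, which by \eqref{eqn:weight-constraints} is at most $\tfrac{dk+\eps dk}{2}$. On the other hand, averaging $\cost(P,C_3) = \tfrac{kd}{2}-\tfrac{d|I|}{\sqrt{m}}$ gives the same value (it is independent of $\ell$), and coreset accuracy forces $\cost(S,C_3)$ to be at least $(1-\eps)\big(\tfrac{kd}{2}-\tfrac{d|I|}{\sqrt{m}}\big)$ for every $\ell$, hence also on average.

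**Key steps in order.** First I would fix $\ell$ and use that $S$ is an $\eps$-coreset together with the second identity in \Cref{lem:hadamard-cost}, rearranging to isolate $2\sum_{j\in I}\sum_{p\in S_j}\langle w(p)\Tilde{p}, h^p_\ell\rangle$ in terms of $\sum_{p\in S} w(p)(\|\Tilde{p}\|^2+1+\Delta_p^2)-\kappa$, $\eps$, and $\cost(P,C_3)$. Second, average this over $\ell \in [m]$. The crucial sub-step is estimating $\frac{1}{m}\sum_{\ell=1}^m \langle \Tilde{p}, h^p_\ell\rangle$ from below for a fixed $p$: since $h^p_\ell = s^p_\ell h_\ell$ with $s^p_\ell$ chosen to make $\langle \Tilde{p}, s^p_\ell h_\ell\rangle \ge 0$, each term is nonnegative, and because $h_1=(\tfrac{1}{\sqrt m},\dots,\tfrac{1}{\sqrt m},0,\dots)$ one has (for the $\ell=1$ term alone, and using $\|\Tilde p\|$ concentration of the Hadamard coefficients) an aggregate lower bound; more cleanly, $\sum_{\ell=1}^m \langle \Tilde p, s^p_\ell h_\ell\rangle = \sum_\ell |\langle \Tilde p, h_\ell\rangle|$, and since $\{h_\ell\}$ is an orthonormal basis of $\R^m$ (restricted to the first $m$ coordinates, which contain all of $\Tilde p$ up to the tail) we get $\sum_\ell |\langle \Tilde p, h_\ell\rangle| \ge \big(\sum_\ell \langle \Tilde p, h_\ell\rangle^2\big)^{1/2} = \|\Tilde p\|$ when $\Tilde p$ is supported on the first $m$ coordinates — but here I must be careful that $\Tilde p \in \R^d$ with $d \ge m$, so only the first $m$ coordinates are captured; since in our hard instance the relevant points lie in $\mathrm{span}\{e_1,\dots,e_{d/2}\}$ and $m \ge d/2$, this is fine. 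Thus $\frac1m\sum_\ell \langle \Tilde p, h^p_\ell\rangle \ge \frac{\|\Tilde p\|}{m} \cdot \frac{1}{?}$ — the honest bound is $\frac1m \sum_\ell |\langle \Tilde p, h_\ell\rangle| \ge \frac1m \|\Tilde p\|$ by Cauchy–Schwarz against the all-ones vector combined with $\sum_\ell \langle \Tilde p, h_\ell\rangle^2 = \|\Tilde p\|^2$, giving $\ge \|\Tilde p\|/\sqrt m$. Third, plug $\frac{1}{m}\sum_\ell \cost(S,C_3^{(\ell)}) \le \tfrac{dk+\eps dk}{2} - \frac{2}{\sqrt m}\sum_{j\in I}\sum_{p\in S_j} w(p)\|\Tilde p\| + (\text{stuff that is }\le \kappa\text{-bounded})$ against the coreset lower bound $(1-\eps)(\tfrac{kd}{2}-\tfrac{d|I|}{\sqrt m})$, and solve for $\sum_{j\in I}\sum_{p\in S_j} w(p)\|\Tilde p\|$. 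Using $m \le d$ so that $\sqrt m \le \sqrt d$ and $\tfrac{d|I|}{\sqrt m} \ge \tfrac{d|I|}{\sqrt d} = |I|\sqrt d$, the arithmetic should collapse to $\sum_{j\in I}\sum_{p\in S_j} w(p)\|\Tilde p\| \ge \frac{d|I| - \eps kd\sqrt d}{2}$, possibly after discarding lower-order $\eps$ terms into the stated slack.

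**Main obstacle.** The delicate part is handling the $\kappa$ term and the $\eps\cdot\cost(P,C_3)$ slack cleanly so that everything lines up with the exact constant in the corollary: $\kappa$ is common to $\cost(S,C_1)$, $\cost(S,C_2)$, and $\cost(S,C_3)$, so it should cancel when comparing against \eqref{eqn:weight-constraints}, but one must verify that the direction of the coreset inequality on $C_3$ (a lower bound on $\cost(S,C_3)$) is the one that yields an \emph{upper} bound after rearrangement is actually what we need — here we want a \emph{lower} bound on $\sum w(p)\|\Tilde p\|$, which comes precisely from $\cost(S,C_3) \ge (1-\eps)\cost(P,C_3)$ forcing the negative cross term $-\frac{2}{\sqrt m}\sum w(p)\|\Tilde p\|$ not to be too negative. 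The other subtlety is justifying the lower bound $\frac1m\sum_\ell |\langle \Tilde p, h_\ell\rangle| \ge \|\Tilde p\|/\sqrt m$ uniformly, which relies on the Hadamard vectors spanning the subspace containing all the $\Tilde p$'s; this is guaranteed by the construction ($m \ge d/2$ and points live in the first $d/2$ coordinates), but it should be stated explicitly. Once these two points are pinned down, the corollary follows by straightforward algebra from \eqref{eqn:weight-constraints}, \Cref{lem:hadamard-cost}, and the $\eps$-coreset property.
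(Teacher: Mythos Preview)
Your overall plan---sum the identities from \Cref{lem:hadamard-cost} over all $m$ Hadamard directions, combine with \eqref{eqn:weight-constraints}, and extract a bound on $\sum_{j\in I}\sum_{p\in S_j} w(p)\|\Tilde p\|$---is exactly the right structure, but the directions of the two central inequalities are reversed, and this makes the argument collapse.

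First, the claimed bound $\tfrac{1}{m}\sum_{\ell} |\langle \Tilde p, h_\ell\rangle| \ge \|\Tilde p\|/\sqrt m$ is false. Writing $a_\ell = \langle \Tilde p, h_\ell\rangle$ and using that $\{h_\ell\}$ is orthonormal on the relevant coordinates, one has $\sum_\ell a_\ell^2 = \|\Tilde p\|^2$, and Cauchy--Schwarz against the all-ones vector gives $\sum_\ell |a_\ell| \le \sqrt m\,\|\Tilde p\|$, i.e.\ the \emph{opposite} inequality. The only generally valid lower bound is the $\ell_1\!\ge\!\ell_2$ estimate $\sum_\ell |a_\ell|\ge \|\Tilde p\|$, which after dividing by $m$ is a factor $\sqrt m$ too weak for your purposes. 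Second, even granting your inequality, the chain you set up---upper-bounding $\tfrac1m\sum_\ell \cost(S,C_3^{(\ell)})$ via \eqref{eqn:weight-constraints} and then lower-bounding the same quantity by $(1-\eps)\cost(P,C_3)$---produces an \emph{upper} bound on $\sum w(p)\|\Tilde p\|$, not the lower bound the corollary asserts.

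The paper flips both directions. It uses the coreset \emph{upper} bound $\cost(S,C_3)\le (1+\eps)\cost(P,C_3)$ together with \eqref{eqn:weight-constraints} to obtain, for every $\ell$, a \emph{lower} bound
\[
2\sum_{j\in I}\sum_{p\in S_j}\langle w(p)\Tilde p, h^p_\ell\rangle \ \ge\ \frac{d|I|}{\sqrt m}-\eps kd,
\]
sums this over $\ell\in[m]$, and then applies Cauchy--Schwarz in the form $\sum_\ell \langle \Tilde p, h^p_\ell\rangle = \langle \Tilde p,\sum_\ell s^p_\ell h_\ell\rangle \le \|\Tilde p\|\cdot\|\sum_\ell s^p_\ell h_\ell\| = \sqrt m\,\|\Tilde p\|$ (here orthonormality of the $h_\ell$ gives $\|\sum_\ell \pm h_\ell\|=\sqrt m$ regardless of signs). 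Chaining the lower bound on the summed cross term with this upper bound yields $2\sqrt m\sum_{j\in I}\sum_{p\in S_j} w(p)\|\Tilde p\| \ge d|I|\sqrt m - \eps kd m$, and dividing by $2\sqrt m$ (with $m\le d$) gives the corollary. Your intuition about the Hadamard averaging and the role of $\kappa$ is sound; the fix is simply to swap which side of the coreset inequality you start from and to use Cauchy--Schwarz as an upper bound on the cross term rather than attempting a lower bound.
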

\begin{proof}
Since $S$ is an $\eps$-coreset, we have by \Cref{lem:hadamard-cost}
\begin{align*}
    2\sum_{j\in I} \sum_{p\in S_j}\langle w(p) \Tilde{p}, h^p_{\ell} \rangle &\ge \sum_{p\in S} w(p) (\|\Tilde{p}\|^2+1 + \Delta_p^2) -\kappa -(\frac{kd}{2} -\frac{d|I|}{\sqrt{m}})(1+\eps)\\
    &\ge \sum_{p\in S} w(p) (\|\Tilde{p}\|^2+1 + \Delta_p^2) -\kappa -\frac{kd}{2} +\frac{d|I|}{\sqrt{m}}-\frac{\eps kd}{2}\\
    &\ge \frac{dk-\eps dk}{2} - \frac{kd}{2} +\frac{d|I|}{\sqrt{m}}-\frac{\eps kd}{2} \quad\textnormal{by \eqref{eqn:weight-constraints}}\\
    &= \frac{d|I|}{\sqrt{m}} - \eps kd.
\end{align*}
Note that the above inequality holds for all $\ell\in[m]$, then 
$$ 2\sum_{\ell=1}^m\sum_{j\in I} \sum_{p\in S_j}\langle w(p) \Tilde{p}, h^p_{\ell}\rangle\ge d|I|\sqrt{m} - \eps kdm.$$

By the Cauchy-Schwartz inequality, 
\begin{align*}
    \sum_{\ell=1}^m\sum_{j\in I} \sum_{p\in S_j}\langle w(p) \Tilde{p}, h^p_{\ell} \rangle &=  \sum_{j\in I}\sum_{p\in S_j}\langle w(p) \Tilde{p}, \sum_{\ell=1}^m h^p_{\ell} \rangle \\
    &\le \sum_{j\in I}\sum_{p\in S_j} w(p)\|\Tilde{p}\| \|\sum_{\ell=1}^m h^p_{\ell} \| \\
    &= \sqrt{m}\sum_{j\in I}\sum_{p\in S_j} w(p)\|\Tilde{p}\|.
\end{align*}
Therefore, we have
\begin{align*}
    \sum_{j\in I}\sum_{p\in S_j} w(p)\|\Tilde{p}\| \ge \frac{d|I|-\eps kd\sqrt{m}}{2} \ge \frac{d|I|-\eps kd\sqrt{d}}{2}.
\end{align*}
\end{proof}

\noindent
Combining the above corollary with \eqref{eqn:size-constraint}, we have
\begin{align*}
    \frac{\sqrt{d}|I| -\eps kd }{2} \le \frac{|I|\sqrt{d} + \eps kd}{4} \implies |I| \le 3\eps k\sqrt{d}.
\end{align*}
By the assumption $d\le \frac{1}{100\eps^2}$, it holds that $|I| \le \frac{3k}{10}$ or $|\Bar{I}|\ge \frac{k}{2} -\frac{3k}{10}=\frac{k}{5}$. Moreover, since $|S_j|>\frac{d}{4}$ for each $j\in \Bar{I}$, we have $|S|>\frac{d}{4}\cdot \frac{k}{5} = \frac{kd}{20}$.
\end{proof}


\section{Conclusion}

This work studies coresets for \kMedian problem in small dimensional Euclidean spaces. We give tight size bounds for \kMedian in $\R$ and show that the framework in \citep{HarPeled2005SmallerCF}, with significant improvement, is optimal.  For $d \ge 2$, we improve existing coreset upper bounds for \oneMedian and prove new lower bounds.

Our work leaves several interesting problems for future research. One of which is to close the gap between upper bounds and lower bounds for $d \ge 2$. Another one is to generalize our results to \kzC for general $z$. Note that the generalization is non-trivial even for $d=1$ since the cost function is piece-wise linear for \kMedian while piece-wise polynomial of order $z$ for general \kzC.

\newpage

\bibliographystyle{plainnat}
\bibliography{references}
	
\newpage
\appendix
\appendixpage
\addappheadtotoc

\section{Coreset Lower Bound for General \kMedian in $\mathbb{R}$}
\label{sec:general1dkmedianlower}
We prove the general case of \Cref{thm:Lowerbound1dkmedian} here.
\begin{proof}[the general case of \Cref{thm:Lowerbound1dkmedian}]

We first construct the hard instance $P$. Let $P_1$ denote the hard instance we have constructed in the proof of \Cref{thm:Lowerbound1dkmedian}. We take a large enough constant $L>0$, take $P_i = (i-1)L + P_1$, and take $P = \cup_{i=1}^{\frac{k}{2}} P_i$. Here $(i-1)L + P_1$ means $\{(i-1)L+p|p \in P_1\}$.

The dataset $P$ is a unification of $\frac{k}{2}$ copies of $P_1$. These copies are far from each other. Thus \kMedian problem on $P$ can be decomposed to  \twoMedian problem on each copy. We prove the \kMedian lower bound by applying the argument for the \twoMedian lower bound on every single copy and combining them together.

 We denote $P_1 = \cup_{j =1}^{\frac{1}{\epsilon}} I_{1,j}$, where $I_{1,j}$ is the $j$-th interval we constructed in the proof of the \twoMedian case of \Cref{thm:Lowerbound1dkmedian}. We denote $I_{i,j} = (i-1)L + I_{1,j}$, denote the left endpoint and right endpoint of $I_{i,j}$ by $l_{i,j}$ and $r_{i,j}$ respectively. We have $P_i = \cup_{j =1}^{\frac{1}{\epsilon}} I_{i,j}$.

Now, assume that $S$ is an $\frac{\epsilon}{300}$ coreset of $P$ such that $|S| < \frac{k}{4\epsilon}$. We prove that there must be a contradiction. Since $|S| < \frac{k}{4\epsilon}$, there must be at least half of $i$ such that $(l_{i,j_i}, r_{i,j_i}) \cap S = \varnothing$ for some $j_i$. We assume that these indexes are $1,2,\dots,\frac{k}{4}$, without loss of generality. We define a parametrized query family as $Q(t) = \cup_{i=1}^{\frac{k}{2}} Q_i(t)$, where $t \in [\frac{1}{3},1]$ and
\[
Q_i(t) = \begin{cases}
    \{l_{i,1}, l_{i,j_i} + t(r_{i,j_i} - l_{i,j_i}), r_{i,j_i} \} \quad \text{for } i \le \frac{k}{4},\\
    \{l_{i,1}\} \quad \text{otherwise}.
\end{cases}
\]

Consider $\cost(P,Q(t))$, a function of $t$. Since $L$ is large enough, we have $\cost(P,Q(t)) = \sum_{i=1}^{\frac{k}{2}} \cost(P_i,Q_i(t))$. The computation we have done in the proof of the \twoMedian case of \Cref{thm:Lowerbound1dkmedian} implies that $\cost(P_i,Q_i(t)) \le \frac{2}{\epsilon}$ for each $i$ and
\[(1-\frac{1}{3})^2 \frac{\mathrm{d^2}}{\mathrm{dt^2}} \cost(P_i,Q_i(t)) = 
\begin{cases}
    \frac{4}{9} \quad \text{for } i \le \frac{k}{4}, \\
    0 \quad \text{otherwise}.
\end{cases}
\]
Thus we have $\cost(P,Q(t)) \le \frac{k}{\epsilon}$ and $(1 - \frac{1}{3})^2 \frac{\mathrm{d^2}}{\mathrm{dt^2}} \cost(P,Q(t)) = \frac{k}{9}$.

It's easy to see that $\cost(S,Q(t))$ is affine linear since $(l_{i,j_i}, r_{i,j_i}) \cap S = \varnothing$ for $i \le \frac{k}{4}$. Since $S$ is an $\frac{\epsilon}{300}$ coreset, we have $|\cost(S,Q(t)) - \cost(P,Q(t))| \le \frac{\epsilon}{300} \cost(P,Q(t))$. By \Cref{lem:quadratic_approximation}, we must have $\frac{\epsilon}{300} \ge \frac{1}{32} \frac{\epsilon}{k} \frac{k}{9} > \frac{\epsilon}{300}$, which leads to a contradiction.
\end{proof}

\section{Proof of \Cref{thm:lowerbound} for General $z\geq 1$}
\label{sec:generalz}

Using similar ideas from \citep{cohenaddad2022towards}, our proof of the lower bound for $z=2$ can be extended to arbitrary $z$. First, we provide two lemmas analogous to \Cref{lem:cost-to-basis} and \Cref{lem:cost-to-smallset} for general $z\ge 1$. Their proofs can be found in Appendix A in \citep{cohenaddad2022towards}. 

\begin{lemma}\label{lem:cost-to-basis-z}
For any even number $k\ge 1$, let $\{c_1,\cdots,c_k\}$ be arbitrary $k$ unit vectors in $\R^d$ such that for each $i$ there exist some $j$ satisfying $c_i=-c_j$. We have
\begin{align*}
    \sum_{i=1}^{d/2} \min_{\ell=1}^k\|e_i-c_{\ell}\|^z \ge 2^{z/2-1}d - 2^{z/2}\max\{1, z/2\}\sqrt{\frac{kd}{2}}.
\end{align*}
\end{lemma}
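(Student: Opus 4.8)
The plan is to mimic the proof of the $z=2$ version, \Cref{lem:cost-to-basis}, replacing the exact identity $\|e_i-c_\ell\|^2 = 2-2\langle e_i,c_\ell\rangle$ by an affine \emph{lower} bound on $\|e_i-c_\ell\|^z = 2^{z/2}(1-\langle e_i,c_\ell\rangle)^{z/2}$, viewed as a function of $\langle e_i,c_\ell\rangle$.

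First I would reduce the left-hand side to a scalar sum. Since $t\mapsto (1-t)^{z/2}$ is non-increasing for $t\le 1$, the inner minimum $\min_\ell (1-\langle e_i,c_\ell\rangle)^{z/2}$ is attained at the index $\ell$ maximizing $\langle e_i,c_\ell\rangle$. Writing $x_i := \max_{\ell}\langle e_i,c_\ell\rangle$, the hypothesis that every $c_i$ has an antipode $c_j=-c_i$ among the centers forces $x_i = \max_\ell|\langle e_i,c_\ell\rangle|\ge 0$, while Cauchy--Schwarz gives $x_i\le 1$; hence
$$\sum_{i=1}^{d/2}\min_{\ell}\|e_i-c_\ell\|^z \;=\; 2^{z/2}\sum_{i=1}^{d/2}(1-x_i)^{z/2},\qquad x_i\in[0,1].$$

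Next I would bound $(1-t)^{z/2}$ below by an affine function on $[0,1]$. For $z\ge 2$ the map $t\mapsto (1-t)^{z/2}$ is convex, so it lies above its tangent at $t=0$: $(1-t)^{z/2}\ge 1-\tfrac{z}{2}t$. For $1\le z<2$ it is concave, so it lies above the chord through $(0,1)$ and $(1,0)$: $(1-t)^{z/2}\ge 1-t$. In either case $(1-t)^{z/2}\ge 1-\max\{1,z/2\}\,t$ on $[0,1]$, which gives
$$\sum_{i=1}^{d/2}(1-x_i)^{z/2}\;\ge\;\frac{d}{2}-\max\{1,z/2\}\sum_{i=1}^{d/2}x_i.$$
Finally I would control $\sum_i x_i$ exactly as in the $z=2$ proof: Cauchy--Schwarz gives $\sum_{i=1}^{d/2}x_i\le\sqrt{(d/2)\sum_{i=1}^{d/2}x_i^2}$, and since $x_i^2\le\sum_{\ell}\langle e_i,c_\ell\rangle^2$, swapping the order of summation and using that $e_1,\dots,e_{d/2}$ are orthonormal yields $\sum_{i=1}^{d/2}x_i^2\le\sum_{\ell=1}^{k}\|c_\ell\|^2=k$, so $\sum_i x_i\le\sqrt{dk/2}$. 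Substituting back produces $2^{z/2}\bigl(\tfrac{d}{2}-\max\{1,z/2\}\sqrt{kd/2}\bigr) = 2^{z/2-1}d-2^{z/2}\max\{1,z/2\}\sqrt{kd/2}$, as claimed.

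There is no genuine obstacle here; the one point worth flagging is that the sign of the second derivative of $(1-t)^{z/2}$ flips at $z=2$, so the affine lower bound must be produced by two different elementary estimates (tangent versus chord), and the $\max\{1,z/2\}$ in the statement is exactly the quantity that reconciles the two regimes. Every other step is identical to the argument for \Cref{lem:cost-to-basis}.
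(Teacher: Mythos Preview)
Your proposal is correct. The paper does not actually prove \Cref{lem:cost-to-basis-z} itself but defers to Appendix~A of \citep{cohenaddad2022towards}; your argument---linearizing $(1-t)^{z/2}$ on $[0,1]$ via tangent/chord according to the sign of the second derivative, then bounding $\sum_i x_i$ by Cauchy--Schwarz exactly as in the $z=2$ case---is precisely the standard route taken there, so nothing further is needed.
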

\begin{lemma}\label{lem:cost-to-smallset-z}
Let $S$ be a set of points in $\R^d$ of size $t$ and $w: S\rightarrow \R^+$ be their weights. For arbitrary $\Delta_p$ for each $p$, there exist $2$ unit vectors $v_1, v_2$ satisfying $v_1=-v_2$, such that
\begin{align*}
    \sum_{p\in S} w(p)\min_{\ell=1,2} \left(\|p-v_{\ell}\|^2+\Delta_p^2 \right)^{z/2} \le &\sum_{s\in P}w(p)(\|p\|^2+1+\Delta_p^2)^{z/2}\\
    &-\min\{1,z/2\}\cdot \frac{2\sum_{p\in S} w(p)(\|p\|^2+1+\Delta_p^2)^{z/2-1} \|p\|}{\sqrt{t}}.
\end{align*}
\end{lemma}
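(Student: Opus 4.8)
\textbf{Proof proposal for \Cref{lem:cost-to-smallset-z}.}
The plan is to set $v_2=-v_1$ and write $v:=v_1$; since $\|v\|=1$, for every $p\in S$ we have $\min_{\ell=1,2}\|p-v_\ell\|^2=\|p\|^2+1-2|\langle p,v\rangle|$, and hence $\min_{\ell=1,2}\big(\|p-v_\ell\|^2+\Delta_p^2\big)^{z/2}=\big(\|p\|^2+1+\Delta_p^2-2|\langle p,v\rangle|\big)^{z/2}$. This reduces the lemma to two essentially independent tasks: a pointwise scalar estimate bounding $(a-b)^{z/2}$ against $a^{z/2}-c\,b\,a^{z/2-1}$ for a suitable constant $c$, and a single good choice of the unit vector $v$ making $\sum_{p\in S}(\text{weight}_p)\cdot|\langle p,v\rangle|$ large.

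For the scalar estimate I would prove that for all $\theta\in[0,1]$ and all $z\ge 1$, $(1-\theta)^{z/2}\le 1-\min\{1,z/2\}\,\theta$. When $z\ge 2$ this holds because $1-\theta\in[0,1]$, the map $x\mapsto x^{z/2}$ is increasing with $x^{z/2}\le x$ on $[0,1]$, and $\min\{1,z/2\}=1$; when $1\le z<2$ it is exactly Bernoulli's inequality $(1-\theta)^{r}\le 1-r\theta$ with $r=z/2\in(0,1]$. I then apply it with $a_p:=\|p\|^2+1+\Delta_p^2$, $b_p:=2|\langle p,v\rangle|$ and $\theta_p:=b_p/a_p$, which lies in $[0,1]$ because $2|\langle p,v\rangle|\le 2\|p\|\le\|p\|^2+1\le a_p$. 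This yields the pointwise bound $\big(a_p-b_p\big)^{z/2}\le a_p^{z/2}-\min\{1,z/2\}\,b_p\,a_p^{z/2-1}$; multiplying by $w(p)$, summing over $p\in S$, and using $|\langle p,v\rangle|=\|p\|\,|\langle\hat p,v\rangle|$ with $\hat p:=p/\|p\|$, the lemma follows as soon as I can exhibit a unit vector $v$ with $\sum_{p\in S}\nu_p\,|\langle\hat p,v\rangle|\ge \tfrac1{\sqrt t}\sum_{p\in S}\nu_p$, where $\nu_p:=w(p)\,\|p\|\,(\|p\|^2+1+\Delta_p^2)^{z/2-1}\ge 0$ (points with $p=0$ contribute equally to both sides of the claimed inequality and may be discarded).

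Producing this $v$ is the step I expect to be the crux, because a naive spectral choice is too weak. Enumerate $S=\{p_1,\dots,p_t\}$ and, for $s\in\{-1,1\}^t$, set $u_s:=\sum_{i=1}^t\nu_i s_i\hat p_i$. A direct expansion gives $\E_s\|u_s\|^2=\sum_{i=1}^t\nu_i^2$ (the cross terms vanish since $\E[s_is_j]=0$ for $i\ne j$ and $\|\hat p_i\|=1$), so some sign pattern $s^\star$ satisfies $\|u_{s^\star}\|\ge\sqrt{\sum_i\nu_i^2}\ge\tfrac1{\sqrt t}\sum_i\nu_i$, the last step by Cauchy--Schwarz. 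Taking $v:=u_{s^\star}/\|u_{s^\star}\|$ (if $u_{s^\star}=0$ then all $\nu_i=0$ and the lemma is trivial) we obtain $\sum_i\nu_i\,|\langle\hat p_i,v\rangle|\ge\langle u_{s^\star},v\rangle=\|u_{s^\star}\|\ge\tfrac1{\sqrt t}\sum_i\nu_i$, as needed. Choosing $v$ instead as the leading eigenvector of $\sum_i\nu_i\hat p_i\hat p_i^\top$ would only yield the weaker factor $1/t$; the improvement to $1/\sqrt t$ is precisely what the random-sign argument buys, and it is tight when $S$ is an orthonormal set.

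Combining the two steps proves \Cref{lem:cost-to-smallset-z} for every $z\ge 1$. Specializing to $z=2$, where $\min\{1,z/2\}=1$ and the exponent $z/2-1$ vanishes, recovers \Cref{lem:cost-to-smallset}; and, as in the $z=2$ argument, the vector $v$ produced here is selected once and does not depend on the standard basis vectors or the centers it is later compared against, which is what lets it be substituted into the $C_2$-type cost bounds used in the proof of \Cref{thm:lowerbound}.
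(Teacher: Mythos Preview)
Your proof is correct. The paper does not give its own argument for this lemma; it simply states that the proof ``can be found in Appendix~A in \citep{cohenaddad2022towards}.'' Your approach---expand $\min_{\ell}\|p-v_\ell\|^2$ with the antipodal pair $v,-v$ to isolate $2|\langle p,v\rangle|$, linearize $(a-b)^{z/2}$ via the elementary inequality $(1-\theta)^{z/2}\le 1-\min\{1,z/2\}\,\theta$, and then select $v$ by the random-sign averaging $\E_s\|u_s\|^2=\sum_i\nu_i^2$ combined with Cauchy--Schwarz---is exactly the standard route and matches what that reference does. The only remark is cosmetic: in the Bernoulli step you may want to note explicitly that the concavity form $(1-\theta)^r\le 1-r\theta$ for $r\in(0,1]$ is the relevant direction, and that the boundary case $z=2$ is an identity, so the two regimes dovetail cleanly.
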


\noindent
In this proof, the original point set $P$ and three sets of $k$-centers, namely $C_1,C_2,C_3$, are the same as for the case $z=2$. The difference is that now $I=\{j:|S_j|\le \frac{d}{2^z}\}$ and when constructing $C_2$, we use \Cref{lem:cost-to-smallset-z} in place of \Cref{lem:cost-to-smallset}. Again, we compare the cost of $P$ and $S$ w.r.t. $C_1,C_2,C_3$ and get the following lemmas.

\begin{lemma}
    For $C_1$ constructed above, we have $\cost(P,C_1) = \frac{kd}{4}\cdot 2^{z/2}$ and 
    \begin{align*}
        \cost(S,C_1) =\sum_{j\in {I}}\sum_{p\in S_j} w(p)(\Delta_p^2+\|\Tilde{p}\|^2 +1)^{z/2} +\sum_{j\in \Bar{I}}\sum_{p\in S_j}w(p)\|p-c_j\|^{z}.
    \end{align*}
\end{lemma}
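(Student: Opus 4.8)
The plan is to mimic the $z=2$ computation exactly, substituting the generalized lemmas at the points where the power of the distance matters. First I would compute $\cost(P,C_1)$: since each center $c_j$ satisfies $\|\tilde c_j\|=1$ and $\tilde c_j$ is orthogonal to $e_1,\dots,e_{d/2}$, each point $jLe_0+e_i$ is at squared distance $\|e_i\|^2 + \|c_j - jLe_0\|^2 = 1+1 = 2$ from its center, hence at distance-to-the-$z$ equal to $2^{z/2}$. There are $k/2$ subspaces and $d/2$ points in each, giving $\cost(P,C_1) = \frac{k}{2}\cdot\frac{d}{2}\cdot 2^{z/2} = \frac{kd}{4}2^{z/2}$, as claimed.

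Next I would compute $\cost(S,C_1)$ by splitting $S = \bigcup_j S_j$ according to the nearest-subspace map $j_p$, exactly as in the $z=2$ case. For $j\in I$, the center $c_j$ was chosen orthogonal not only to $e_1,\dots,e_{d/2}$ but also to all of $\tilde S_j$ (possible because $|S_j|\le d/2^z$ leaves enough room in the $d$-dimensional subspace), so for $p\in S_j$ we get $\|p-c_j\|^2 = \Delta_p^2 + \|\tilde p\|^2 + 1$ with no cross term, yielding the contribution $\sum_{p\in S_j} w(p)(\Delta_p^2+\|\tilde p\|^2+1)^{z/2}$. For $j\in\bar I$ the centers in $C_1$ are the same copies of $c_j$ as in $C_1$ for $z=2$, so the contribution is simply $\sum_{p\in S_j} w(p)\|p-c_j\|^z$ with the cross term left intact (it is absorbed into the $\kappa$-like quantity in later comparisons). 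Summing over $j\in I$ and $j\in\bar I$ gives the stated formula for $\cost(S,C_1)$. The only subtlety relative to $z=2$ is that for $j\in I$ one cannot pull the cross-term expansion of $\|p-c_j\|^z$ apart additively, which is precisely why we keep the whole quantity $(\Delta_p^2+\|\tilde p\|^2+1)^{z/2}$ rather than separating a linear term; this is already reflected in the statement, so no extra work is needed here.

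I expect the routine part to be the two cost computations above; the genuine obstacle — which lies in the subsequent lemmas, not this one — is handling the $C_2$ and $C_3$ comparisons, where the $z$-th powers no longer expand into clean linear functionals and one must instead invoke \Cref{lem:cost-to-basis-z} and \Cref{lem:cost-to-smallset-z} with their $\min\{1,z/2\}$ and $\max\{1,z/2\}$ factors, and track how these constants propagate into the final bound $|I|\le O(z^? \eps k\sqrt d)$. For the present lemma, however, everything reduces to the orthogonality of $c_j$ to the relevant coordinate and data vectors plus the Pythagorean identity $\|p-c_j\|^2 = \Delta_p^2 + \|\tilde p - \tilde c_j\|^2$, so I would simply write out these two displays and conclude.
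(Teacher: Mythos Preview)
Your proposal is correct and follows essentially the same argument as the paper: compute $\cost(P,C_1)$ via the orthogonality of $c_j-jLe_0$ to the $e_i$'s to get $2^{z/2}$ per point, and compute $\cost(S,C_1)$ by splitting over $j\in I$ (where orthogonality to $\tilde S_j$ kills the cross term, leaving $(\Delta_p^2+\|\tilde p\|^2+1)^{z/2}$) and $j\in\bar I$ (where the cost is left as $\|p-c_j\|^z$). Your remark that for general $z$ one cannot additively separate the cross term and must keep the full $(\cdot)^{z/2}$ is exactly the point, and matches the paper's treatment.
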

\begin{proof}
Since $e_i$ is orthogonal to $c_j-jLe_0$ and $c_j-jLe_0$ has unit norm for all $i,j$, it follows that
\begin{align}
\cost(P,C_1) &=
    \sum_{j=1}^{k/2}\sum_{i=1}^{d/2} \min_{c\in C_1}\|jLe_0+e_i-c\|^{2\cdot z/2} = \sum_{j=1}^{k/2} \sum_{i=1}^{d/2} \|jLe_0+e_i -c_j\|^{2\cdot z/2} \nonumber\\
    &=\sum_{j=1}^{k/2}\sum_{i=1}^{d/2}(\|e_i\|^2+\|c_j-jLe_0\|^2-2\langle e_i, c_j-jLe_0 \rangle )^{z/2}\nonumber\\
    &=\frac{kd}{4}\cdot 2^{z/2}.\label{eqn:cost-of-c-z}
\end{align}
On the other hand, the cost of $C_1$ w.r.t.\ $S_j$ is
\begin{align}
    \sum_{p\in S_j} \min_{c\in C_1}w(p)\|p-c\|^{2\cdot z/2} &= \sum_{p\in S_j} w(p)\|p-c_j\|^{2\cdot z/2} = \sum_{p\in S_j} w(p)\|p-jLe_0+ jLe_0-c_j\|^{2\cdot z/2} \nonumber\\
    &= \sum_{p\in S_j}w(p) \left( \|p-jLe_0\|^2 + 1 -2\langle p-jLe_0, jLe_0-c_j \rangle \right)^{z/2}.\label{eqn:cost-of-c-to-S-z}
\end{align}
For $j\in I$, the inner product is $0$, and thus the total cost w.r.t.\ $S$ is 
\begin{align*}
    \cost(S,C_1) = \sum_{j\in {I}}\sum_{p\in S_j} w(p)(\Delta_p^2+\|\Tilde{p}\|^2 +1)^{z/2} + \sum_{j\in \Bar{I}}\sum_{p\in S_j}w(p)\|p-c_j\|^{z},
\end{align*}
which finishes the proof.
\end{proof}

\noindent
For notational convenience, we define $\kappa := \sum_{j\in \Bar{I}}\sum_{p\in S_j}w(p)\|p-c_j\|^{z}$.
Since $S$ is an $\eps$-coreset of $P$, we have 
\begin{align}\label{eqn:weight-constraints-z}
    \frac{kd}{4}\cdot 2^{z/2} - \frac{\eps kd}{4}\cdot 2^{z/2}  \le \sum_{j\in {I}}\sum_{p\in S_j} w(p)(\Delta_p^2+\|\Tilde{p}\|^2 +1)^{z/2} + \kappa \le \frac{kd}{4}\cdot 2^{z/2} + \frac{\eps kd}{4}2^{z/2}.
\end{align}

Next we consider a different set of $k$ centers denoted by $C_2$. By \Cref{lem:cost-to-smallset-z}, there exists unit vectors $v^j_1,v^j_2 \in \R^d$ satisfying $v^j_1=-v^j_2$ such that
\begin{align}
    \sum_{p\in S_j} w(p)(\min_{\ell=1,2} \left(\|\Tilde{p}-v^j_{\ell}\|^2+\Delta^2_p\right)^{z/2}) \le&
    \sum_{p\in S_j}w(p)(\|\Tilde{p}\|^2+1 +\Delta^2_p )^{z/2} \nonumber\\
    &-  \min\{1,z/2\}\frac{2\sum_{p\in S_j} w(p)(\|\Tilde{p}\|^2+1 +\Delta^2_p)^{z/2-1} \|\Tilde{p}\|}{\sqrt{|S_j|}}. \label{eqn:small-coreset-cost-z}
\end{align}
Applying this to all $j\in I$ and get corresponding $v^j_1,v^j_2$ for all $j\in I$. Let $C_2=\{u_1^1,u_2^2,\cdots, u_1^{k/2},u_2^{k/2}\}$ be a set of $k$ centers in $\R^{d+1}$ defined as follows: if $j\in I$, $u_{\ell}^j$ is $v_{\ell}^j$ with an additional $0$th coordinate with value $jL$, making them lie in $H_j$; for $j\in \Bar{I}$, we use the same centers as in $C_1$, i.e., $u_{1}^j=u_{2}^j =c_j$.

\begin{lemma}
    For $C_2$ constructed above, we have 
    \begin{align*}
        \cost(P,C_2) \ge 2^{z/2}\left(\frac{kd}{4} -\max\{1, z/2\}\sqrt{d}|I|\right), \text{ and } 
    \end{align*} 
    \begin{align*}
        \cost(S,C_2) \le& \sum_{j\in I}\sum_{p\in S_j}w(p)(\|\Tilde{p}\|^2+1 +\Delta^2_p )^{z/2} \\
    &-  \min\{1,z/2\}\sum_{j\in I}\frac{2\sum_{p\in S_j} w(p)(\|\Tilde{p}\|^2+1 +\Delta^2_p)^{z/2-1} \|\Tilde{p}\|}{\sqrt{|S_j|}}+\kappa.
    \end{align*}
\end{lemma}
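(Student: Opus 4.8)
The plan is to mirror the $z=2$ argument, splitting both $\cost(P,C_2)$ and $\cost(S,C_2)$ as a sum over the affine subspaces $H_j$ and treating the indices $j\in I$ and $j\in\bar I$ separately. Throughout I will use that $L$ is chosen large enough that, as already argued at the start of the proof, the cost of each $C_i$ w.r.t.\ $P$ (resp.\ $S$) decomposes as $\sum_j \cost(P_j,C_i)$ (resp.\ $\sum_j \cost(S_j,C_i)$) and each point of $S_j$ is assigned to one of the two centers placed in $H_j$. Since by construction $u_\ell^j = jLe_0+v_\ell^j$ for $j\in I$ and $u_1^j=u_2^j=c_j$ for $j\in\bar I$, this lets me work coordinate-wise: for $p\in S_j$ with $j\in I$, $\|p-u_\ell^j\|^2 = \Delta_p^2+\|\tilde p - v_\ell^j\|^2$, and $\|jLe_0+e_i-u_\ell^j\|^2=\|e_i-v_\ell^j\|^2$.

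First, for $\cost(P,C_2)$: for $j\in\bar I$, orthogonality of $e_i$ to $c_j-jLe_0$ and $\|c_j-jLe_0\|=1$ give $\cost(P_j,C_2)=\sum_{i=1}^{d/2}2^{z/2}=2^{z/2}\cdot\frac{d}{2}$, exactly as in \eqref{eqn:cost-of-c-z}. For $j\in I$, the two centers restricted to $H_j$ are the antipodal unit vectors $v_1^j=-v_2^j$ returned by \Cref{lem:cost-to-smallset-z}, so \Cref{lem:cost-to-basis-z} with $k=2$ gives $\cost(P_j,C_2)=\sum_{i=1}^{d/2}\min_{\ell=1,2}\|e_i-v_\ell^j\|^z\ge 2^{z/2-1}d-2^{z/2}\max\{1,z/2\}\sqrt{d}$. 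Summing the $|I|$ lower bounds with the $(k/2-|I|)$ exact values $2^{z/2}\cdot\frac{d}{2}$ telescopes to $2^{z/2}\big(\frac{kd}{4}-\max\{1,z/2\}\sqrt{d}\,|I|\big)$.

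Next, for $\cost(S,C_2)$: for $j\in\bar I$ the centers coincide with those of $C_1$, so $\cost(S_j,C_2)=\sum_{p\in S_j}w(p)\|p-c_j\|^z$, and summing over $\bar I$ yields exactly $\kappa$. For $j\in I$, the coordinate-wise split gives $\cost(S_j,C_2)=\sum_{p\in S_j}w(p)\min_{\ell=1,2}\big(\Delta_p^2+\|\tilde p-v_\ell^j\|^2\big)^{z/2}$, which is precisely the left-hand side of \eqref{eqn:small-coreset-cost-z}; invoking that inequality (namely \Cref{lem:cost-to-smallset-z} with the offsets $\Delta_p$ and $t=|S_j|$) bounds it by $\sum_{p\in S_j}w(p)(\|\tilde p\|^2+1+\Delta_p^2)^{z/2}-\min\{1,z/2\}\frac{2\sum_{p\in S_j}w(p)(\|\tilde p\|^2+1+\Delta_p^2)^{z/2-1}\|\tilde p\|}{\sqrt{|S_j|}}$. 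Adding the $j\in I$ bounds to $\kappa$ produces the stated upper bound on $\cost(S,C_2)$.

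I do not expect a substantial obstacle: both parts are direct applications of \Cref{lem:cost-to-basis-z} and \Cref{lem:cost-to-smallset-z} together with the coordinate split. The only places needing care are (i) verifying that the vectors $v_\ell^j$ supplied by \Cref{lem:cost-to-smallset-z} are genuinely antipodal unit vectors, so that \Cref{lem:cost-to-basis-z} applies to them, and (ii) confirming that the large-$L$ assignment argument legitimately reduces squared distances to the coordinate-wise form $\Delta_p^2+\|\tilde p-v_\ell^j\|^2$ even when $p\in S$ does not lie exactly in $H_{j_p}$ — the $\Delta_p$ term then sits additively inside the $z/2$-th power, which is exactly the shape both lemmas are designed to absorb.
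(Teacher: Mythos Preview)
Your proposal is correct and follows essentially the same approach as the paper: split each cost over the affine subspaces $H_j$, use orthogonality (as in \eqref{eqn:cost-of-c-z}) and the definition of $\kappa$ for $j\in\bar I$, and invoke \Cref{lem:cost-to-basis-z} (with $k=2$) and \Cref{lem:cost-to-smallset-z} (via \eqref{eqn:small-coreset-cost-z}) for $j\in I$. Your checkpoints (i) and (ii) are exactly the small care-points the paper's proof implicitly relies on.
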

\begin{proof}
By \eqref{eqn:small-coreset-cost-z}, 
\begin{align*}
\cost(S,C_2) &= \sum_{j=1}^{k/2} \sum_{p\in S_j}w(p)\min_{c\in C_2}\|p-c\|^{2\cdot z/2} =
    \sum_{j\in I}\sum_{p\in S_j} w(p)\min_{\ell=1,2} (\|\Tilde{p}-v^j_{\ell}\|^2+\Delta^2_p)^{z/2} +\kappa\\
    &\le 
    \sum_{j\in I}\sum_{p\in S_j}w(p)(\|\Tilde{p}\|^2+1 +\Delta^2_p )^{z/2} \\
    &-  \min\{1,z/2\}\sum_{j\in I}\frac{2\sum_{p\in S_j} w(p)(\|\Tilde{p}\|^2+1 +\Delta^2_p)^{z/2-1} \|\Tilde{p}\|}{\sqrt{|S_j|}}+\kappa.
\end{align*}
By \Cref{lem:cost-to-basis-z} (with $k=2$), we have
\begin{align*}
    \sum_{i=1}^{d/2} \min_{\ell=1,2}\|e_i-v^j_{\ell}\|^z \ge 2^{z/2-1}d - 2^{z/2}\max\{1, z/2\}\sqrt{d}.
\end{align*}
It follows that
\begin{align*}
    \cost(P,C_2)&=\sum_{j=1}^{k/2}\sum_{i=1}^{d/2} \min_{c\in C_2}\|jLe_0+e_i-c\|^{z}\\
    &= \sum_{j\in I}\sum_{i=1}^{d/2} \min_{\ell=1,2} \|e_i-v^{j}_{\ell}\|^{2\cdot z/2} + \sum_{j\in \Bar{I}}\sum_{i=1}^{d/2} \|jLe_0+e_i-c_j\|^{2\cdot z/2} \\
    & \ge \left( 2^{z/2-1}d - 2^{z/2}\max\{1, z/2\}\sqrt{d} \right)|I| + |\Bar{I}|\frac{d}{2}\cdot 2^{z/2}\\
    &= \frac{kd}{4}2^{z/2} - 2^{z/2}\max\{1, z/2\}\sqrt{d}|I|,
\end{align*}
where in the inequality, we also used the orthogonality between $e_i$ and $c_j-jLe_0$.
\end{proof}

\noindent
Since $S$ is an $\eps$-coreset of $P$, we have
\begin{align*}
    &2^{z/2}\left(\frac{dk}{4}-\max\{1, z/2\}|I|\sqrt{d} - \frac{\eps dk}{4}\right) \le 2^{z/2}\left(\frac{kd}{4} -\max\{1, z/2\}\sqrt{d}|I|\right)(1-\eps) \\
    &\le \sum_{j\in I}\sum_{p\in S_j}w(p)(\|\Tilde{p}\|^2+1 +\Delta^2_p )^{z/2} -  \min\{1,z/2\}\sum_{j\in I}\frac{2\sum_{p\in S_j} w(p)(\|\Tilde{p}\|^2+1 +\Delta^2_p)^{z/2-1} \|\Tilde{p}\|}{\sqrt{|S_j|}}+\kappa,
\end{align*}
which implies
\begin{align}
    &\min\{1,z/2\}\sum_{j\in I}\frac{2\sum_{p\in S_j} w(p)(\|\Tilde{p}\|^2+1 +\Delta^2_p)^{z/2-1} \|\Tilde{p}\|}{\sqrt{|S_j|}}\nonumber\\
    &\le \sum_{j\in I}\sum_{p\in S_j}w(p)(\|\Tilde{p}\|^2+1 +\Delta^2_p )^{z/2} - 2^{z/2}\left(\frac{dk}{4}-\max\{1, z/2\}|I|\sqrt{d} - \frac{\eps dk}{4}\right) + \kappa
    \nonumber\\
    &\le  \frac{kd}{4}\cdot 2^{z/2} + \frac{\eps kd}{4}2^{z/2}  - 2^{z/2}\left(\frac{dk}{4}-\max\{1, z/2\}|I|\sqrt{d} - \frac{\eps dk}{4}\right) \quad\textnormal{by \eqref{eqn:weight-constraints-z}} \nonumber\\
    &= \max\{1, z/2\} |I|\sqrt{d}2^{z/2} +\frac{\eps kd}{2}2^{z/2}. \nonumber
\end{align}
By definition, $|S_j| \le d/t^2$, so
\begin{align*}
      & \min\{1,\frac{z}{2}\} \sum_{j\in I} \frac{2\sum_{p\in S_j} w(p)(\|\Tilde{p}\|^2+1 +\Delta^2_p)^{z/2-1} \|\Tilde{p}\|}{\sqrt{d/t^2}}\\
      \le &\min\{1,\frac{z}{2}\}\sum_{j\in I}\frac{2\sum_{p\in S_j} w(p)(\|\Tilde{p}\|^2+1 +\Delta^2_p)^{z/2-1} \|\Tilde{p}\|}{\sqrt{|S_j|}},
\end{align*}
and it follows that
\begin{align}\label{eqn:size-constraint-z}
     \min\{1,\frac{z}{2}\} \sum_{j\in I} \frac{\sum_{p\in S_j} w(p)(\|\Tilde{p}\|^2+1 +\Delta^2_p)^{z/2-1} \|\Tilde{p}\|}{\sqrt{d}} 
     \le \frac{\max\{1, z/2\} |I|\sqrt{d} 2^{z/2} +\frac{\eps kd}{2}2^{z/2}} {2t}.
\end{align}

Finally we consider a third set of $k$ centers $C_3$. Similarly, there are two centers per group. We set $m$ be a power of $2$ in $[d/2,d]$. Let $h_1,\cdots,h_m$ be the $m$-dimensional Hadamard basis vectors. So all $h_{\ell}$'s are $\{-\frac{1}{\sqrt{m}},\frac{1}{\sqrt{m}}\}$ vectors and $h_1=(\frac{1}{\sqrt{m}},\cdots,\frac{1}{\sqrt{m}})$. We slightly abuse notation and treat each $h_{\ell}$ as a $d$-dimensional vector by concatenating zeros in the end. For each $h_{\ell}$ construct a set of $k$ centers as follows. For each $j\in \Bar{I}$, we still use two copies of $c_j$.  For $j\in I$, the $0$th coordinate of the two centers is $jL$, then we concatenate $h_{\ell}$ and $-h_{\ell}$ respectively to the first and the second centers.

\begin{lemma}\label{lem:hadamard-cost-z}
     Suppose $C_3$ is constructed based on $h_{\ell}$. Then for all $\ell\in [m]$, we have 
    \begin{align*}
        \cost(P,C_3) \le 2^{z/2}\left( \frac{kd}{4} - \frac{d|I|}{2}\cdot \frac{\min\{1,z/2\}}{\sqrt{m}} \right), \text{ and } 
    \end{align*} 
    \begin{align*}
        \cost(S,C_3) &\ge \sum_{j\in I}\sum_{p\in S_j} w(p)(\|\Tilde{p}\|^2+1 + \Delta_p^2)^{\frac{z}{2}} \\
        -& 2\max\{1,\frac{z}{2}\} \sum_{j\in I}\sum_{p\in S_j} w(p)\langle \Tilde{p}, h^p_{\ell} \rangle (\|\Tilde{p}\|^2+1 + \Delta_p^2)^{\frac{z}{2}-1}+\kappa.
    \end{align*}
\end{lemma}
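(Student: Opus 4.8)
The plan is to mirror the proof of \Cref{lem:hadamard-cost} (the case $z=2$): decompose every cost as a sum over the affine subspaces $H_j$ (legitimate for $L$ large, exactly as argued in the proof of \Cref{thm:lowerbound} for $z=2$), and then absorb the extra nonlinearity created by the exponent $z/2$ into two one-variable power inequalities, which are the $z$-generalizations of the ones used in \citep{cohenaddad2022towards}.

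For $\cost(P,C_3)$: for $j\in\Bar{I}$ the two centers in $H_j$ are copies of $c_j$ with $\|jLe_0+e_i-c_j\|^2=2$, so $\cost(P_j,C_3)=\frac{d}{2}\,2^{z/2}$, as in \eqref{eqn:cost-of-c-z}. For $j\in I$ the two centers are $jLe_0\pm h_\ell$, and since $\max_{s=\pm1}\langle e_i, s h_\ell\rangle=\frac{1}{\sqrt m}$ we get $\cost(P_j,C_3)=\sum_{i=1}^{d/2}\min_{s=\pm1}\|e_i-s h_\ell\|^z=\frac{d}{2}\,2^{z/2}\bigl(1-\frac{1}{\sqrt m}\bigr)^{z/2}$. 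The elementary fact $\bigl(1-\frac{1}{\sqrt m}\bigr)^{z/2}\le 1-\frac{\min\{1,z/2\}}{\sqrt m}$ — Bernoulli's inequality $(1-x)^{z/2}\le 1-\frac z2 x$ when $z/2\le 1$, and $(1-\frac{1}{\sqrt m})^{z/2}\le 1-\frac{1}{\sqrt m}$ (monotonicity, since $1-\frac{1}{\sqrt m}\in[0,1]$) when $z/2\ge 1$ — then yields $\cost(P_j,C_3)\le 2^{z/2}\bigl(\frac d2-\frac d2\frac{\min\{1,z/2\}}{\sqrt m}\bigr)$ for $j\in I$; summing over $j\in I\cup\Bar I$ gives the claimed $2^{z/2}\bigl(\frac{kd}{4}-\frac{d|I|}{2}\frac{\min\{1,z/2\}}{\sqrt m}\bigr)$.

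For $\cost(S,C_3)$: for $j\in\Bar I$ the centers in $H_j$ are again the copies of $c_j$, so by \eqref{eqn:cost-of-c-to-S-z} $\sum_{j\in\Bar I}\cost(S_j,C_3)=\sum_{j\in\Bar I}\sum_{p\in S_j}w(p)\|p-c_j\|^z=\kappa$. For $j\in I$, writing $A_p:=\|\Tilde p\|^2+1+\Delta_p^2$ and $h^p_\ell:=s^p h_\ell$ with $s^p=\arg\max_{s=\pm1}\langle\Tilde p, s h_\ell\rangle$ (so $\langle\Tilde p, h^p_\ell\rangle\ge 0$), the contribution of $p\in S_j$ is $w(p)\bigl(\Delta_p^2+\|\Tilde p - h^p_\ell\|^2\bigr)^{z/2}=w(p)\bigl(A_p-2\langle\Tilde p, h^p_\ell\rangle\bigr)^{z/2}$, where $0\le 2\langle\Tilde p, h^p_\ell\rangle\le A_p$ because $\Delta_p^2+\|\Tilde p-h^p_\ell\|^2\ge 0$. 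The key inequality is $(A_p-t)^{z/2}\ge A_p^{z/2}-\max\{1,z/2\}\, t\, A_p^{z/2-1}$ for $t\in[0,A_p]$: for $z/2\ge 1$ it is the tangent-line bound at $t=0$ for the convex map $t\mapsto(A_p-t)^{z/2}$, and for $z/2\le 1$ it follows from $(1-u)^{z/2}\ge 1-u$ on $u=t/A_p\in[0,1]$ (concavity of $(1-u)^{z/2}$, i.e.\ the chord bound). Substituting $t=2\langle\Tilde p, h^p_\ell\rangle$, summing over $p\in S_j$ and $j\in I$, and adding $\kappa$ produces the stated lower bound on $\cost(S,C_3)$.

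The only nontrivial point is establishing these two power inequalities with the correct $\max\{1,z/2\}$ and $\min\{1,z/2\}$ constants uniformly over $z\ge 1$; everything else is the same subspace-wise bookkeeping as in the $z=2$ case. I also note that $\langle\Tilde p, h^p_\ell\rangle\ge 0$ keeps the linear-order terms (which are later summed against the Hadamard vectors) from changing sign, which is exactly what the subsequent Cauchy--Schwarz step needs; so I expect no genuinely new obstacle beyond carefully tracking these constants.
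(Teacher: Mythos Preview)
Your proposal is correct and follows essentially the same route as the paper's own proof: subspace-by-subspace decomposition, the same exact identities $\cost(P_j,C_3)=\frac{d}{2}\bigl(2-\tfrac{2}{\sqrt m}\bigr)^{z/2}$ for $j\in I$ and $\cost(S_j,C_3)=\sum_{p}w(p)\bigl(A_p-2\langle\Tilde p,h^p_\ell\rangle\bigr)^{z/2}$, and then the two power inequalities with constants $\min\{1,z/2\}$ and $\max\{1,z/2\}$. In fact you give more justification than the paper does for those two scalar inequalities (the paper simply asserts them), so there is nothing missing.
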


\begin{proof}
For $j\in I$, the cost of the two centers w.r.t.\ $P_j$ is
\begin{align*}
    \cost(P_j,C_3) &= \sum_{i=1}^{d/2} \min_{s=-1,+1}\|e_i - s\cdot h_{\ell}\|^z = \sum_{i=1}^{d/2} (2-2\max_{s=-1,+1}\langle h_{\ell},e_i\rangle)^{z/2}= \frac{d}{2}(2-\frac{2}{\sqrt{m}})^{z/2} \\
    &\le \frac{d}{2}\cdot 2^{z/2} \left(1-\frac{\min\{1,z/2\}}{\sqrt{m}}\right).
\end{align*}
For $j\in \Bar{I}$, the cost w.r.t.\ $P_j$ is $\frac{d}{2}\cdot 2^{z/2}$ by \eqref{eqn:cost-of-c-z}.
Thus, the total cost over all subspaces is 
\begin{align*}
\cost(P,C_3) &\le    \frac{d}{2}\cdot2^{z/2} \left(1-\frac{\min\{1,z/2\}}{\sqrt{m}}\right)|I| + \left(\frac{k}{2} -|I| \right)\frac{d}{2}\cdot 2^{z/2}\\ 
&=  2^{z/2}\left( \frac{kd}{4} - \frac{d|I|}{2}\cdot \frac{\min\{1,z/2\}}{\sqrt{m}} \right).
\end{align*} 

On the other hand, for $j\in I$, the cost w.r.t.\ $S_j$ is 
\begin{align*}
     &\sum_{p\in S_j} w(p)(\Delta_p^2+ \min_{s=\{-1,+1\}} \|\Tilde{p}-s\cdot h_{\ell}\|^2)^{z/2}\\
    =&\sum_{p\in S_j} w(p) (\|\Tilde{p}\|^2+1 + \Delta_p^2 - 2\max_{s=\{-1,+1\}}\langle \Tilde{p}, s\cdot h_{\ell} \rangle)^{z/2}\\
    =& \sum_{p\in S_j} w(p) (\|\Tilde{p}\|^2+1 + \Delta_p^2 - 2\langle \Tilde{p}, h^p_{\ell} \rangle)^{z/2}\\
   \ge& \sum_{p\in S_j} w(p)(\|\Tilde{p}\|^2+1 + \Delta_p^2)^{\frac{z}{2}} - 2\max\{1,\frac{z}{2}\} \sum_{p\in S_j} w(p)\langle \Tilde{p}, h^p_{\ell} \rangle (\|\Tilde{p}\|^2+1 + \Delta_p^2)^{\frac{z}{2}-1}.
\end{align*}
Here $h^p_{\ell} = s^p\cdot h_{\ell}$, where  $s^p=\arg\max_{s=\{-1,+1\}}\langle \Tilde{p}, s\cdot h_{\ell} \rangle$.
For $j\in \Bar{I}$, the total cost w.r.t.\ $S_j$ is $\kappa$. Thus, the total cost w.r.t.\ $S$ is
\begin{align*}
    \cost(S,C_3) &\ge \sum_{j\in I}\sum_{p\in S_j} w(p)(\|\Tilde{p}\|^2+1 + \Delta_p^2)^{\frac{z}{2}}\\
    -& 2\max\{1,\frac{z}{2}\} \sum_{j\in I}\sum_{p\in S_j} w(p)\langle \Tilde{p}, h^p_{\ell} \rangle (\|\Tilde{p}\|^2+1 + \Delta_p^2)^{\frac{z}{2}-1}+\kappa .
\end{align*}
This finishes the proof.
\end{proof}

\begin{corollary}
Let $S$ be a $\eps$-coreset of $P$, and $I =\{j: |S_j|\le d/4\}$. Then 
\begin{align*}
    2\max\{1,\frac{z}{2}\} \sum_{j\in I}\sum_{p\in S_j} w(p)(\|\Tilde{p}\|^2+1 + \Delta_p^2)^{\frac{z}{2}-1} \|\Tilde{p}\| \ge 2^{z/2}\cdot \left( \frac{d|I|}{2}\cdot\min\{1,z/2\} - \frac{\eps kd \sqrt{d}}{2}\right). \\
\end{align*}
\end{corollary}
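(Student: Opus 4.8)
The plan is to transcribe the proof of the corresponding corollary for $z=2$, now feeding in the general-$z$ ingredients \Cref{lem:hadamard-cost-z} and the weight constraint \eqref{eqn:weight-constraints-z}. Fix an index $\ell\in[m]$ and let $C_3$ be the center set built from $h_\ell$. Rearranging the lower bound on $\cost(S,C_3)$ in \Cref{lem:hadamard-cost-z} gives
\[
2\max\{1,\tfrac z2\}\sum_{j\in I}\sum_{p\in S_j}w(p)\langle\Tilde p,h^p_\ell\rangle(\|\Tilde p\|^2+1+\Delta_p^2)^{z/2-1}\ \ge\ \sum_{j\in I}\sum_{p\in S_j}w(p)(\|\Tilde p\|^2+1+\Delta_p^2)^{z/2}+\kappa-\cost(S,C_3).
\]
Since $S$ is an $\eps$-coreset, $\cost(S,C_3)\le(1+\eps)\cost(P,C_3)$; substituting the upper bound on $\cost(P,C_3)$ from \Cref{lem:hadamard-cost-z} and then the lower bound for $\sum_{j\in I}\sum_{p\in S_j}w(p)(\|\Tilde p\|^2+1+\Delta_p^2)^{z/2}+\kappa$ from \eqref{eqn:weight-constraints-z}, the two copies of $2^{z/2}\,\tfrac{kd}{4}$ cancel and, after discarding a nonnegative $O(\eps)$ term, I obtain for every $\ell\in[m]$
\[
2\max\{1,\tfrac z2\}\sum_{j\in I}\sum_{p\in S_j}w(p)\langle\Tilde p,h^p_\ell\rangle(\|\Tilde p\|^2+1+\Delta_p^2)^{z/2-1}\ \ge\ 2^{z/2}\Bigl(\tfrac{d|I|}{2}\cdot\tfrac{\min\{1,z/2\}}{\sqrt m}-\tfrac{\eps kd}{2}\Bigr).
\]

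Next I would sum this inequality over $\ell=1,\dots,m$. On the right this multiplies the first term by $m/\sqrt m=\sqrt m$ and the error term by $m$. On the left, the coefficient $w(p)(\|\Tilde p\|^2+1+\Delta_p^2)^{z/2-1}$ is nonnegative and does not depend on $\ell$, so I can pull it out and write $\sum_{\ell=1}^m\langle\Tilde p,h^p_\ell\rangle=\langle\Tilde p,\sum_{\ell=1}^m h^p_\ell\rangle$, then apply Cauchy--Schwarz: $\langle\Tilde p,\sum_\ell h^p_\ell\rangle\le\|\Tilde p\|\,\|\sum_\ell h^p_\ell\|=\sqrt m\,\|\Tilde p\|$, the last equality because $\{h_\ell\}$ is an orthonormal (Hadamard) basis and each $h^p_\ell$ is $\pm h_\ell$. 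Dividing through by $\sqrt m$ and bounding $\sqrt m\le\sqrt d$ in the error term (as $m\le d$) yields exactly the claimed inequality.

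I expect no genuine obstacle here: this is essentially a line-by-line copy of the $z=2$ corollary with the extra per-point weight $(\|\Tilde p\|^2+1+\Delta_p^2)^{z/2-1}$ carried along. The only points worth a moment's care are that $\langle\Tilde p,h^p_\ell\rangle\ge0$ by the choice $s^p=\arg\max_{s=\pm1}\langle\Tilde p,s\cdot h_\ell\rangle$, so that Cauchy--Schwarz is applied in the useful direction and multiplication by the (nonnegative) per-point weight preserves the inequality; and that this weight is well defined and strictly positive even for $z<2$, since the base $\|\Tilde p\|^2+1+\Delta_p^2\ge1$. All the substantive work is already packaged in \Cref{lem:hadamard-cost-z} and \eqref{eqn:weight-constraints-z}; once this corollary is in hand, combining it with \eqref{eqn:size-constraint-z} bounds $|I|$ and finishes \Cref{thm:lowerbound} for general $z$.
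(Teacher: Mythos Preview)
Your proposal is correct and follows essentially the same argument as the paper: rearrange the lower bound on $\cost(S,C_3)$ from \Cref{lem:hadamard-cost-z}, use $\cost(S,C_3)\le(1+\eps)\cost(P,C_3)$ together with the upper bound on $\cost(P,C_3)$ and the lower side of \eqref{eqn:weight-constraints-z}, then sum over $\ell$, apply Cauchy--Schwarz with $\|\sum_\ell h^p_\ell\|=\sqrt{m}$, and finish with $\sqrt m\le\sqrt d$. Your added remarks that $\langle\Tilde p,h^p_\ell\rangle\ge0$ and that the per-point weight $(\|\Tilde p\|^2+1+\Delta_p^2)^{z/2-1}\ge1$ are the right sanity checks and are implicit in the paper's calculation.
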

\begin{proof}
Since $S$ is an $\eps$-coreset, we have by \Cref{lem:hadamard-cost-z}
\begin{align*}
    &2\max\{1,\frac{z}{2}\} \sum_{j\in I}\sum_{p\in S_j} w(p)\langle \Tilde{p}, h^p_{\ell} \rangle (\|\Tilde{p}\|^2+1 + \Delta_p^2)^{\frac{z}{2}-1}\\
    & \ge \sum_{j\in I}\sum_{p\in S_j} w(p)(\|\Tilde{p}\|^2+1 + \Delta_p^2)^{\frac{z}{2}} + \kappa 
    -2^{z/2}\left( \frac{kd}{4} - \frac{d|I|}{2}\cdot \frac{\min\{1,z/2\}}{\sqrt{m}} \right)(1+\eps)\\
    &\ge \frac{kd}{4}\cdot 2^{z/2} - \frac{\eps kd}{4}\cdot 2^{z/2} - 2^{z/2}\left( \frac{kd}{4} - \frac{d|I|}{2}\cdot \frac{\min\{1,z/2\}}{\sqrt{m}} +\frac{\eps kd}{4} \right) \quad\textnormal{by \eqref{eqn:weight-constraints-z}}\\
    &= 2^{z/2}\cdot \frac{d|I|}{2}\cdot \frac{\min\{1,z/2\}}{\sqrt{m}} - \frac{\eps kd}{2}\cdot 2^{z/2}.
\end{align*}
Note that the above inequality holds for all $\ell\in[m]$, then 
$$ 2\max\{1,\frac{z}{2}\} \sum_{\ell=1}^m\sum_{j\in I}\sum_{p\in S_j} w(p)\langle \Tilde{p}, h^p_{\ell} \rangle (\|\Tilde{p}\|^2+1 + \Delta_p^2)^{\frac{z}{2}-1} \ge 2^{z/2}\cdot \left( \frac{d|I|\sqrt{m}}{2}\cdot\min\{1,z/2\} - \frac{\eps kd m}{2}\right).$$

By the Cauchy-Schwartz inequality, 
\begin{align*}
    \sum_{\ell=1}^m\sum_{j\in I}\sum_{p\in S_j} w(p)\langle \Tilde{p}, h^p_{\ell} \rangle (\|\Tilde{p}\|^2+1 + \Delta_p^2)^{\frac{z}{2}-1} &=  \sum_{j\in I}\sum_{p\in S_j} w(p)\langle \Tilde{p}, \sum_{\ell=1}^m h^p_{\ell} \rangle (\|\Tilde{p}\|^2+1 + \Delta_p^2)^{\frac{z}{2}-1}  \\
    &\le \sum_{j\in I}\sum_{p\in S_j} w(p)(\|\Tilde{p}\|^2+1 + \Delta_p^2)^{\frac{z}{2}-1} \|\Tilde{p}\|\cdot  \|\sum_{\ell=1}^m h^p_{\ell}\|   \\
    &= \sqrt{m}\sum_{j\in I}\sum_{p\in S_j} w(p)(\|\Tilde{p}\|^2+1 + \Delta_p^2)^{\frac{z}{2}-1} \|\Tilde{p}\|.
\end{align*}
Therefore, we have
\begin{align*}
    2\max\{1,\frac{z}{2}\} \sum_{j\in I}\sum_{p\in S_j} w(p)(\|\Tilde{p}\|^2+1 + \Delta_p^2)^{\frac{z}{2}-1} \|\Tilde{p}\| &\ge  2^{z/2}\cdot \left( \frac{d|I|}{2}\cdot\min\{1,z/2\} - \frac{\eps kd \sqrt{m}}{2}\right) \\
    &\ge 2^{z/2}\cdot \left( \frac{d|I|}{2}\cdot\min\{1,z/2\} - \frac{\eps kd \sqrt{d}}{2}\right).
\end{align*}
\end{proof}

Combining the above corollary with \eqref{eqn:size-constraint-z}, we have
\begin{align*}
    \frac{\min\{1,z/2\}}{2\max\{1,z/2\}}2^{z/2}\cdot \left( \frac{\sqrt{d}|I|}{2}\cdot\min\{1,z/2\} - \frac{\eps kd }{2}\right) \le \frac{\left( \max\{1, z/2\} |I|\sqrt{d} +\frac{\eps kd}{2}\right)2^{z/2}} {2t},
\end{align*}
which implies that
\begin{align*}
    \left(\frac{\min\{1,(z/2)^2\}}{4\max\{1,(z/2)\}}-\frac{\max\{1,z/2\}}{2t}\right)|I| \le  \frac{\min\{1,(z/2)\}\eps kd}{4\max\{1,(z/2)\}}+ \frac{\eps k\sqrt{d}}{4t}.
\end{align*}
So if we set $t = \frac{4\max\{1,(z/2)^2\}}{\min\{1,(z/2)^2\}}$, then
\begin{align*}
    \frac{\min\{1,(z/2)^2\}}{8\max\{1,(z/2)\}}|I| \le \frac{\min\{1,(z/2)\}\eps k\sqrt{d}}{2\max\{1,(z/2)\}} \implies |I| \le \frac{4\eps k\sqrt{d}}{\min\{1,z/2\}}.
\end{align*}

By the assumption $d\le \frac{\min\{1,(z/2)^2\}}{100\eps^2}$, it holds that $|I| \le \frac{2k}{5}$ or $|\Bar{I}|\ge \frac{k}{2} -\frac{2k}{5}=\frac{k}{10}$. Moreover, since $|S_j|>\frac{d}{t^2}$ for each $j\in \Bar{I}$, we have $|S|>\frac{d}{t^2}\cdot \frac{k}{5} = \frac{kd \min\{1,(z/2)^4\}}{\max\{1,(z/2)^4\}}$.

\end{document}